\setlist{nolistsep}
\newif\ifcode
\newtheorem{theorem}{Theorem}
\newtheorem{claim}[theorem]{Claim}
\newtheorem{corollary}[theorem]{Corollary}
\newtheorem{definition}{Definition}
\newtheorem{lemma}[theorem]{Lemma}
\newtheorem{observation}[theorem]{Observation}
\newcounter{linenumber}
\def\M{\ensuremath{\mathcal{M}}}
\def\X{\ensuremath{\mathcal{X}}}
\def\Nat{\ensuremath{\mathbb{N}}}
\def\TS{\mathit{TS}}
\def\shared{\mathit{shared}}
\def\exclusive{\mathit{exclusive}}
\newcommand{\RNum}[1]{\uppercase\expandafter{\romannumeral #1\relax}}
\newcommand{\true}{\mathit{true}}
\newcommand{\false}{\mathit{false}}
\newcommand{\remove}[1]{}
\newcommand{\Wset}{\textit{Wset}}
\newcommand{\Rset}{\textit{Rset}}
\newcommand{\Dset}{\textit{Dset}}
\newcommand{\txns}{\textit{txns}}
\newcommand{\Read}{\textit{read}}
\newcommand{\Write}{\textit{write}}
\newcommand{\TryC}{\textit{tryC}}
\newcommand{\TryA}{\textit{tryA}}
\newcommand{\ok}{\textit{ok}}
\newcommand{\ignore}[1]{}
\begin{document}
\bibliographystyle{abbrv}

\title{Inherent Limitations of Hybrid Transactional Memory}
\author{
Dan Alistarh$^1$~~~Justin Kopinsky$^4$~~~Petr Kuznetsov$^{2}$~~~Srivatsan Ravi$^3$~~~Nir Shavit$^4$$^,$$^5$ \\
$^1$\normalsize Microsoft Research, Cambridge\\
$^2$\normalsize T\'el\'ecom ParisTech\\
$^3$\normalsize TU Berlin \\
$^4$\normalsize Massachusetts Institute of Technology \\
$^5$\normalsize Tel Aviv University
}

\date{}
\maketitle
\thispagestyle{empty}
\begin{abstract}
Several Hybrid Transactional Memory (HyTM) schemes have recently been proposed to complement the fast,
but best-effort nature of Hardware Transactional Memory (HTM) with a slow, reliable software backup.
However, 
the costs of providing concurrency between hardware and software
transactions in HyTM are still not well understood.

In this paper, we propose a general model for HyTM implementations, 
which captures the ability of hardware transactions to buffer memory
accesses.
The model allows us to formally quantify and analyze
the amount of overhead (instrumentation) 
caused by the potential presence of software transactions.
We prove that (1) it is impossible to build a strictly serializable HyTM implementation that has both
uninstrumented reads and writes, even for very weak progress guarantees, and 
(2) the instrumentation cost incurred by a hardware transaction in any progressive
opaque HyTM may get linear in the transaction's data set. 
We further describe two implementations that,  for two different
progress conditions, exhibit 
optimal instrumentation costs. 
In sum, this paper captures for the first time an inherent
trade-off between the degree of hardware-software TM concurrency 
and the amount of
incurred instrumentation overhead. 
\end{abstract}

\newpage
\pagenumbering{arabic}\setcounter{page}{1}
\renewcommand{\paragraph}[1]{ \vspace{0.1cm} \noindent\textbf{#1}\hspace{0.1em}}
\linespread{0.96}
%
\section{Introduction}
\label{sec:intro}
\paragraph{Hybrid transactional memory.}
Ever since its introduction by Herlihy and Moss~\cite{HM93}, \emph{Transactional Memory (TM)} has promised to be an extremely useful tool,  
with the power to fundamentally change concurrent programming. 
It is therefore not surprising that the recently introduced Hardware Transactional Memory (HTM) implementations~\cite{Rei12, asf, bluegene} have been eagerly anticipated and scrutinized by the community. 

Early experience with programming HTM, e.g.~\cite{DiceLMN09, DragojevicMLM11, AlistarhEMMS14}, paints an interesting picture: 
if used carefully, HTM can be an extremely useful construct, 
and can significantly speed up and simplify concurrent implementations. 
At the same time, this powerful tool is not without its limitations: 
since HTMs are usually implemented on top of the cache coherence mechanism, 
hardware transactions have inherent \emph{capacity constraints} on the number of distinct memory locations 
that can be accessed inside a single transaction.  
Moreover, all current proposals are \emph{best-effort}, as they may abort under imprecisely specified conditions. 
In brief, the programmer should not solely rely on HTMs.

Several \emph{Hybrid Transactional Memory (HyTM)} schemes~\cite{hybridnorec,damronhytm, kumarhytm,phasedtm} have been
proposed to complement the fast, but best-effort nature of HTM 
with a slow, reliable software transactional memory (STM) backup. 
These proposals have explored a wide range of trade-offs between the
overhead on hardware transactions, concurrent execution of hardware and
software, and the provided progress guarantees. 

Early proposals for HyTM implementations~\cite{damronhytm, kumarhytm} 
shared some interesting features.
First, transactions that do not conflict are expected to run
concurrently, regardless of their types (software or hardware).
This property is referred to as \emph{progressiveness}~\cite{tm-theory}
and is believed to allow for increased parallelism.
Second, in addition to
exchanging the values of transactional objects, hardware transactions usually employ \emph{code instrumentation} techniques.
Intuitively, instrumentation is used by hardware transactions to
detect concurrency scenarios and abort in the
case of contention.
The number of instrumentation steps performed by these implementations within a hardware
transaction is usually proportional to the size of the transaction's data set. 

Recent work by Riegel \emph{et al.}~\cite{hynorecriegel} surveyed the various HyTM algorithms to date, focusing on techniques to reduce instrumentation overheads in the frequently executed hardware fast-path. 
However, it is not clear whether there are fundamental limitations when building a HyTM with non-trivial concurrency between
hardware and software transactions. In particular, what are the inherent instrumentation costs of building a HyTM, and what are the trade-offs between these
costs and the provided \emph{concurrency}, \emph{i.e.}, the ability of the HyTM
system to run software and hardware transactions in parallel?

\paragraph{Modelling HyTM.}
To address these questions, 
we propose the first model for hybrid TM systems which formally captures the notion of
\emph{cached} accesses provided by hardware transactions, and
precisely defines instrumentation costs in a quantifiable way.

We model a hardware transaction as a series of
memory accesses that operate on locally cached copies of the variables, followed by a \emph{cache-commit} operation.
In case a concurrent transaction performs a (read-write or write-write) conflicting access
to a cached object, the cached copy is invalidated and the hardware transaction aborts.   

Our model for instrumentation is motivated by recent experimental evidence 
which suggests that the overhead on hardware transactions imposed by code which 
detects concurrent software transactions is a significant performance bottleneck~\cite{MS13}. 
In particular, we say that a HyTM implementation imposes a logical partitioning of shared memory 
into \emph{data} and \emph{metadata} locations. 
Intuitively, metadata is used by transactions to exchange information about contention and 
conflicts while data locations only store the \emph{values} of data items read and updated within transactions. 
We quantify instrumentation cost by measuring the number of accesses to \emph{metadata objects} which transactions perform. 

\paragraph{The cost of instrumentation.}
Once this general model is in place, we derive two lower bounds on the cost of implementing a HyTM.  
First, we show that some instrumentation is necessary in a HyTM implementation even if we only
intend to provide \emph{sequential} progress,
where a transaction is only guaranteed to commit if it runs in the
absence of concurrency. 

Second, we prove that any progressive HyTM implementation providing \emph{obstruction-free liveness} (every operation
running \emph{solo} returns some response) and  has
executions in which an arbitrarily long read-only hardware transaction running in the
absence of concurrency \emph{must} access a number of 
distinct metadata objects proportional to the size of its data set.
We match this lower bound with an HyTM
algorithm that, additionally, allows for uninstrumented writes and
\emph{invisible reads}.

\paragraph{Low-instrumentation HyTM.}
The high instrumentation costs of early HyTM designs, 
which we show to be inherent, stimulated more recent HyTM schemes~\cite{phasedtm,hybridnorec,hynorecriegel,MS13}  
to sacrifice progressiveness for \emph{constant} instrumentation cost (\emph{i.e.}, not
depending on the size of the transaction). 
In the past two years, Dalessandro \emph{et al.}~\cite{hybridnorec} and 
Riegel \emph{et al.}~\cite{hynorecriegel} have proposed HyTMs based on the efficient \emph{NOrec STM}~\cite{norec}. 
These HyTMs schemes do not guarantee any parallelism among transactions; only
sequential progress is ensured. 
Despite this, they are among the best-performing HyTMs to date due to
the limited instrumentation in hardware transactions. 

Starting from this observation, we provide a more precise upper bound
for \emph{low-instrumentation} HyTMs
by presenting a HyTM algorithm with invisible reads \emph{and} uninstrumented
hardware writes which guarantees that a hardware transaction accesses at most one metadata object in the course of its execution.
Software transactions in this implementation remain progressive, while hardware transactions are guaranteed to commit
only if they do not run concurrently with an updating software transaction (or exceed capacity).
Therefore, the cost of avoiding the linear lower bound for progressive implementations is that hardware
transactions may be aborted by non-conflicting software ones.

In sum, this paper captures for the first time an inherent
trade-off between the degree of concurrency between hardware and
software transactions provided a HyTM implementation
and the incurred amount of instrumentation overhead. 

\paragraph{Roadmap.} The rest of the paper is organized as follows. Section~\ref{sec:prel} introduces the basic TM model and definitions.
Section~\ref{sec:hytm} presents our model of HyTM implementations, and
Section~\ref{sec:ins}~formally defines instrumentation.
Sections~\ref{sec:main} proves the impossibility of implementing uninstrumented HyTMs, while
Section~\ref{sec:main2} establishes a linear tight bound on metadata accesses for progressive HyTMs.
Section~\ref{sec:main3} describes an algorithm that overcomes this linear cost by weakening progress.
Section~\ref{sec:rel} presents the related work and Section~\ref{sec:disc} concludes the paper.
The Appendix contains the pseudo-code of the algorithms presented in this paper and their proofs of correctness.
%
%


\section{Preliminaries}
\label{sec:prel}
\vspace{1mm}\noindent\textbf{Transactional Memory (TM).} 
A \emph{transaction} is a sequence of \emph{transactional operations}
(or \emph{t-operations}), reads and writes, performed on a set of \emph{transactional objects} 
(\emph{t-objects}). 
A transactional memory \emph{implementation} provides a set of
concurrent \emph{processes} with deterministic algorithms that implement reads and
writes on t-objects using  a set of \emph{base objects}.
%

More precisely, for each transaction $T_k$, a TM implementation must support the following t-operations: 
$\mathit{read}_k(X)$, where $X$ is a t-object, that returns a value in
a domain $V$
or a special value $A_k\notin V$ (\emph{abort}),
$\mathit{write}_k(X,v)$, for a value $v \in V$,
that returns $\mathit{ok}$ or $A_k$, and
$\mathit{tryC}_k$ that returns $C_k\notin V$ (\emph{commit}) or $A_k$.

\vspace{1mm}\noindent\textbf{Configurations and executions.} 
A \emph{configuration} of a TM implementation specifies the state of each base object and each process. 
In the \emph{initial} configuration, each base object has its initial value and each process is in its initial state. 
An \emph{event} (or \emph{step}) of a transaction invoked by some process is an invocation of a t-operation, 
a response of a t-operation, or an atomic \emph{primitive} operation applied to base object along with its response. 
An \emph{execution fragment} is a (finite or infinite) sequence of events $E = e_1,e_2,\dots$. 
An \emph{execution} of a TM implementation $\mathcal{M}$ is an
execution fragment where, informally, each event respects the
specification of base objects and the algorithms specified by $\mathcal{M}$.
In the next section, we define precisely how base objects should
behave in a hybrid model combining direct memory accesses with \emph{cached} accesses (hardware
transactions).

The \emph{read set} (resp., the \emph{write set}) of a transaction $T_k$ in an execution $E$,
denoted $\Rset_E(T_k)$ (and resp. $\Wset_E(T_k)$), is the set of t-objects that $T_k$ attempts to read (and resp. write) 
by issuing a t-read (and resp. t-write) invocation in $E$ (for brevity, we sometimes 
omit the subscript $E$ from the notation).
The \emph{data set} of $T_k$ is $\Dset(T_k)=\Rset(T_k)\cup\Wset(T_k)$.
$T_k$ is called \emph{read-only} if $\Wset(T_k)=\emptyset$; \emph{write-only} if $\Rset(T_k)=\emptyset$ and
\emph{updating} if $\Wset(T_k)\neq\emptyset$.
Note that we consider the conventional dynamic TM model: 
the data set of a transaction is not known apriori (\emph{i.e.}, at the start of the transaction)
and it is identifiable only by the set of t-objects the transaction has invoked a read or write in the given execution.

For any finite execution $E$ and execution fragment $E'$, $E\cdot E'$ denotes the concatenation of $E$ and $E'$
and we say that $E\cdot E'$ is an \emph{extension}
of $E$.
For every transaction identifier $k$,
$E|k$ denotes the subsequence of $E$ restricted to events of
transaction $T_k$.
If $E|k$ is non-empty,
we say that $T_k$ \emph{participates} in $E$,
and let $\txns(E)$ denote the set of transactions that participate in $E$.
Two executions $E$ and $E'$
are \emph{indistinguishable} to a set $\mathcal{T}$ of transactions, if
for each transaction $T_k \in \mathcal{T}$, $E|k=E'|k$.

\vspace{1mm}\noindent\textbf{Complete and incomplete transactions.}
A transaction $T_k\in \txns(E)$ is \emph{complete in $E$} if
$E|k$
ends with a response event.
The execution $E$ is \emph{complete} if all transactions in $\txns(E)$
are complete in $E$.
A transaction $T_k\in \txns(E)$ is \emph{t-complete} if $E|k$
ends with $A_k$ or $C_k$; otherwise, $T_k$ is \emph{t-incomplete}.
$T_k$ is \emph{committed} (resp.\ \emph{aborted}) in $E$
if the last event of $T_k$ is $C_k$ (resp.\ $A_k$).
The execution $E$ is \emph{t-complete} if all transactions in
$\txns(E)$ are t-complete.
A configuration $C$ after an execution $E$ is \emph{quiescent} (resp. \emph{t-quiescent}) if 
every transaction $T_k \in \ms{txns}(E)$ is complete (resp. t-complete) in $E$.

\vspace{1mm}\noindent\textbf{Contention.}
We assume that base objects are accessed with \emph{read-modify-write} (rmw) primitives~\cite{G05,Her91}. 
A rmw primitive $\langle g,h \rangle$ applied to a base object 
atomically updates the value of the object with a new value, which is
a function $g(v)$ of the old value $v$, and returns a response $h(v)$.
A rmw primitive event on a base object is \emph{trivial} if, in any configuration, its application
does not change the state of the object. 
Otherwise, it is called \emph{nontrivial}.

Events $e$ and $e'$ of an execution $E$  \emph{contend} on a base
object $b$ if they are both primitives on $b$ in $E$ and at least 
one of them is nontrivial.

In a configuration $C$ after an execution $E$, every incomplete transaction $T$  
has exactly one \emph{enabled} event in $C$, 
which is the next event $T$ will perform according to the TM implementation.

We say that a transaction $T$ is \emph{poised to apply an event $e$ after $E$} 
if $e$ is the next enabled event for $T$ in $E$.
We say that transactions $T$ and $T'$ \emph{concurrently contend on $b$ in $E$} 
if they are each poised to apply contending events on $b$ after $E$.

We say that an execution fragment $E$ is \emph{step contention-free} for t-operation $op_k$ if the events of $E|op_k$ 
are contiguous in $E$.
An execution fragment $E$ is \emph{step contention-free for $T_k$} if the events of $E|k$ are contiguous in $E$, and $E$ is \emph{step contention-free} if $E$ is step contention-free for all transactions that participate in $E$.

\vspace{1mm}\noindent\textbf{TM correctness.}
A \emph{history} $H$ \emph{exported} by an execution fragment $E$, denoted $H_E$, is the subsequence of $E$ consisting of only the 
invocation and response events of t-operations.
Two histories $H$ and $H'$ are \emph{equivalent} if $\txns(H) = \txns(H')$
and for every transaction $T_k \in \txns(H)$, $H|k=H'|k$.
We say that two execution fragments $E$ and $E'$ are \emph{similar} if $H$ and $H'$ are equivalent, where $H$ (and resp. $H'$)
is the history exported by $E$ (and resp. $E'$).
For any two transactions $T_k,T_m\in \txns(E)$, we say that $T_k$ \emph{precedes}
$T_m$ in the \emph{real-time order} of $E$ ($T_k \prec_E^{RT} T_m$)
if $T_k$ is t-complete in $E$ and
the last event of $T_k$ precedes the first event of $T_m$ in $E$.
If neither $T_k$ precedes $T_m$ nor $T_m$ precedes $T_k$ in real-time order,
then $T_k$ and $T_m$ are \emph{concurrent} in $E$.
An execution $E$ is \emph{sequential} if every invocation of
a t-operation is either the last event in $H$ or
is immediately followed by a matching response, where $H$ is the history exported by $E$.
An execution $E$ is \emph{t-sequential} if there are no concurrent
transactions in $E$.

We say that $\Read_k(X)$ is \emph{legal} in a t-sequential execution $E$ if it returns the
latest written value of $X$, and $E$ is \emph{legal}
if every $\Read_k(X)$ in $H$ that does not return $A_k$ is legal in $E$.
Informally, a history $H$ is \emph{opaque}
if there exists a legal t-sequential history $S$ equivalent to $H$ that respects the real-time order of transactions
in $H$~\cite{tm-book}.
A weaker condition called \emph{strict serializability} ensures opacity only with respect to committed transactions.
Formal definitions are delegated to Appendix~\ref{app:upper}.

\vspace{1mm}\noindent\textbf{TM-liveness.}
A liveness property  specifies the conditions under which a
t-operation must return.
A TM implementation provides \emph{wait-free (WF)} TM-liveness
if it ensures that every t-operation returns in a finite number of its steps. 
A weaker property of  \emph{obstruction-freedom (OF)} 
ensures that every operation running step contention-free returns in
a finite number of its own steps.
The weakest property we consider here is \emph{sequential} TM-liveness
that only guarantees that t-operations running in the absence of
concurrent transactions returns in a finite number of its steps.  
%

\section{Hybrid Transactional Memory (HyTM)}
\label{sec:hytm}

\vspace{1mm}\noindent\textbf{Direct accesses and cached accesses.}
We  now describe the operation of a \emph{Hybrid Transactional Memoryг
  (HyTM)} implementation.
%
In our model, every base object can be accessed with two kinds of
primitives, \emph{direct} and \emph{cached}.

In a direct access, the rmw primitive operates on the memory state:
the direct-access event atomically reads the value of the object in
the shared memory and, if necessary, modifies it.

In a cached access performed by a process $i$, the rmw primitive operates on the \emph{cached}
state recorded in process $i$'s \emph{tracking set} $\tau_i$. 
One can think of $\tau_i$ as the \emph{L1 cache} of process $i$.
A a \emph{hardware transaction}  is a series of cached rmw primitives performed on $\tau_i$ followed by
a \emph{cache-commit} primitive. 
 
More precisely, $\tau_i$ is a set of triples $(b, v, m)$ where $b$ is a base object identifier, $v$ is a value, 
and $m \in \{\shared, \exclusive\}$ is an access \emph{mode}. 
The triple $(b, v, m)$ is added to the tracking set when $i$ performs a cached
rmw access of $b$, where $m$ is set to $\exclusive$ if the access is
nontrivial, and to $\shared$ otherwise.  
We assume that there exists some constant $\TS$ (representing the size of the L1 cache)
such that the condition $|\tau_i| \leq \TS$ must always hold; this
condition will be enforced by our model.
A base object $b$ is \emph{present} in $\tau_i$ with mode $m$ if $\exists v, (b,v,m) \in \tau_i$.

A trivial (resp.\ nontrivial) 
cached primitive $\langle g,h \rangle$ applied to $b$ 
by process $i$ first checks the condition $|\tau_i|=\TS$ and if so, it
sets $\tau_i=\emptyset$ and immediately returns $\bot$ (we call this event a
\emph{capacity abort}). 
We assume that $\TS$ is large enough so that no transaction 
with data set of size $1$ can incur a capacity abort.
If the transaction does not incur a capacity abort, the process checks whether $b$ is present in exclusive
(resp.\ any) mode in $\tau_j$ 
for any $j\neq i$. If so, $\tau_i$ is set to $\emptyset$ and the
primitive returns $\bot$. 
Otherwise, the triple $(b, v, \shared)$ (resp. $(b, g(v), \exclusive)$)
is added to $\tau_i$,  where $v$ is the most recent cached value of $b$ in $\tau_i$
(in case $b$ was previously accessed by $i$ within the current
hardware transaction) or the value of $b$ in the current
memory configuration, and finally $h(v)$ is returned.

A tracking set can be \emph{invalidated} by a concurrent process: 
if, in a configuration $C$ where  $(b,v,\exclusive)\in\tau_i$
(resp.\ $(b,v,\shared)\in\tau_i)$,  a process $j\neq i$ applies any primitive 
(resp.\ any \emph{nontrivial} primitive) to $b$, then $\tau_i$ becomes
\emph{invalid} and any subsequent cached primitive invoked by $i$
sets $\tau_i$ to $\emptyset$ and returns $\bot$. We refer to this event as a \emph{tracking set abort}.

Finally, the \emph{cache-commit} primitive issued by process $i$ with
a valid $\tau_i$ does the following: for each base object $b$ such that $(b,v,\exclusive) \in \tau_i$, the value of $b$ in $C$ is updated to $v$. 
Finally, $\tau_i$ is set to $\emptyset$ and the primitive 
returns $\textit{commit}$. 

Note that HTM may also abort spuriously, or because of unsupported operations~\cite{Rei12}. 
The first cause can be modelled probabilistically in the above
framework, which would not however significantly affect our claims and proofs, except for a more cumbersome presentation. 
Also, our lower bounds are based exclusively on executions containing t-reads and t-writes. 
Therefore, in the following, we only consider contention and capacity aborts.  

\vspace{1mm}\noindent\textbf{Slow-path and fast-path transactions.}
In the following, we partition HyTM transactions into \emph{fast-path
  transactions} and \emph{slow-path transactions}.
Practically,  two separate algorithms (fast-path one and slow-path one) 
are provided for each t-operation. 

A slow-path transaction models a regular software transaction.
An event of a slow-path transaction is either an invocation or response of a t-operation, or
a  rmw primitive on a base object. 

A fast-path transaction essentially encapsulates a hardware transaction. 
An event of a fast-path transaction is either an invocation or response of a t-operation, 
a cached primitive on a base object, or a \emph{cache-commit}:
\textit{t-read} and \emph{t-write} are only allowed to contain cached
primitives, and \textit{tryC} consists of invoking \emph{cache-commit}.  
Furthermore, we assume that a fast-path transaction $T_k$ returns $A_k$
as soon an underlying cached primitive or \emph{cache-commit} returns $\bot$. 
Figure~\ref{fig:tracking-set} depicts such a scenario illustrating a tracking set abort: 
fast-path transaction $T_2$ executed by process $p_2$
accesses a base object $b$ in shared (and resp. exclusive) mode and it is added to its tracking set $\tau_2$. 
Immediately after the access of $b$ by $T_2$, a concurrent transaction $T_1$ applies a nontrivial primitive to $b$ 
(and resp. accesses $b$). Thus, the tracking of $p_2$ is invalidated and $T_2$ must be aborted in any extension of this execution.
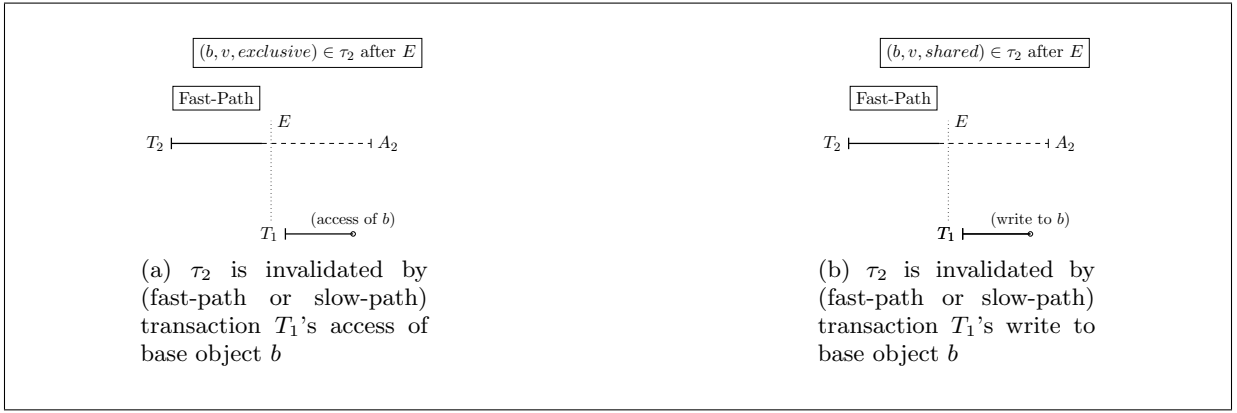
\begin{figure*}[t]
\begin{center}
	\subfloat[$\tau_2$ is invalidated by (fast-path or slow-path) transaction $T_1$'s access of base object $b$ \label{sfig:inv-1}]{\scalebox{0.6}[0.6]{\begin{tikzpicture}
\node (e) at (13,-2) [] {};

\node[draw,align=left] at (10,1) {Fast-Path};
\draw (e) node [above] {\small {(access of $b$)}};

\begin{scope}   
\draw [|-,thick] (9,0) node[left] {$T_2$} to (11,0);
\draw [-|,dashed] (11,0)  to (13.4,0) node[right] {$A_2$} ;
\draw [|-,thick] (11.5,-2) node[left] {$T_1$} to (13,-2);

\draw [-,dotted] (11.2,-1.7)  to (11.2,0.5) node[right] {$E$};
\end{scope}
\draw  (13,-2) circle [fill, radius=0.05]  (13,-2);

\node[draw,align=right] at (12,2) { $(b,v,exclusive) \in \tau_2$ after $E$};

\end{tikzpicture}}}
        \hspace{50mm}
	\subfloat[$\tau_2$ is invalidated by (fast-path or slow-path) transaction $T_1$'s write to base object $b$ \label{sfig:inv-2}]{\scalebox{0.6}[0.6]{\begin{tikzpicture}
\node (e) at (13+6,-2) [] {};

\node[draw,align=left] at (10+6,1) {Fast-Path};
\draw (e) node [above] {\small {(write to $b$)}};

\begin{scope}   
\draw [|-,thick] (9+6,0) node[left] {$T_2$} to (11+6,0);
\draw [-|,dashed] (11+6,0)  to (13.4+6,0) node[right] {$A_2$} ;
\draw [|-,thick] (11.5+6,-2) node[left] {$T_1$} to (13+6,-2);
\draw [|-,thick] (11.5+6,-2) node[left] {$T_1$} to (13+6,-2);

\draw [-,dotted] (11.2+6,-1.7)  to (11.2+6,0.5) node[right] {$E$};
\end{scope}
\draw  (13+6,-2) circle [fill, radius=0.05]  (13+6,-2);

\node[draw,align=right] at (10+8,2) { $(b,v,shared) \in \tau_2$ after $E$};

\end{tikzpicture}}}
	
\end{center}

\caption{Tracking set aborts in fast-path transactions;
we denote a fast-path (and resp. slow-path) transaction by $F$ (and resp. $S$)
\label{fig:tracking-set}} 
\end{figure*}
%

We provide two key observations on this model regarding the interactions of non-committed fast path transactions 
with other transactions. 
Let $E$ be any execution of a HyTM implementation $\mathcal{M}$ in
which a fast-path transaction $T_k$ is either
t-incomplete or aborted. 
Then the sequence of events $E'$ derived by removing all events of $E|k$
from $E$ is an execution  $\mathcal{M}$. Moreover: 
%
\begin{observation} 
\label{ob:one}
To every slow-path transaction $T_m \in \ms{txns}(E)$, $E$ is indistinguishable 
from $E'$. 
\end{observation}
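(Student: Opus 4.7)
The plan is to exploit the one physical fact that makes the observation true: a fast-path transaction's cached primitives only modify its own tracking set $\tau_k$ and never alter the shared memory state until a successful \emph{cache-commit} is issued. Since $T_k$ is either t-incomplete (so \emph{cache-commit} has not been invoked) or aborted (so $\tau_k$ was reset to $\emptyset$ before any commit could take effect), no event of $T_k$ ever writes to a base object in shared memory. From this, the indistinguishability claim for slow-path transactions is really a causality statement: $T_k$ leaves no trace that a slow-path transaction can observe.

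Concretely, I would set up a step-by-step correspondence between $E$ and $E'$. Write $E = e_1 e_2 \cdots e_\ell$ and let $C_i$ (resp.\ $C'_i$) be the configuration reached after the $i$-th event of $E$ (resp.\ the corresponding prefix of $E'$, skipping $T_k$'s events). I would then prove by induction on $i$ the invariant that $C_i$ and $C'_i$ agree on (i)~the value of every base object in shared memory and (ii)~the local state of every process $j \ne k$, including $\tau_j$ whenever $j$ is executing a fast-path transaction.

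The inductive step handles three cases. If $e_i$ is an event of $T_k$, it is either a cached rmw on $\tau_k$, a capacity abort that resets $\tau_k$, or a tracking-set abort; by the HyTM model none of these modifies shared memory or the tracking set of any other process, so the invariant carries over vacuously when $e_i$ is dropped. If $e_i$ is an event of a slow-path transaction $T_m$, it is either a t-operation invocation/response determined solely by $T_m$'s local state, or a direct rmw whose response and effect on shared memory depend only on the current shared memory values and $T_m$'s local state; the inductive hypothesis guarantees these agree in $C_{i-1}$ and $C'_{i-1}$, so the same event is enabled in $C'_{i-1}$ and produces the same update. The case requiring the most attention is when $e_i$ belongs to some other fast-path transaction $T_{k'}$ with $k' \ne k$: I need to argue that the outcome of a cached primitive by $T_{k'}$ (including any tracking-set invalidation) is determined by shared memory plus $\tau_{k'}$ plus the tracking sets of other fast-path processes, all of which agree by the inductive hypothesis; crucially, $T_k$'s cached events never invalidate $\tau_{k'}$, since invalidation requires a nontrivial primitive to be applied to shared memory on a base object present in $\tau_{k'}$, and cached primitives of $T_k$ do not write to shared memory.

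Once the invariant is in place for every $i$, the conclusion follows immediately: for every slow-path $T_m \in \txns(E)$, the subsequence of events contributed by $T_m$ is identical in $E$ and in $E'$, so $E|m = E'|m$, which by definition means $E$ and $E'$ are indistinguishable to $T_m$. The main (mild) obstacle is a careful audit of the HyTM model to make sure no side channel — capacity counter, cross-process invalidation, or implicit memory read embedded in a cached rmw — lets an uncommitted fast-path transaction leak information into shared memory or into any slow-path process's state; the proof is essentially a bookkeeping exercise over the rules given for cached and direct primitives in Section~\ref{sec:hytm}.
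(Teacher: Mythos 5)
Your proposal is correct and follows essentially the same reasoning the paper relies on: the paper states this as an unproved observation that follows directly from the model semantics (cached primitives of an uncommitted or aborted fast-path transaction only touch its own tracking set and never write to shared memory), and your induction is just a careful formalization of that fact. One remark: since $E'$ is obtained from $E$ by deleting exactly the events of $E|k$, the equality $E|m = E'|m$ for every $m \ne k$ — and hence indistinguishability — is immediate from the definition; the real content your configuration-correspondence invariant buys is that $E'$ is itself an execution of $\mathcal{M}$, which is the assertion the paper makes in the sentence preceding the observation.
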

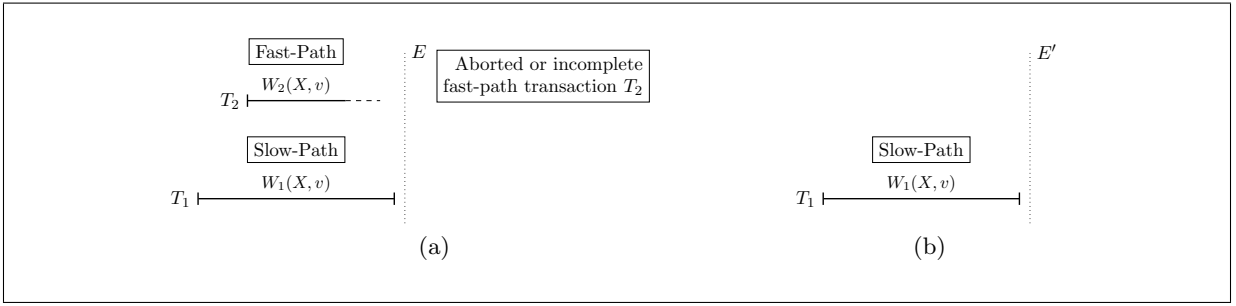
\begin{figure*}[t]
\begin{center}
	\subfloat[\label{sfig:ob-01}]{\scalebox{0.65}[0.65]{\begin{tikzpicture}
\node (w2) at (10,0) [] {};
\node (w1) at (10,-2) [] {};
\node (w3) at (18,-2) [] {};

\draw (w2) node [above] {\small {$W_2(X,v)$}};

\draw (w1) node [above] {\small {$W_1(X,v)$}};

\node[draw,align=left] at (10,1) {Fast-Path};
\node[draw,align=left] at (10,-1) {Slow-Path};

\begin{scope}   
\draw [|-,thick] (9,0) node[left] {$T_2$} to (11,0);
\draw [-,dashed] (11,0) to (11.7,0);
\draw [|-|,thick] (8,-2) node[left] {$T_1$} to (12,-2);
\draw [-,dotted] (12.2,-2.5)  to (12.2,1) node[right] {$E$};
\end{scope}
\node[draw,align=right] at (15,.5) {Aborted or incomplete\\ fast-path transaction $T_2$};

\end{tikzpicture}}}
	\hspace{10mm}
	\subfloat[\label{sfig:ob-02}]{\scalebox{0.65}[0.65]{\begin{tikzpicture}

\node (w1) at (11,-2) [] {};

\draw (w1) node [above] {\small {$W_1(X,v)$}};

\node[draw,align=left] at (11,-1) {Slow-Path};

\begin{scope}   
\draw [|-|,thick] (9,-2) node[left] {$T_1$} to (13,-2);
\draw [-,dotted] (13.2,-2.5)  to (13.2,1) node[right] {$E'$};
\end{scope}
\end{tikzpicture}}}
	 
\end{center}
\caption{
 \label{fig:ob1}
 Execution $E$ in Figure~\ref{sfig:ob-01} is indistinguishable
to $T_1$ from the execution $E'$ in Figure~\ref{sfig:ob-02}}
\end{figure*}
\begin{observation} 
\label{ob:two}
If a fast-path transaction $T_m\in \ms{txns}(E) \setminus \{T_k\}$ does not incur a tracking set abort in $E$, 
then $E$ is indistinguishable to $T_m$ from $E'$.
\end{observation}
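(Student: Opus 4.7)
The plan is to establish the observation by induction on the length of the prefix of $E$, maintaining the invariant that the values of all base objects in shared memory, the contents of $\tau_m$, and the validity status of $\tau_m$ coincide with those after the corresponding prefix of $E'$; from this invariant, $T_m$'s sub-execution is identical in both, yielding $E|m=E'|m$.

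The first step is to observe that $T_k$ never modifies any base object in shared memory. Because $T_k$ is a fast-path transaction that is either t-incomplete or aborted, every event of $T_k$ is either a cached primitive---which modifies only $\tau_k$---or an unsuccessful \emph{cache-commit}. Hence the shared-memory component of the state matches after corresponding prefixes of $E$ and $E'$ for free. For the inductive step concerning an event of $T_m$, the response of each cached primitive of $T_m$ is determined by (i) a capacity check on $|\tau_m|$, (ii) the conflict check against $\{\tau_j : j\neq m\}$, (iii) the current validity of $\tau_m$, and (iv) the value of the target object in $\tau_m$ or in shared memory. Items (i) and (iv) are immediate from the inductive hypothesis. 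For (iii), the assumption that $T_m$ does not incur a tracking set abort in $E$ rules out any primitive by $T_k$ having invalidated $\tau_m$, so removing $T_k$'s events cannot turn an invalid $\tau_m$ into a valid one. For (ii), if at the moment $T_m$ accessed some base object $b$ we had $b\in\tau_k$ in a mode conflicting with $T_m$'s access, then by the semantics of cached primitives $T_m$'s primitive would return $\bot$ and $\tau_m$ would be emptied---precisely the tracking-set-abort-style event excluded by the hypothesis; hence erasing $\tau_k$ from the conflict scan does not change the answer.

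The main obstacle I foresee is the indirect channel through which $\tau_k$ could perturb $T_m$: some third transaction $T_n\neq T_k,T_m$ whose behavior in $E$ depends on $\tau_k$ might modify shared memory or its own tracking set differently in $E'$, and $T_m$ could later observe the difference. To close this, I would rely on two facts. First, any successful cached primitive of $T_k$ must pass the conflict check against all other tracking sets, so $T_k$'s successful primitives never trigger an invalidation in any $\tau_j$; second, $T_k$ never writes to shared memory, as noted above. Thus the only residual way $\tau_k$ can affect $T_n$ is to spuriously force one of $T_n$'s cached primitives to return $\bot$, which empties $\tau_n$ and prevents $T_n$ from performing any memory-visible action in the remainder of $E$. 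Propagating this through the induction, I would argue that the portion of $T_n$'s execution that $T_m$ can observe---through shared-memory values or tracking-set contents---coincides in $E$ and $E'$. Once this case analysis is carried out, the induction closes and yields $E|m=E'|m$, completing the proof of the observation.
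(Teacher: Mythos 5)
The paper offers no proof of this observation: it is stated as a direct consequence of the model, justified only by the one-line remark that fast-path transactions that are not yet committed never write to shared memory and can affect other transactions only by causing their tracking-set aborts. Your induction formalizes exactly that remark and is correct --- the two facts you isolate ($T_k$ performs no successful cache-commit and hence leaves shared memory untouched, and $\tau_k$ can influence $T_m$ only through the conflict check or the invalidation rule, both excluded by the hypothesis that $T_m$ incurs no tracking-set abort) are precisely the paper's intended justification. The only delicate spot is your third-party case: a transaction $T_n$ whose cached primitive returns $\bot$ in $E$ solely because of $\tau_k$ would have that same $\bot$ response left unjustified in $E'$, which actually threatens the paper's separate, unconditional claim that $E'$ is an execution of $\mathcal{M}$ rather than the indistinguishability claim itself --- since $E'|m=E|m$ holds by construction and such a $T_n$ never performs a memory-visible action, $T_m$'s view is unaffected either way, and in all uses of the observation in the paper's lower bounds the fast-path fragments are step contention-free so this case never arises.
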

Intuitively, these observations say that fast-path transactions which are not yet committed are 
invisible to slow-path transactions, and can communicate with other
fast-path transactions only by incurring their tracking-set aborts.
Figure~\ref{fig:ob1} illustrates Observation~\ref{ob:one}: a fast-path transaction $T_2$ is concurrent to a slow-path transaction
$T_1$ in an execution $E$. Since $T_2$ is t-incomplete or aborted in this execution, $E$ is indistinguishable to $T_1$
from an execution $E'$ derived by removing all events of $T_2$ from $E$.
Analogously, to illustrate Observation~\ref{ob:two}, if $T_1$ is a fast-path transaction that does not incur a tracking set abort in $E$, then
$E$ is indistinguishable to $T_1$ from $E'$.
%
\section{Instrumentation}
\label{sec:ins}
Now we define the notion of \emph{code instrumentation} in fast-path transactions.
%

An execution $E$ of a HyTM $\mathcal{M}$ \emph{appears t-sequential}
to a transaction $T_{k} \in \ms{txns}(E)$ if there exists an execution
$E'$ of $\mathcal{M}$ such that:
\begin{itemize}
\item $\ms{txns}(E')\subseteq \ms{txns}(E)\setminus \{T_k\}$ and the configuration after $E'$ is t-quiescent,
\item every transaction $T_m\in \ms{txns}(E)$
that precedes $T_k$ in real-time order is included in $E'$ such that $E|m=E'|m$,
\item for every transaction $T_m\in \ms{txns}(E')$, 
$\Rset_{E'}(T_m)\subseteq \Rset_E(T_m)$ and $\Wset_{E'}(T_m)\subseteq \Wset_E(T_m)$, and
\item $E'\cdot E|k$ is an execution of $\mathcal{M}$.
\end{itemize}
%
%
%

\begin{definition}[Data and metadata base objects] 
\label{def:metadata}
Let $\mathcal{X}$ be the set of t-objects operated by a HyTM implementation $\mathcal{M}$. 
Now we partition the set of base objects used by $\mathcal{M}$ into a set $\mathbb{D}$ of \emph{data} objects and 
a set $\mathbb{M}$ of \emph{metadata} objects ($\mathbb{D}\cap \mathbb{M} = \emptyset$). We further partition
$\mathbb{D}$ into sets $\mathbb{D}_X$ associated with each t-object $X
\in \mathcal{X}$:  $\mathbb{D} = \bigcup\limits_{X\in\mathcal{X}} \mathbb{D}_X$,
for all $X\neq Y$ in $\X$, $\mathbb{D}_X \cap \mathbb{D}_Y =
\emptyset$,
such that:
%
\begin{enumerate}
\item In every execution $E$, each fast-path transaction $T_k \in \ms{txns}(E)$ only 
accesses base objects in $\bigcup\limits_{X\in DSet(T_k)}
\mathbb{D}_X$ or $\mathbb{M}$.
\item
Let $E\cdot\rho$ and $E\cdot E'\cdot\rho'$ be two t-complete
executions, such that $E$ and $E\cdot E'$ are t-complete, 
$\rho$ and $\rho'$ are complete executions of a transaction
$T_k\notin\ms{txns}(E\cdot E')$, $H_{\rho}=H_{\rho'}$, and $\forall
T_m\in \ms{txns}(E')$, $\Dset(T_m)\cap \Dset(T_k)=\emptyset$. 
Then the states of the base objects $\bigcup\limits_{X\in DSet(T_k)} \mathbb{D}_X$ 
in the configuration after $E\cdot \rho$ and $E\cdot E' \cdot {\rho'}$
are the same.    


\item 
Let execution $E$ appear t-sequential to a transaction $T_k$ and let
the enabled event $e$ of $T_k$ after $E$ be a primitive on a base
object $b\in \mathbb{D}$. Then, unless $e$ returns $\bot$, $E\cdot e$
also appears t-sequential to $T_k$. 
 
%
\end{enumerate}
\end{definition}
Intuitively, the first condition says that a transaction is only
allowed to access data objects based on its data set. The second
condition says that transactions with disjoint data sets can
communicate only via metadata objects.
Finally, the last condition means that base objects in
$\mathbb{D}$ may only contain the ``values'' of t-objects, and cannot be used to detect concurrent transactions. Note that our results will lower bound the number of metadata objects that must be accessed under particular assumptions, thus from a cost perspective, $\mathbb{D}$ should be made as large as possible.

All HyTM proposals we aware of, such as 
\emph{HybridNOrec}~\cite{hybridnorec,riegel-thesis}, \emph{PhTM}~\cite{phasedtm} and others~\cite{damronhytm, kumarhytm},
conform to our definition of instrumentation in fast-path transactions.
For instance, HybridNOrec~\cite{hybridnorec,riegel-thesis} employs a distinct base object in $\mathbb{D}$ for each
t-object and a global \emph{sequence lock} as the metadata that is accessed by fast-path transactions to detect
concurrency with slow-path transactions.
Similarly, the HyTM implementation by \emph{Damron et al.}~\cite{damronhytm} also 
associates a distinct base object in $\mathbb{D}$ for each
t-object and additionally, a \emph{transaction header} and \emph{ownership record} as metadata base objects.

%
%
\begin{definition}[Uninstrumented HyTMs]
\label{def:ins}
A HyTM implementation $\mathcal{M}$ provides \emph{uninstrumented writes (resp.\ reads)} 
if in every execution $E$ of $\mathcal{M}$, for every write-only (resp.\ read-only) 
fast-path
 transaction $T_k$, 
all primitives in $E|k$ are performed on base objects in $\mathbb{D}$.
A HyTM is uninstrumented if both its reads and writes are uninstrumented. 
\end{definition}
%
%
\begin{observation}
\label{ob:ins}
Consider any execution $E$ of a HyTM implementation $\mathcal{M}$ which provides uninstrumented reads (resp. writes). 
For any fast-path read-only (resp.\ write-only) transaction $T_k \not\in \ms{txns}(E)$, 
that runs step-contention free after $E$, 
the execution $E$ appears t-sequential to $T_k$.
\end{observation}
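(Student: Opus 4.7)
My plan is to construct an explicit witness execution $E'$ that satisfies the four clauses of the definition of ``appears t-sequential''. Without loss of generality I treat the uninstrumented-reads case; the write-only case is symmetric by duality. Then $T_k$ is a fast-path read-only transaction and, by Definition~\ref{def:ins}, every primitive it issues in any step-contention-free continuation of $E$ is on some base object in $\bigcup_{X\in\Dset(T_k)}\mathbb{D}_X$.

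The first step is to strip $E$ of ``invisible'' fast-path transactions. I iteratively apply Observations~\ref{ob:one} and~\ref{ob:two} to remove every non-committed fast-path transaction from $E$, obtaining a valid intermediate execution $E_1$: Observation~\ref{ob:one} guarantees that each removal preserves the events of every slow-path transaction, while Observation~\ref{ob:two} preserves the events of every fast-path transaction that does not suffer a tracking-set abort. Those that do suffer such an abort are themselves non-committed, so they are removed as well. Because a non-committed fast-path transaction never reaches \emph{cache-commit}, it performs no direct update to shared memory, so the shared base-object state after $E_1$ equals that after $E$.

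Next I produce $E'$ from $E_1$ by further removing every t-incomplete slow-path transaction. Since these transactions are not t-complete in $E$, they do not precede $T_k$ in real-time order, so the ``appears t-sequential'' definition does not require their inclusion in $E'$. To justify the removal while preserving the state of the base objects in $\bigcup_{X\in\Dset(T_k)}\mathbb{D}_X$, I invoke Condition~2 of Definition~\ref{def:metadata}: a t-incomplete slow-path transaction with data set disjoint from $\Dset(T_k)$ can be excised with no effect on these states; one whose data set overlaps $\Dset(T_k)$ is handled by a reordering argument that swaps it past data-disjoint transactions into a canonical position before truncation, again via Condition~2. The resulting $E'$ is t-quiescent, includes every t-complete transaction of $E$ with identical events, has read/write sets that are subsets of those in $E$, and since $T_k\notin\ms{txns}(E)$ we have $E'\cdot E|k = E'$, which is valid by construction --- verifying the remaining clause.

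The main obstacle will be the second step: rigorously handling t-incomplete slow-path transactions whose data sets overlap $\Dset(T_k)$, since such transactions may have already performed direct accesses that affect the shared state of $T_k$'s data objects. I expect this to require a careful inductive reorganization of $E_1$ --- pushing each overlapping t-incomplete transaction to the end via repeated swaps with data-disjoint transactions licensed by Condition~2 of Definition~\ref{def:metadata}, and then truncating. Once this observation is established, subsequent results can extend t-sequential appearance through each event of $T_k$'s uninstrumented step-contention-free continuation by inductive application of Condition~3 of Definition~\ref{def:metadata}.
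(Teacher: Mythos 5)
The paper states Observation~\ref{ob:ins} without proof: it is meant to be immediate from the definitions. Since $T_k$ is fast-path and read-only (resp.\ write-only) and $\mathcal{M}$ has uninstrumented reads (resp.\ writes), Condition~1 of Definition~\ref{def:metadata} together with Definition~\ref{def:ins} forces every primitive of $T_k$ to be applied to a base object in $\bigcup_{X\in\Dset(T_k)}\mathbb{D}_X\subseteq\mathbb{D}$; the base case (before $T_k$ takes any step) is vacuous because $T_k\notin\ms{txns}(E)$ empties clauses~2 and~4 of the ``appears t-sequential'' definition, and Condition~3 of Definition~\ref{def:metadata} is then invoked, as an \emph{assumed} property of the data/metadata partition, to carry the property through each of $T_k$'s events. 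You correctly identify this last step at the end of your proposal, but you also undertake a much heavier explicit construction of the witness $E'$ by deleting transactions from $E$, and that construction has a genuine gap.

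The problem is your second step, exactly where you anticipate difficulty. Condition~2 of Definition~\ref{def:metadata} compares two \emph{t-complete} executions $E\cdot\rho$ and $E\cdot E'\cdot\rho'$ in which every transaction of $E'$ has a data set disjoint from $\Dset(T_k)$, and it constrains only the final states of the objects in $\bigcup_{X\in\Dset(T_k)}\mathbb{D}_X$. It does not license excising a \emph{t-incomplete} slow-path transaction from the middle of an execution, nor commuting its events past those of other transactions. A t-incomplete slow-path transaction may have applied nontrivial direct primitives --- in particular to metadata objects such as locks or global counters that many transactions touch --- and those primitives may have been observed by committed transactions. Clause~2 of the ``appears t-sequential'' definition obliges you to keep every such committed transaction's events verbatim, so after your deletion or reordering the remaining sequence is in general not an execution of $\mathcal{M}$ and fails $E|m=E'|m$. (Observations~\ref{ob:one} and~\ref{ob:two} give removability precisely for non-committed \emph{fast-path} transactions; no analogous property holds for slow-path ones, and nothing in Definition~\ref{def:metadata} supplies it.) The proposed ``canonical position'' swap is asserted rather than proved and fails for the same reason. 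Note finally that once you read clause~4 as $E'\cdot E|k=E'$ (legitimate, since $E|k$ is empty), even the empty execution witnesses the observation, so the deletion construction is simultaneously unsound in its key step and unnecessary for the statement as literally written; the substantive content lives entirely in the inductive application of Condition~3, which both you and the paper defer to.
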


\fi
%
\begin{figure*}[t]
\begin{center}
	\subfloat[$T_y$ must return the new value\label{sfig:inv-0}]{\scalebox{0.65}[0.65]{\begin{tikzpicture}
\node (r1) at (1,0) [] {};
\node (w1) at (3,0) [] {};
\node (c1) at (5,0) [] {};

\node (e) at (7.5,0) [] {};

\node (r3) at (10,0) [] {};

\draw (r1) node [above] {\small {$R_0(Z)\rightarrow v$}};
\draw (w1) node [above] {\small {$W_0(X,nv)$}};
\draw (c1) node [above] {\small {$\TryC_0$}};

\draw (e) node [above] {\tiny {(event of $T_0$)}};
\draw (e) node [below] {\small {$e$}};

\draw (r3) node [above] {\small {$R_y(Y)\rightarrow nv$}};
\draw (r3) node [below] {\tiny {returns new value}};

\node[draw,align=left] at (2.5,1) {S};
\node[draw,align=left] at (10,1) {F};
\begin{scope}   
\draw [|-|,thick] (0,0) node[left] {$T_0$} to (2,0);
\draw [-|,thick] (2,0) node[left] {} to (4,0);
\draw [-,thick] (4,0) node[left] {} to (5.5,0);
\draw  (7.5,0) circle [fill, radius=0.05]  (7.5,0);
\draw [|-|,thick] (9,0) node[left] {$T_y$} to (11,0);
\end{scope}
\end{tikzpicture}}}
	\\
	\subfloat[Since $T_z$ is uninstrumented, by Observation~\ref{ob:ins} and sequential TM-progress, $T_z$ must commit\label{sfig:inv-1}]{\scalebox{0.65}[0.65]{\begin{tikzpicture}
\node (r1) at (1,0) [] {};
\node (w1) at (3,0) [] {};
\node (c1) at (5,0) [] {};

\node (w3) at (8.5,0) [] {};

\draw (r1) node [above] {\small {$R_0(Z)\rightarrow v$}};
\draw (w1) node [above] {\small {$W_0(X,nv)$}};
\draw (c1) node [above] {\small {$\TryC_0$}};

\draw (w3) node [above] {\small {$W_z(Z,nv)$}};
\draw (w3) node [below] {\tiny {write new value}};

\node[draw,align=left] at (2.5,1) {S};
\node[draw,align=left] at (8.5,1) {F};

\begin{scope}   
\draw [|-|,thick] (0,0) node[left] {$T_0$} to (2,0);
\draw [-|,thick] (2,0) node[left] {} to (4,0);
\draw [-,thick] (4,0) node[left] {} to (5.5,0);
\draw [|-|,thick] (7.5,0) node[left] {$T_z$} to (9.5,0);
\end{scope}
\end{tikzpicture}}}
        \\
        \vspace{2mm}
	\subfloat[Since $T_x$ does not access any metadata, by Observation~\ref{ob:ins}, it cannot abort and must 
	return the initial value value of $X$\label{sfig:inv-2}]{\scalebox{0.65}[0.65]{\begin{tikzpicture}
\node (r1) at (1,0) [] {};
\node (w1) at (3,0) [] {};
\node (c1) at (5,0) [] {};

\node (r2) at (12,0) [] {};

\node (w3) at (8.5,0) [] {};

\draw (r1) node [above] {\small {$R_0(Z)\rightarrow v$}};
\draw (w1) node [above] {\small {$W_0(X,nv)$}};
\draw (c1) node [above] {\small {$\TryC_0$}};

\draw (r2) node [above] {\small {$R_x(X)\rightarrow v$}};
\draw (r2) node [below] {\tiny {returns initial value}};

\draw (w3) node [above] {\small {$W_z(Z,nv)$}};
\draw (w3) node [below] {\tiny {write new value}};

\node[draw,align=left] at (2.5,1) {S};
\node[draw,align=left] at (8.5,1) {F};
\node[draw,align=left] at (12,1) {F};

\begin{scope}   
\draw [|-|,thick] (0,0) node[left] {$T_0$} to (2,0);
\draw [-|,thick] (2,0) node[left] {} to (4,0);
\draw [-,thick] (4,0) node[left] {} to (5.5,0);
\draw [|-|,thick] (7.5,0) node[left] {$T_z$} to (9.5,0);
\draw [|-|,thick] (11,0) node[left] {$T_x$} to (13,0);
\end{scope}
\end{tikzpicture}}}
	\\
	\vspace{2mm}
	\subfloat[$T_y$ does not contend with $T_x$ or $T_z$ on any base object \label{sfig:inv-3}]{\scalebox{0.65}[0.65]{\begin{tikzpicture}
\node (r1) at (1,0) [] {};
\node (w1) at (3,0) [] {};
\node (c1) at (5,0) [] {};

\node (r2) at (12,0) [] {};

\node (e) at (15,0) [] {};

\node (r3) at (18.5,0) [] {};

\node (w3) at (8.5,0) [] {};

\draw (r1) node [above] {\small {$R_0(Z)\rightarrow v$}};
\draw (w1) node [above] {\small {$W_0(X,nv)$}};
\draw (c1) node [above] {\small {$\TryC_0$}};

\draw (r2) node [above] {\small {$R_x(X)\rightarrow v$}};
\draw (r2) node [below] {\tiny {returns initial value}};

\draw (e) node [above] {\tiny {(event of $T_0$)}};
\draw (e) node [below] {\small {$e$}};

\draw (r3) node [above] {\small {$R_y(Y)\rightarrow nv$}};
\draw (r3) node [below] {\tiny {returns new value}};

\draw (w3) node [above] {\small {$W_z(Z,nv)$}};
\draw (w3) node [below] {\tiny {write new value}};

\node[draw,align=left] at (2.5,1) {S};
\node[draw,align=left] at (8.5,1) {F};
\node[draw,align=left] at (18.5,1) {F};
\node[draw,align=left] at (12,1) {F};

\begin{scope}   
\draw [|-|,thick] (0,0) node[left] {$T_0$} to (2,0);
\draw [-|,thick] (2,0) node[left] {} to (4,0);
\draw [-,thick] (4,0) node[left] {} to (5.5,0);
\draw [|-|,thick] (7.5,0) node[left] {$T_z$} to (9.5,0);
\draw [|-|,thick] (11,0) node[left] {$T_x$} to (13,0);
\draw  (15,0) circle [fill, radius=0.05]  (15,0);
\draw [|-|,thick] (17.5,0) node[left] {$T_y$} to (19.5,0);
\end{scope}
\end{tikzpicture}}}
	\caption{Executions in the proof of Theorem~\ref{instrumentation}; execution in \ref{sfig:inv-3} is not strictly serializable
          \label{fig:indis}} 
\end{center}
\end{figure*}
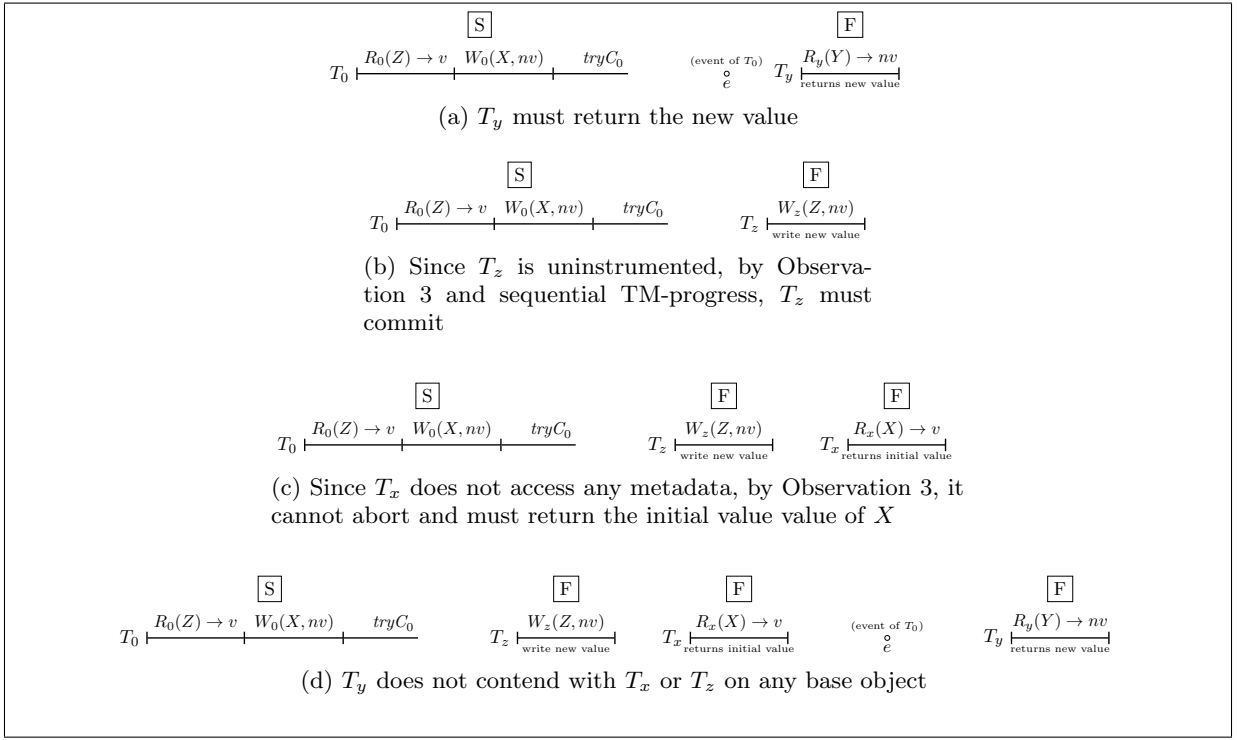
\section{Impossibility of uninstrumented HyTMs}
\label{sec:main}
In this section, we show that any strictly serializable HyTM must be
instrumented, even under a very weak progress assumption by which  
a transaction is guaranteed to commit only when run t-sequentially:

\begin{definition}[Sequential TM-progress]
\label{def:seqtmprogress}
A HyTM implementation $\mathcal{M}$ provides \emph{sequential TM-progress for fast-path transactions (and resp. slow-path)} if 
in every execution $E$ of $\mathcal{M}$, a fast-path 
(and resp. slow-path) 
transaction $T_k$ returns $A_k$ in $E$ only if 
$T_k$ incurs a capacity abort or $T_k$ is concurrent to another transaction. 
We say that $\mathcal{M}$ provides sequential TM-progress if it provides sequential TM-progress for fast-path and slow-path
transactions. 
\end{definition}

\begin{theorem}
\label{instrumentation}
There does not exist a strictly serializable uninstrumented HyTM
implementation 
that ensures sequential TM-progress and TM-liveness.
\end{theorem}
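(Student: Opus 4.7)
The plan is to derive a contradiction by exhibiting a finite execution whose committed subhistory cannot be legally serialized. Assume such a HyTM $\mathcal{M}$ exists. Introduce two t-objects $Y, Z$ with initial value $v$, and let $nv\neq v$ be a fresh value. Consider a slow-path updating transaction $T_0$ performing $\Read_0(Z)\to v$, $\Write_0(Y,nv)$, $\TryC_0$; by sequential TM-progress and TM-liveness, $T_0$ runs t-sequentially to completion in an execution $E_0$ and commits. Appending a fast-path read-only transaction $T_y$ reading $Y$ to $E_0$ yields a continuation in which $T_y$ runs t-sequentially after the already-committed $T_0$: sequential TM-progress forces $T_y$ to commit, and strict serializability forces it to return $nv$.

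Since $T_y$ is uninstrumented, by Definition~\ref{def:ins} it accesses only base objects in $\mathbb{D}_Y$, so its return value depends solely on the state of $\mathbb{D}_Y$. As the initial configuration would make $T_y$ return $v$, some event of $T_0$ in $E_0$ must modify a base object in $\mathbb{D}_Y$. Let $e$ be the \emph{last} such event in $E_0$ and decompose $E_0 = \alpha\cdot e\cdot\beta$; by choice, $\beta$ modifies no base object in $\mathbb{D}_Y$.

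Now construct the adversarial execution $F = \alpha\cdot T_z\cdot e\cdot T_y$, where $T_z$ is a fast-path write-only transaction writing $Z=nv$. Three non-interference claims are required: (i) $T_z$ commits, because it is uninstrumented and accesses only $\mathbb{D}_Z$, which is disjoint from $\mathbb{D}_Y$ by Definition~\ref{def:metadata}, so by Observation~\ref{ob:ins} the prefix $\alpha\cdot T_z$ appears t-sequential to $T_z$, whence sequential TM-progress and TM-liveness yield a commit; (ii) the event $e$ has the same effect in $F$ as in $E_0$, since $T_0$'s local state after $\alpha$ is unchanged by $T_z$, and the base object in $\mathbb{D}_Y$ on which $e$ acts is not touched by $T_z$; (iii) $T_y$ commits and returns $nv$, because the state of $\mathbb{D}_Y$ after $\alpha\cdot T_z\cdot e$ coincides with its state after $E_0$ (both $T_z$ and $\beta$ leave $\mathbb{D}_Y$ alone), so by Observation~\ref{ob:ins} the uninstrumented primitives of $T_y$ produce the same responses as in $E_0\cdot T_y$.

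The hard part is weaving these three claims together coherently: it rests on the disjointness of the data partitions in Definition~\ref{def:metadata}, the determinism of $T_0$'s continuation once its local state after $\alpha$ is fixed, and the ``appears t-sequential'' guarantee supplied by Observation~\ref{ob:ins} for uninstrumented fast-path probes. Concluding: in $F$ transaction $T_0$ is incomplete, so the committed subhistory contains only $T_z$ (writing $Z$) and $T_y$ (reading $Y=nv$). No committed transaction writes $Y$, so $T_y$'s returned value $nv\neq v$ cannot be legally justified by any t-sequential serialization of the committed subhistory, contradicting strict serializability.
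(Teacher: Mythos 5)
Your construction is close in spirit to the paper's, but the final step --- ``$T_0$ is incomplete in $F$, so the committed subhistory contains only $T_z$ and $T_y$, and $T_y$'s value $nv$ cannot be justified'' --- does not hold under the definition of strict serializability. A history is strictly serializable if \emph{some} completion of it admits a legal serialization, and a completion is permitted to turn a pending $\TryC_0$ into $C_0$. Nothing in your argument prevents the adversarial implementation from deferring all of its modifications of $\mathbb{D}_Y$ until $\TryC_0$; in that case your pivotal event $e$ (the last modification of $\mathbb{D}_Y$) lies inside $\TryC_0$, so in $F=\alpha\cdot T_z\cdot e\cdot T_y$ the transaction $T_0$ is commit-pending and may legitimately be completed as committed. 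The resulting history then \emph{is} strictly serializable: order $T_0$ (reads the initial value of $Z$, writes $Y$ to $nv$), then $T_z$ (writes $Z$), then $T_y$ (reads $nv$ from $T_0$). This is legal and respects the only real-time constraint ($T_z$ before $T_y$), so no contradiction arises. With two t-objects and these three transactions the cycle never closes.

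This is precisely why the paper's proof uses a third t-object $X$, has $T_0$ write to both $X$ and $Y$, and inserts an additional fast-path transaction $T_x$ that reads $X$ and returns the \emph{initial} value $v$, scheduled after $T_z$ but before $e$. Then $T_y$'s read of $nv$ forces $T_0$ to be committed and to precede $T_y$; $T_x$ must precede $T_0$ (it misses $T_0$'s write to $X$); $T_0$ must precede $T_z$ (it read the initial value of $Z$); and $T_z$ must precede $T_x$ in real time --- yielding the cycle $T_0 < T_z < T_x < T_0$. Your non-interference claims (i)--(iii) and the idea of isolating a pivotal event are essentially the right machinery (the paper pins the pivot down slightly differently, as the endpoint of the longest prefix of $T_0$'s execution after which no uninstrumented fast-path reader of $X$ or $Y$ can yet return $nv$), but without the extra reader $T_x$ anchoring $T_0$ from above, the argument cannot be completed.
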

\begin{proof}
Suppose by contradiction that such a HyTM $\M$ exists. 
For simplicity, assume that $v$ is the initial value of t-objects $X$, $Y$ and $Z$.
Let $E$ be the t-complete step contention-free execution of a slow-path transaction $T_0$
that performs $\Read_0(Z) \rightarrow v$,  $\Write_0(X, nv)$,
$\Write_0(Y, nv)$ ($nv\neq v$), and commits. 
Such an execution exists since $\M$ ensures sequential TM-progress. 

By Observation~\ref{ob:ins}, any transaction that 
runs step contention-free starting from a prefix of $E$ must return a
non-abort value.
Since any such transaction reading $X$ or $Y$ must return $v$ 
when it starts from the
empty prefix of $E$ and $nv$ when it starts from $E$.

Thus, there exists  $E'$, the longest prefix of $E$ that cannot be extended with the 
t-complete step contention-free execution of a \emph{fast-path} transaction 
reading $X$ or $Y$ and returning $nv$.
Let $e$ is the enabled event of $T_0$ in the configuration after $E'$.
Without loss of generality, suppose that there exists an execution
$E'\cdot e\cdot E_y$ where $E_y$ is the t-complete step contention-free
execution fragment of some fast-path transaction $T_y$ that reads $Y$ is returns $nv$ (Figure~\ref{sfig:inv-0}). 
%
%
\begin{claim}\label{claim:concat}
$\M$ has an execution $E' \cdot E_z \cdot  E_x$, where
\begin{itemize}
\item
$E_z$ is the t-complete step contention-free execution fragment of a fast-path transaction $T_{z}$ that 
writes $nv \neq v$ to $Z$ and commits
\item
$E_x$ is the t-complete step contention-free execution fragment of a fast-path transaction $T_x$ that performs
a single t-read $\Read_x(X) \rightarrow v$ and commits.
\end{itemize}
\end{claim}
\begin{proof}
%
By Observation~\ref{ob:ins}, the extension of $E'$ in which $T_z$ writes to
$Z$ and tries to commit appears t-sequential to $T_z$.
By sequential TM-progress, $T_z$ complets the write and commits. 
Let  $E' \cdot E_z$ (Figure~\ref{sfig:inv-1})
be the resulting execution of $\M$.

Similarly, the extension of $E'$ in which $T_x$ reads $X$ 
and tries to commit appears t-sequential to $T_x$.
By sequential TM-progress, $T_x$ commits and let  $E' \cdot E_x$
be the resulting execution of $\M$.
By the definition of $E'$, 
$\Read_x(X)$ must return $v$ in $E' \cdot E_x$.

Since $\M$ is uninstrumented and the data sets of $T_x$
and $T_z$ are disjoint, 
the sets of base objects accessed in the execution fragments $E_x$ and
$E_y$ are also disjoint.
Thus, $E' \cdot E_z \cdot E_x $
is indistinguishable to $T_x$ from the execution $E' \cdot E_x$, which
implies that $E' \cdot E_z \cdot E_x$ is an execution of $\M$ (Figure~\ref{sfig:inv-2}).
\end{proof} 
Finally, we prove that the sequence of events, ${E' \cdot E_z \cdot E_x \cdot e \cdot E_y}$ is an execution of $\M$.

Since the transactions $T_x$, $T_y$, $T_z$ have
pairwise disjoint data sets in ${E' \cdot E_z \cdot E_x \cdot e \cdot E_y}$,
no base object accessed ib $E_y$ can be accessed in $E_x$ and $E_z$.
The read operation on $X$ performed by $T_y$ in $E'\cdot e\cdot E_y$
returns $nv$ and, by the definition of $E'$ and $e$, $T_y$ must have accessed
the base object $b$ modified in the event $e$ by $T_0$.
Thus, $b$ is not accessed in $E_x$ and $E_z$ and 
$E' \cdot E_z \cdot E_x \cdot e$ is an execution of $\M$.
Summing up, $E' \cdot E_z \cdot E_x \cdot e \cdot E_y$
is indistinguishable to $T_y$ from
$E'  \cdot e \cdot E_y$, which implies that 
$E' \cdot E_z \cdot E_x \cdot e \cdot E_y$ is an execution of $\M$
(Figure~\ref{sfig:inv-3}).

But the resulting execution is not strictly serializable.
Indeed, suppose that a serialization exists.  
As the value written by $T_0$ is returned by a committed transaction
$T_y$, $T_0$ must be committed and precede $T_y$ in the serialization.
Since $T_x$ returns the initial value of $X$, $T_x$ must precede
$T_0$. 
Since $T_0$ reads the initial value of $Z$, $T_0$ must
precede $T_z$.
Finally, $T_z$ must precede $T_x$ to respect the real-time order. 
The cycle in the serialization establishes a contradiction.
\end{proof}
%
%
%
%
%
%
%
%

\section{Providing concurrency in HyTM}
\label{sec:main2}
In this section, we show that giving HyTM the ability to run and commit
transactions in parallel brings 
considerable 
instrumentation costs.   
We focus on a 
natural
progress condition called
progressiveness~\cite{GK08-opacity,GK09-progressiveness,tm-theory} that allows a
transaction to abort only if it experiences a read-write or write-write
conflict with a concurrent transaction: 
\begin{definition}[Progressiveness]
\label{def:prog}
Transactions $T_i$ and $T_j$ \emph{conflict} in an execution $E$ 
on a t-object $X$ if
$X\in\Dset(T_i)\cap\Dset(T_j)$ and $X\in\Wset(T_i)\cup\Wset(T_j)$.
A HyTM implementation $\M$ 
is \emph{fast-path} (resp. \emph{slow-path}) \emph{progressive} 
if in every execution $E$ of $\M$ and for every fast-path (and resp. slow-path) transaction 
$T_i $ that aborts in $E$, 
either $A_i$ is a capacity abort or  $T_i$ conflicts with some transaction $T_j$ that is concurrent to $T_i$ in $E$.  
We say $\M$ is \emph{progressive} if it is both fast-path and slow-path progressive.
%
\end{definition}
%
%
\subsection{A linear lower bound on instrumentation}
%
We show that for every opaque fast-path progressive HyTM that provides
obstruction-free TM-liveness, an arbitrarily long read-only
transaction might access a number of distinct
metadata base objects that is linear in the size of its read set or
experience a capacity abort.

The following auxiliary results will be crucial in proving our lower
bound.
We observe first that a fast path transaction in a progressive HyTM can 
contend on a base object only with a non-conflicting transaction.
%
\ignore{
\begin{lemma}
\label{lm:pgone}
%
Let $E$ be any execution of $\mathcal{M}$.
Then, for any two transactions $\{T_1, T_2\}\in \ms{txns}(E)$ that do not conflict in $E$, 
if at least one of $T_1$ or $T_2$ is a fast-path transaction, then
$T_1$ and $T_2$ do not concurrently contend on any base object in $E$.
\end{lemma}
\begin{proof}
Suppose by contradiction that there exists an execution $E$ of $\mathcal{M}$ and 
$\{T_1, T_2\}\in \ms{txns}(E)$,  such that $\Dset(T_1)\cap \Dset(T_2)=\emptyset$, $T_1$
is a fast-path transaction, and $T_1$, $T_2$ concurrently contend on some base object $b$ in $E$.
Let $e_1$ (and resp. $e_2$) be the enabled event of transaction $T_1$ (and resp. $T_2$) after $E$.
Let $p_1$ (and resp. $p_2$) be the process performing transaction $T_1$ (and resp. $T_2$).
At least one event of $e_1$ and $e_2$ must be nontrivial.  

Consider the execution $E\cdot e_1 \cdot e_2'$ where $e_2'$ is the
event of $p_2$ in which it applies the primitive of $e_2$ to the
configuration after $E\cdot e_1$. 
After $E\cdot e_1$, $b$ is contained in the tracking set of process
$p_1$.
If $b$ is contained in $\tau_1$ in the shared mode, then $e_2'$ is a
nontrivial primitive on $b$, which invalidates $\tau_1$ in $E\cdot
e_1\cdot e_2'$.
If  $b$ is contained in $\tau_1$ in the exclusive mode, then any
subsequent access of $b$ invalidates $\tau_1$ in  $E\cdot
e_1\cdot e_2'$.
In both cases,   $\tau_1$ is invalidated and $T_1$ incurs a tracking set abort.
Thus, transaction $T_1$ must return $A_1$ in any extension of $E\cdot
e_1\cdot e_2$---a contradiction
to the assumption that $\mathcal{M}$ is progressive.
\end{proof}
}
%
\begin{lemma}
\label{lm:pgtwo}
Let $\mathcal{M}$ be any fast-path progressive HyTM implementation.
Let $E\cdot E_1 \cdot E_2$ be an execution of $\mathcal{M}$ where
$E_1$ (and resp. $E_2$) is the step contention-free
execution fragment of transaction $T_1 \not\in \ms{txns}(E)$ (and
resp. $T_2 \not\in \ms{txns}(E)$),
$T_1$ (and resp. $T_2$) does not conflict with any transaction in $E\cdot E_1 \cdot E_2$, and
at least one of $T_1$ or $T_2$ is a fast-path transaction. 
Then, $T_1$ and $T_2$ do not contend on any base object in $E\cdot E_1 \cdot E_2$.
\end{lemma}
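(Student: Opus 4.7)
Suppose toward a contradiction that some event of $E_1$ contends with some event of $E_2$ on a base object $b$. The strategy is to expose a pivotal configuration in which $T_1$ and $T_2$ are simultaneously poised on contending events on $b$, and then invoke the HyTM tracking-set invalidation rules to force a fast-path transaction to abort, contradicting progressiveness. Choose $e_2$ to be the \emph{first} event of $E_2$ that contends with an event of $E_1$, and $e_1$ to be the \emph{first} event of $E_1$ that contends with $e_2$; by these minimality choices, neither $e_1$ nor $e_2$ can be a cache-commit (a cache-commit contending on $b$ would require an earlier nontrivial cached access to $b$ that would itself contend, violating minimality). Write $E_1 = E_1'\cdot e_1\cdot E_1''$ and $E_2 = E_2'\cdot e_2\cdot E_2''$, and let the pivot be $\tilde E = E\cdot E_1'\cdot E_2'$.

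\paragraph{Validity of the pivot.} The central technical step is to argue that $\tilde E$ is a valid execution of $\mathcal{M}$ and that, in its final configuration, $T_1$ is poised to apply $e_1$ and $T_2$ is poised to apply $e_2$. By the minimality of $e_2$, no event of $E_2'$ contends with any event of $E_1'$, so for each base object $b'$ accessed by $T_2$ in $E_2'$, either $T_2$'s access is nontrivial and $T_1$ does not touch $b'$ anywhere in $E_1'$, or $T_2$'s access is trivial and every access by $T_1$ to $b'$ in $E_1'$ is also trivial. In neither case do $T_1$'s events in $E_1'$ (including a possible cache-commit) modify the shared-memory value of $b'$ or leave $b'$ in $\tau_1$ in a mode that could trigger a cached-access abort or an invalidation during $T_2$'s access. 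A short induction along $E_2'$ then shows that each $T_2$-event returns the same response in $\tilde E$ as in $E\cdot E_2'$, and that $E_2'$ leaves $\tau_1$ unchanged from its state after $E\cdot E_1'$; hence both transactions are poised on their contending events on $b$ after $\tilde E$.

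\paragraph{Contradiction and main obstacle.} Let $T^*\in\{T_1,T_2\}$ be a fast-path transaction (one exists by hypothesis) and let $e^*$ be its contending event; denote the other pair by $(T^{**}, e^{**})$. I extend $\tilde E$ by applying $e^*$ first and then $e^{**}$. Since $e^*$ is a non-cache-commit cached primitive, after $e^*$ the object $b$ sits in $\tau_{T^*}$ in exclusive mode if $e^*$ is nontrivial and in shared mode otherwise (in which case $e^{**}$ must be nontrivial). By the HyTM rules, $e^{**}$ then either invalidates $\tau_{T^{**}}$ and returns $\bot$ (if $T^{**}$ is fast-path, since its cached-access check on $b$ finds an incompatible entry in $\tau_{T^*}$) or invalidates $\tau_{T^*}$ (if $T^{**}$ is slow-path, since its direct primitive on $b$ matches an incompatible entry of $\tau_{T^*}$). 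In either case some fast-path transaction $T^{\dagger}$ is left with an invalidated tracking set, so any subsequent cached step or cache-commit by $T^{\dagger}$ must return $\bot$ and force it to abort; as $T_1$ and $T_2$ do not conflict and are each other's only concurrent transactions in the constructed execution, this abort contradicts fast-path progressiveness. The hard part of the plan will be the validity argument for $\tilde E$: a careful case analysis on whether $T_1$ is fast-path or slow-path and on the access modes appearing in $E_1'$ is needed to verify that $E_1'$'s effects on shared memory and on $\tau_1$ do not perturb any of $T_2$'s events in $E_2'$, with the minimality of $e_2$ being what closes each case.
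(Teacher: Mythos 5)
Your proposal is correct and follows essentially the same strategy as the paper's proof: assume contention, truncate both fragments to prefixes $E_1'\cdot E_2'$ so that $T_1$ and $T_2$ are poised on contending events on $b$, then apply the two events in sequence and use the tracking-set invalidation rules to force a spurious abort of a fast-path transaction, contradicting progressiveness. The only notable deviations are cosmetic and in your favor: you pick the \emph{first} contending events on each side (the paper picks the \emph{last} nontrivial contended event of $T_1$ and the first access of its object in $E_2$), which makes the non-interference argument for the pivot $E\cdot E_1'\cdot E_2'$ slightly cleaner, and you explicitly rule out cache-commits as the contending events, a corner case the paper's proof does not address.
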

\begin{proof}
Suppose, by contradiction that $T_1$ or $T_2$ 
contend on the same base object in $E\cdot E_1\cdot E_2$.

If in $E_1$, $T_1$ performs a nontrivial event on a base object on which they contend, let $e_1$ be the last
event in $E_1$ in which $T_1$ performs such an event to some base object $b$ and $e_2$, the first event
in $E_2$ that accesses $b$.
Otherwise, $T_1$
only performs trivial events in $E_1$ to base objects on which it contends with $T_2$ in $E\cdot E_1\cdot E_2$:
let $e_2$ be the first event in $E_2$ in which $E_2$ performs a nontrivial event to some base object $b$
on which they contend and $e_1$, the last event of $E_1$ in $T_1$ that accesses $b$.

Let $E_1'$ (and resp. $E_2'$) be the longest prefix of $E_1$ (and resp. $E_2$) that does not include
$e_1$ (and resp. $e_2$).
Since before accessing $b$, the execution is step contention-free for $T_1$, $E \cdot
E_1'\cdot E_2'$ is an execution of $\mathcal{M}$.
By construction, $T_1$ and $T_2$ do not conflict in $E \cdot E_1'\cdot E_2'$.
Moreover, $E\cdot E_1 \cdot E_2'$ is indistinguishable to $T_2$ from $E\cdot E_1' \cdot E_2'$.
Hence, $T_1$ and
$T_2$ are poised to apply contending events $e_1$ and $e_2$ on $b$ in the execution
$\tilde E=E\cdot E_1' \cdot E_2'$.
Recall that at least one event of $e_1$ and $e_2$ must be nontrivial.  

Consider the execution $\tilde E\cdot e_1 \cdot e_2'$ where $e_2'$ is the
event of $p_2$ in which it applies the primitive of $e_2$ to the
configuration after $\tilde E \cdot e_1$. 
After $\tilde E\cdot e_1$, $b$ is contained in the tracking set of process
$p_1$.
If $b$ is contained in $\tau_1$ in the shared mode, then $e_2'$ is a
nontrivial primitive on $b$, which invalidates $\tau_1$ in $\tilde E\cdot
e_1\cdot e_2'$.
If  $b$ is contained in $\tau_1$ in the exclusive mode, then any
subsequent access of $b$ invalidates $\tau_1$ in  $\tilde E\cdot
e_1\cdot e_2'$.
In both cases,   $\tau_1$ is invalidated and $T_1$ incurs a tracking set abort.
Thus, transaction $T_1$ must return $A_1$ in any extension of $E\cdot
e_1\cdot e_2$---a contradiction
to the assumption that $\mathcal{M}$ is progressive.   
\end{proof}
Iterative application of Lemma~\ref{lm:pgtwo} implies the following:  
\begin{corollary}
\label{lm:pgthree}
Let $\mathcal{M}$ be any fast-path progressive HyTM implementation.
Let $E\cdot E_1 \cdots E_i \cdot E_{i+1} \cdots E_m$ be any execution of $\mathcal{M}$ where
$E_i$ is the step contention-free
execution fragment of transaction $T_i \not\in \ms{txns}(E)$,
for all $i\in \{1,\ldots , m\}$
and any two transactions in $E_1 \cdots E_m$ do not conflict.
For all $i,j=1,\ldots,m$, $i\neq j$, if $T_i$ is fast-path, then $T_i$
and $T_j$ do not contend on a
base object in  $E\cdot E_1 \cdots E_{m} \cdots E_m$   
\end{corollary}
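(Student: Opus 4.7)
The plan is to prove the corollary by induction on $m$, using a swap of non-contending step contention-free fragments as the main tool and invoking Lemma~\ref{lm:pgtwo} both as the base case and to close each inductive step.

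First I would establish an auxiliary \emph{swap property}: if $E \cdot F_a \cdot F_b$ is an execution of $\mathcal{M}$, where $F_a$ and $F_b$ are step contention-free fragments of distinct transactions $T_a$ and $T_b$ that do not contend on any base object in $E \cdot F_a \cdot F_b$, then $E \cdot F_b \cdot F_a$ is an execution of $\mathcal{M}$ indistinguishable from $E \cdot F_a \cdot F_b$ to both $T_a$ and $T_b$. This holds because non-contention means that any base object modified (nontrivially) by one of them is not accessed by the other, so the state each transaction observes at every step of its own fragment is the same under either ordering, and the deterministic algorithms therefore produce identical sequences of events.

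The base case $m = 2$ is exactly Lemma~\ref{lm:pgtwo}. For the inductive step with $m > 2$, fix $i \neq j$ with $T_i$ fast-path. If $\max(i,j) < m$, then $E \cdot E_1 \cdots E_{m-1}$ is itself a shorter execution meeting the corollary's hypotheses, and the inductive hypothesis gives the result directly. Otherwise, WLOG $j = m$. By the inductive hypothesis applied to $E \cdot E_1 \cdots E_{m-1}$, $T_i$ does not contend on any base object with any of $T_{i+1}, \ldots, T_{m-1}$. Iterating the swap property, I would move $E_i$ one position rightward at a time past each of $E_{i+1}, \ldots, E_{m-1}$, obtaining an indistinguishable execution of the form $E \cdot E_1 \cdots E_{i-1} \cdot E_{i+1} \cdots E_{m-1} \cdot E_i \cdot E_m$, in which $E_i$ and $E_m$ are adjacent. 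Since no two transactions in $E_1 \cdots E_m$ conflict by assumption, the hypotheses of Lemma~\ref{lm:pgtwo} are satisfied with the prefix $E \cdot E_1 \cdots E_{i-1} \cdot E_{i+1} \cdots E_{m-1}$ playing the role of the base execution and with $T_i$ being fast-path. The lemma then yields that $T_i$ and $T_m$ do not contend on any base object in the permuted execution, and the fact that $T_i$ executes the same events in both orderings (by indistinguishability) transfers this conclusion back to the original execution.

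The main obstacle is justifying the swap property carefully enough that the events of $T_a$ and $T_b$ do indeed coincide across the two orderings, and that the swapped sequence is a legal execution of $\mathcal{M}$. Concretely, one must argue that for every base object $b$ touched in $F_a$, the value of $b$ in the configuration reached after $E$ agrees with its value after $E \cdot F_b$ (and symmetrically), which follows from the definition of contention together with the fact that cached accesses in a fast-path fragment commit only at its end. Once this is in place, the inductive step is a mechanical repetition of Lemma~\ref{lm:pgtwo}.
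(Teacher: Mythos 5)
Your proposal follows essentially the same route as the paper: the paper's proof also observes that Lemma~\ref{lm:pgtwo} applies to adjacent fragments, concludes that $E_i$ therefore commutes with its neighbours, and iteratively swaps $E_i$ next to each $E_j$ so that Lemma~\ref{lm:pgtwo} can be reapplied; your explicit swap property and induction on $m$ just make that iteration more formal. One loose end: the step ``WLOG $j=m$'' is not free, because the hypothesis that $T_i$ is fast-path breaks the symmetry between $i$ and $j$; when the fast-path transaction is $T_m$ itself, the inductive hypothesis on $E\cdot E_1\cdots E_{m-1}$ says nothing about $T_m$, and you cannot in general commute a possibly slow-path $E_j$ rightward past other possibly slow-path fragments. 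The repair is the mirrored argument the paper uses: apply Lemma~\ref{lm:pgtwo} to the adjacent pair $(T_{m-1},T_m)$, swap $E_m$ leftward, apply the lemma to the new adjacent pair, and so on until $E_m$ sits next to $E_j$.
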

\begin{proof}
Let $T_i$ be a fast-path transaction.
By Lemma~\ref{lm:pgtwo}, in $E\cdot E_1 \cdots E_{i} \cdots E_m$, 
$T_i$ does not contend with $T_{i-1}$ (if $i>1$) or
$T_{i+1}$ (if $i<m$) on any base object
and, thus, $E_i$ commutes with $E_{i-1}$ and $E_{i+1}$.  
Thus, 
$E\cdot E_1 \cdots E_{i-2}\cdot E_{i}\cdot E_{i-1}\cdot E_{i+1}\cdots E_m$  (if $i>1$) and 
$E\cdot E_1 \cdots E_{i-1}\cdot E_{i+1}\cdot E_{i}\cdot E_{i+2}\cdots E_m$  (if $i<m$) are executions of
$\M$.
By iteratively applying Lemma~\ref{lm:pgtwo}, we derive that $T_i$
does not contend with any $T_j$, $j\neq i$.
\end{proof}
%
%
%
%
Recall that execution fragments $E$ and $E'$ are called similar if
they export equivalent histories, \emph{i.e.}, no process can see the
difference between them by looking at the invocations and responses of
t-operations.   
We now use Corollary~\ref{lm:pgthree} to show that t-operations
only accessing data base objects cannot detect contention with
non-conflicting transactions.
\begin{lemma}
\label{lm:finallm}
Let $E$ be any t-complete execution of 
a progressive  HyTM implementation 
$\M$ that provides OF TM-liveness.
For any $m\in \mathbb{N}$, consider a set of $m$ executions of $\M$ of the form
$E\cdot E_{i} \cdot \gamma_i \cdot \rho_i$ where
$E_{i}$ is the t-complete step contention-free execution fragment of
a transaction $T_{m+i}$,
$\gamma_i$ is a complete step contention-free execution fragment of
a \emph{fast-path} transaction $T_i$ such that
$\Dset(T_i)\cap \Dset(T_{m+i})=\emptyset$ in $E\cdot E_{i} \cdot \gamma_i$, and
$\rho_i$ is the execution fragment of a t-operation by $T_i$ 
that does not contain accesses to any metadata base object.
If, for all $i,j\in \{1,\ldots , m\}$, $i\neq j$,
$\Dset(T_i)\cap \Dset(T_{m+j})=\emptyset$,
$\Dset(T_i)\cap \Dset(T_{j})=\emptyset$ and
$\Dset(T_{m+i})\cap \Dset(T_{m+j})=\emptyset$, then
there exists a t-complete step 
contention-free execution fragment $E'$ that is similar to $E_{1}\cdots E_{m}$ such that 
for all $i\in \{1,\ldots, m\}$, $E\cdot E' \cdot \gamma_i\cdot  \rho_i$ is 
an execution of $\M$.
\end{lemma}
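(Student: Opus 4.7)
The plan is to construct $E'$ in two phases: first, assembling a valid concatenation of the fragments $E_1,\ldots,E_m$ after $E$; second, appending $\gamma_i\cdot\rho_i$ to this concatenation for each $i$ separately.

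In the first phase, I would proceed by induction on $k\le m$, maintaining that $E\cdot E_1\cdots E_k$ is an execution of $\M$ and that each $E_j$ ($j\le k$) is indistinguishable to $T_{m+j}$ from its original fragment after $E$. The inductive step rests on the pairwise disjointness $\Dset(T_{m+j})\cap\Dset(T_{m+k})=\emptyset$ combined with progressiveness: mutually non-conflicting transactions running step contention-free commute at the base-object level. When at least one of $T_{m+j},T_{m+k}$ is fast-path, non-contention follows directly from Lemma~\ref{lm:pgtwo} and Corollary~\ref{lm:pgthree}; when both are slow-path, I would invoke the parallel slow-path non-contention result, proved by the same swap-and-contradict technique but using slow-path progressiveness to force an abort in the swapped execution. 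Setting $E'=E_1\cdots E_m$, the resulting fragment is step contention-free and trivially similar to $E_1\cdots E_m$.

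In the second phase, fix $i\in\{1,\ldots,m\}$ and append $\gamma_i\cdot\rho_i$ to $E\cdot E'$. Since $T_i$ is fast-path and its data set is disjoint from that of every $T_j$ and every $T_{m+j}$ with $j\neq i$, Corollary~\ref{lm:pgthree} applied to $E\cdot E'\cdot\gamma_i$ yields that $T_i$ does not contend on any base object with any other transaction; hence inserting $\gamma_i$ after $E'$ is indistinguishable to $T_i$ from executing $\gamma_i$ after $E\cdot E_i$, and all responses within $\gamma_i$ are preserved. For $\rho_i$, the hypothesis that it accesses no metadata means, by the first condition of Definition~\ref{def:metadata}, that all its base-object accesses lie in $\bigcup_{X\in\Dset(T_i)}\mathbb{D}_X$; since no other transaction in $E'\cdot\gamma_i$ shares any data object with $T_i$, an invariant argument along the lines of the second condition of Definition~\ref{def:metadata} shows that these data objects are in identical state in the configurations reached by $E\cdot E_i\cdot\gamma_i$ and by $E\cdot E'\cdot\gamma_i$. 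Therefore $\rho_i$ yields the same responses in both contexts, and $E\cdot E'\cdot\gamma_i\cdot\rho_i$ is a valid execution indistinguishable to $T_i$ from $E\cdot E_i\cdot\gamma_i\cdot\rho_i$.

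The main obstacle I anticipate is Phase 1's slow-path case, since Lemma~\ref{lm:pgtwo} and its Corollary as stated only provide non-contention when at least one party is fast-path. The clean resolution is a companion lemma asserting that two non-conflicting slow-path transactions running step contention-free from a common prefix do not contend on base objects under slow-path progressiveness; its proof mirrors that of Lemma~\ref{lm:pgtwo}, with the conflict-based abort obligation replacing tracking-set invalidation. With this in hand, Corollary~\ref{lm:pgthree} generalizes to arbitrary mixes of fast- and slow-path transactions with pairwise disjoint data sets, and both phases of the lemma reduce to standard commutation arguments on base-object accesses.
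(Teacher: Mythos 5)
Your Phase~2 is essentially the paper's argument (commute $\gamma_i$ past the non-conflicting fragments via Corollary~\ref{lm:pgthree}, then use the data/metadata partition conditions of Definition~\ref{def:metadata} to transplant $\rho_i$), but Phase~1 contains a genuine gap. The ``companion lemma'' you propose --- that two non-conflicting \emph{slow-path} transactions running step contention-free from a common prefix cannot contend on a base object under slow-path progressiveness --- is false. The proof of Lemma~\ref{lm:pgtwo} hinges on the tracking-set abort mechanism: a contending primitive applied by another process invalidates the fast-path transaction's tracking set and \emph{forces} it to abort, contradicting progressiveness. No such forcing exists for slow-path transactions: contention on a base object does not oblige either party to abort, so ``swap-and-contradict'' has nothing to contradict. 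The paper's own Algorithm~\ref{alg:inswrite2} is a counterexample in spirit: every updating slow-path transaction applies fetch-and-add to the single shared object $\ms{fa}$, so two slow-path transactions with disjoint data sets contend there and both commit; a global version clock in any TL2-style slow path gives the same behaviour in a fully progressive HyTM. Consequently you cannot set $E'=E_1\cdots E_m$ by commuting the original fragments when the $T_{m+i}$ are slow-path --- which they are in the lemma's intended application, where they are the writing transactions $T_{s_i}$.

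The lemma only asks for an $E'$ that is \emph{similar} to $E_1\cdots E_m$ (equivalent exported histories), not equal to it, and that weakening is exactly what rescues the argument. The paper simply re-runs $T_{m+1},\ldots,T_{m+m}$ sequentially after $E$: progressiveness guarantees none aborts (their data sets are pairwise disjoint, so there are no conflicts), OF TM-liveness guarantees every t-operation returns, and the disjointness of data sets together with Definition~\ref{def:metadata} guarantees each transaction exports the same history as in $E\cdot E_i$, even though the resulting fragments $E'_i$ may differ from $E_i$ at the base-object level (e.g., because of benign contention on shared metadata). Your Phase~2 then goes through with $E'_1\cdots E'_m$ in place of $E_1\cdots E_m$, since Corollary~\ref{lm:pgthree} and the state argument for $\bigcup_{X\in\Dset(T_i)}\mathbb{D}_X$ only require that $T_i$ be fast-path and non-conflicting with the others, not that the slow-path fragments themselves commute.
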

\begin{proof}
Observe that any two transactions in the execution fragment
$E_{1}\cdots E_{m}$ access mutually disjoint data sets.
Since $\M$ is progressive and provides OF TM-liveness, there exists a t-sequential execution fragment $E'=E'_{1}\cdots E'_{m}$
such that, for all $i\in \{1,\ldots , m\}$, the execution fragments $E_{i}$ and $E'_{i}$ are similar and
$E\cdot E'$ is an execution of $\M$.
Corollary~\ref{lm:pgthree} implies that, for all for all $i\in \{1,\ldots , m\}$,
$\M$ has an execution of the form $E\cdot E'_{1}\cdots E'_{i}\cdots  E'_{m} \cdot \gamma_i$.
More specifically, 
$\M$ has an execution of the form
$E\cdot \gamma_i \cdot E'_{1} \cdots E'_{i}\cdots  E'_{m}$.
Recall that the execution fragment $\rho_i$ of fast-path transaction $T_i$ that extends $\gamma_i$ contains accesses only to
base objects in $\bigcup\limits_{X\in DSet(T_i)} \mathbb{D}_X$.
Moreover, for all $i,j\in \{1,\ldots , m\}$; $i\neq j$,
$\Dset(T_i)\cap \Dset(T_{m+j})=\emptyset$ and $\Dset(T_{m+i})\cap \Dset(T_{m+j})=\emptyset$.

It follows that
$\M$ has an execution of the form
$E\cdot \gamma_i \cdot E'_{1} \cdots E'_{i}\cdot \rho_i \cdot E'_{i+1} \cdots  E'_{m}$.
and the states of each of the base objects $\bigcup\limits_{X\in DSet(T_i)} \mathbb{D}_X$ accessed by $T_i$
in the configuration after 
$E\cdot \gamma_i \cdot E'_{1} \cdots E'_{i}$
and $E\cdot \gamma_i \cdot E_{i}$ are the same.
But $E\cdot \gamma_i \cdot E_{i}\cdot \rho_i$ is an execution of $\M$.
Thus, for all $i\in \{1,\ldots , m\}$, $\M$ has an execution of the form
$E\cdot E'\cdot \gamma_i \cdot \rho_i$.
\end{proof}
Finally, we are now ready to derive our lower bound.

Let $\kappa$ be the smallest integer such that some fast-path transaction
running step contention-free after a t-complete execution performs
$\kappa$ t-reads and incurs a capacity abort. In other words, if a
fast-path transaction reads less than $\kappa$ t-objects, it cannot
incur a capacity abort.  

We prove that, for all $m \leq \kappa-1$,
there exists a t-complete execution $E_{m}$ and a set $S_m$
($|S_m|=2^{\kappa-m}$) of read-only fast-path transactions such that
(1)~each transaction in $S_m$ reads $m$ t-objects,
(2)~the data sets of any two transactions in $S_m$ are disjoint, 
(3)~in the step contention-free execution of any transaction in $S_m$
extending $E_m$, every t-read accesses at least one distinct metadata
base object. 

By induction, we assume that the induction statement holds for
all  $m <\kappa -1$ (the base case $m=0$ is trivial) and prove that
$E_{m+1}$ and $S_{m+1}$ satisfying the condition above exist.
Pick any two transactions from the set $S_m$.
We construct $E_m'$, a t-complete extension of $E_m$ by the execution
of a slow-path transaction writing to two distinct t-objects $X$ and $Y$, such that 
the two picked transactions, running step contention-free after that,
cannot distinguish $E_m$ and $E_m'$. 
Now we let each of the transactions read one of the two 
t-objects $X$ and $Y$. We show that at least one of them must access a new metadata
base object in this $(m+1)^{\textit{th}}$ t-read (otherwise, the resulting execution
would not be opaque).     
By repeating this argument for each pair of transactions, 
we derive that there exists $E_{m+1}$, a t-complete extension of
$E_m$, such that at least half of the transaction in $S_m$ must
access a new distinct metadata base object in its
$(m+1)^{\textit{th}}$ t-read when it runs t-sequentially after $E_{m+1}$. 
Intuitively, we construct $E_{m+1}$ by
``gluing'' all these executions $E_m'$ together, which is possible
thanks to Lemma~\ref{lm:pgtwo}.
These transactions constitute $S_{m+1}\subset S_m$, $|S_{m+1}|=|S_m|/2=2^{\kappa-(m+1)}$.
\begin{theorem}
\label{linearlowerbound}
Let $\M$ be any progressive, opaque HyTM implementation that provides OF TM-liveness.
For every $m\in\Nat$, there
exists an execution $E$ in which some fast-path read-only
transaction $T_k\in \ms{txns}(E)$ satisfies either (1) $\Dset(T_k) \leq m$ and $T_k$ incurs a capacity abort in $E$ or 
(2) $\Dset(T_k) = m$ and $T_k$ accesses $\Omega(m)$ distinct metadata base objects in $E$.
\end{theorem}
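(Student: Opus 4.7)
The plan is to proceed by induction on the size of the read set, following the sketch given just before the theorem. First I would define $\kappa$ as the smallest integer such that some fast-path transaction reading $\kappa$ t-objects step contention-free after some t-complete execution incurs a capacity abort. If $\kappa\leq m$, conclusion~(1) is immediate with the witness transaction, so I assume $\kappa>m$ and aim to establish conclusion~(2).

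The inductive claim to build is: for every $j\in\{0,1,\dots,m\}$, there exists a t-complete execution $E_j$ and a set $S_j$ of $2^{\kappa-j}$ read-only fast-path transactions whose data sets are pairwise disjoint and disjoint from that of all transactions in $\ms{txns}(E_j)$, such that when any $T\in S_j$ runs step contention-free after $E_j$, it has performed $j$ t-reads and each of these t-reads has accessed at least one \emph{distinct} metadata base object. The base case $j=0$ is vacuous: just pick a pool of $2^{\kappa}$ read-only fast-path transactions with pairwise disjoint data sets of size $m$ over fresh t-objects and take $E_0$ to be the empty execution.

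For the inductive step, I would pair up the transactions in $S_j$, obtaining $2^{\kappa-j-1}$ disjoint pairs $(T_i,T_i')$. For each pair I pick two fresh t-objects $X_i,Y_i\notin \Dset(T_i)\cup\Dset(T_i')$ (refreshing the data sets of $T_i,T_i'$ to include one of them as their $(j{+}1)^{\text{th}}$ read) and consider a slow-path writing transaction $U_i$ that writes new values to $X_i$ and $Y_i$ and commits. By progressiveness and OF TM-liveness, each $U_i$ runs to commit t-sequentially after $E_j$. The data sets across distinct pairs are pairwise disjoint, and disjoint from the fast-path data sets involved, so Lemma~\ref{lm:finallm} applies: there is a single t-complete execution fragment $F$, similar to the concatenation of the $U_i$'s, such that for every pair, appending the step contention-free run of $T_i$ (or $T_i'$) reading $X_i$ (resp.\ $Y_i$) after $E_j\cdot F$ yields an execution of $\M$. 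Set $E_{j+1}=E_j\cdot F$.

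Now, the critical observation, adapting the cycle argument used in Theorem~\ref{instrumentation}: by opacity, for each pair $(T_i,T_i')$, at least one of the two must return the new value written by $U_i$ on its $(j{+}1)^{\text{th}}$ read after $E_{j+1}$ (otherwise both return initial values and we build the forbidden real-time cycle $U_i\to T_i\to U_i$ via $X_i$ and $U_i\to T_i'\to U_i$ via $Y_i$, using that $U_i$'s write to the other t-object would need to be observed by the complementary transaction). Call that transaction the survivor; put all survivors in $S_{j+1}$, so $|S_{j+1}|=2^{\kappa-(j+1)}$. The $(j{+}1)^{\text{th}}$ t-read of each survivor must access some metadata object: otherwise by Lemma~\ref{lm:finallm} (applied to the single survivor) the execution with this extra t-read appended would be indistinguishable from one in which $U_i$ is absent, forcing it to return the initial value. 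Moreover, because the data sets and metadata sets used by the construction for distinct pairs are fresh and pairwise independent, the new metadata object is distinct from those accessed in the previous $j$ t-reads of that survivor, completing the induction and yielding the $\Omega(m)$ bound at $j=m$.

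The main obstacle I foresee is the careful bookkeeping in the inductive step: I need the gluing provided by Lemma~\ref{lm:finallm} to simultaneously preserve the behavior of all surviving fast-path transactions from earlier rounds (so that the metadata accesses previously forced remain distinct and still present) and to force a fresh metadata access in the current round. The fact that data sets are chosen mutually disjoint and that fast-path transactions are invisible to slow-path ones (Observation~\ref{ob:one}) together with the non-contention corollary (Corollary~\ref{lm:pgthree}) should make this work, but spelling out precisely which base objects are touched where, and why the new metadata object cannot have been accessed earlier, is the delicate technical core of the argument.
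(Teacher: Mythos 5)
Your overall scaffolding (the parameter $\kappa$, the halving induction with $|S_j|=2^{\kappa-j}$ pairwise-disjoint read-only fast-path transactions, pairing them up against slow-path writers on fresh t-objects, and gluing via Lemma~\ref{lm:finallm}) matches the paper. But there is a genuine gap at the heart of the inductive step: you run each slow-path writer $U_i$ \emph{to completion} and set $E_{j+1}=E_j\cdot F$ with $F$ t-complete. Once $U_i$ has committed, both $T_i$ and $T_i'$ not only may but \emph{must} return the new values of $X_i$ and $Y_i$, and they can obtain those values by reading only the data base objects in $\mathbb{D}_{X_i}$ and $\mathbb{D}_{Y_i}$: condition (2) of Definition~\ref{def:metadata} guarantees that a committed write is reflected in the data objects, so an uninstrumented read after a t-complete execution sees it. Consequently your claim that a metadata-free $(j{+}1)^{\text{th}}$ read ``would be indistinguishable from one in which $U_i$ is absent'' is false --- the data objects themselves differ --- and no metadata access is forced. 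Your proposed opacity violation, the two-node cycle $U_i\to T_i\to U_i$, is also not a violation: a committed $U_i$ that precedes $T_i$ in real time and whose value $T_i$ returns serializes cleanly before $T_i$.

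The paper's proof works precisely because the slow-path writer $T_{s_i}$ is left \emph{pending}: it is truncated at the longest prefix $E'_{s_i}$ after which neither paired reader, run step contention-free, can yet return $nv$, and the single enabled event $e_i$ is then applied, which by maximality lets exactly one reader (say $T_{f_{2i+1}}$) observe $nv$. The case split you omit is then essential: if $T_{f_{2i+1}}$'s critical read touches a fresh metadata object, that pair is done (Case I); otherwise $T_{f_{2i+1}}$ commits having read $nv$, after which $T_{f_{2i+2}}$ runs and, if its read of $X_{2i+2}$ touches no fresh metadata, it cannot distinguish the pending $T_{s_i}$ from its absence (here the indistinguishability argument is legitimate, because $e_i$ modified a base object in $\mathbb{D}_{X_{2i+1}}$, which is outside $T_{f_{2i+2}}$'s data set) and returns $v$. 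The contradiction is the three-transaction cycle $T_{s_i}\to T_{f_{2i+1}}$ (read $nv$), $T_{f_{2i+2}}\to T_{s_i}$ (read $v$), $T_{f_{2i+1}}\to T_{f_{2i+2}}$ (real-time order). To repair your argument you must keep the writers incomplete through the gluing step (the paper's Corollary~\ref{cr:pgone} and Lemma~\ref{lm:finallm} are stated so that the $\gamma_i\cdot\rho_i$ fragments can be re-attached after a t-complete $E'$ that merely \emph{simulates} the truncated writers), rather than committing them.
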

Here is a high-level overview of the proof technique. 
Let $\kappa$ be the smallest integer such that some fast-path transaction
running step contention-free after a t-quiescent configuration performs $\kappa$ t-reads and incurs a capacity abort. 

We prove that, for all $m \leq \kappa-1$,
there exists a t-complete execution $E_{m}$ and a set $S_m$ with $|S_m|=2^{\kappa-m}$ of
read-only fast-path transactions that access mutually disjoint data sets such that each
transaction in $S_m$
that runs step contention-free from $E_{_m}$ and 
performs t-reads of $m$ distinct t-objects accesses at least one distinct metadata base object
within the execution of each t-read operation. 

We proceed by induction. Assume that the induction statement holds for all $m <kappa -1$.
We prove that a set $S_{m+1}$; $|S_{m+1}| = 2^{\kappa-(m+1)}$ of fast-path transactions, each of which run 
step contention-free after the same t-complete execution $E_{m+1}$, perform
$m+1$ t-reads of distinct t-objects so that at least one distinct metadata base object is accessed within the execution
of each t-read operation.
In our construction, we pick any two new transactions from the set $S_m$ and show that one of them
running step contention-free from a t-complete execution that extends $E_m$ performs
$m+1$ t-reads of distinct t-objects so that at least one distinct metadata base object is accessed within the execution
of each t-read operation.
In this way, the set of transactions is reduced by half in each step of the induction 
until one transaction remains which must have accessed a distinct metadata base object in every one of its $m+1$ t-reads.

Intuitively, since all the transactions that we use in our construction access mutually disjoint data sets, 
we can apply Lemma~\ref{lm:pgtwo} to construct a t-complete execution $E_{m+1}$ such that
each of the fast-path transactions in $S_{m+1}$ when running step contention-free after $E_{m+1}$ perform
$m+1$ t-reads so that at least one distinct metadata base object is accessed within the execution of each t-read operation.

We now present the formal proof:
\begin{proof}
In the constructions which follow, every fast-path transaction executes at most $m+1$ t-reads. 
Let $\kappa$ be the smallest integer such that some fast-path transaction
running step contention-free after a t-quiescent configuration performs $\kappa$ t-reads and incurs a capacity abort. 
We proceed by induction.

\vspace{1mm}\noindent\textbf{Induction statement.}
We prove that, for all $m \leq \kappa-1$,
there exists a t-complete execution $E_{m}$ and a set $S_m$ with $|S_m|=2^{\kappa-m}$ of
read-only fast-path transactions that access mutually disjoint data sets such that each
transaction $T_{f_i}\in S_m$
that runs step contention-free from $E_{_m}$ and 
performs t-reads of $m$ distinct t-objects accesses at least one distinct metadata base object
within the execution of each t-read operation. 
Let $E_{f_i}$ be the step contention-free execution of $T_{f_i}$ after $E_{m}$ and 
let $\Dset(T_{f_i}) = \{X_{i,1}, \dots, X_{i,m}\}$. 

\vspace{1mm}\noindent\textbf{The induction.}
Assume that the induction statement holds for all $m \leq \kappa-1$.
The statement is trivially true for the base case $m=0$ for every $\kappa \in \mathbb{N}$.

We will prove that a set
$S_{m+1}$; $|S_{m+1}| = 2^{\kappa-(m+1)}$ of fast-path transactions, each of which run 
step contention-free from the same t-quiescent configuration $E_{m+1}$, perform
$m+1$ t-reads of distinct t-objects so that at least one distinct metadata base object is accessed within the execution
of each t-read operation.

The construction proceeds in \emph{phases}: there are exactly $\frac{|S_m|}{2}$ phases.
In each phase, we pick any two new transactions from the set $S_m$ and show that one of them
running step contention-free after a t-complete execution that extends $E_m$ performs
$m+1$ t-reads of distinct t-objects so that at least one distinct metadata base object is accessed within the execution
of each t-read operation.

Throughout this proof, we will assume that any two transactions (and resp. execution fragments) with distinct subscripts
represent distinct identifiers.

For all $i\in \{0,\ldots , \frac{|S_m|}{2} -1\}$, 
let $X_{2i+1},X_{2i+2}\not\in \displaystyle\bigcup_{i=0}^{|S_m|-1}\{X_{i,1},\ldots, X_{i,m}\}$ be distinct t-objects and 
let $v$ be the value of $X_{2i+1}$ and $X_{2i+2}$ after $E_m$. 
Let $T_{s_i}$ denote a slow-path transaction which writes $nv \neq v$ to $X_{2i+1}$ and $X_{2i+2}$. 
Let $E_{s_i}$ be the t-complete step contention-free execution fragment of $T_{s_i}$ running immediately after $E_m$.

Let $E'_{s_i}$ be the longest prefix of the execution $E_{s_i}$ such that 
$E_m\cdot E'_{s_i}$ can be extended neither with the 
complete step contention-free execution fragment of transaction 
$T_{f_{2i+1}}$ that performs its $m$ t-reads of $X_{2i+1,1},\ldots , X_{2i+1,m}$ and then 
performs $\Read_{f_{2i+1}}(X_{2i+1})$ and
returns $nv$, nor with the complete step contention-free execution fragment of some transaction $T_{f_{2i+2}}$ that performs 
t-reads of $X_{{2i+2}_{1}},\ldots , X_{2i+2,m}$ and then performs $\Read_{f_{2i+2}}(X_{2i+2})$ and returns $nv$.
Progressiveness and OF TM-liveness of $\M$ stipulates that such an execution exists.

Let $e_{i}$ be the enabled event of $T_{s_{i}}$ in the configuration after $E_m\cdot E'_{s_{i}}$.
By construction, the execution $E_m\cdot E'_{s_{i}}$ can be
extended with at least one of the complete step contention-free executions of transaction
$T_{f_{2i+1}}$ performing $(m+1)$ t-reads of $X_{{2i+1,1}},\ldots , X_{{2i+1,m}},X_{2i+1}$ such that
$\Read_{f_{2i+1}}(X_{2i+1})\rightarrow nv$ or
transaction $T_{f_{2i+2}}$ performing 
t-reads of $X_{{2i+2,1}},\ldots , X_{{2i+2,m}},  X_{2i+2}$ such that $\Read_{f_{2i+2}}(X_{2i+2})\rightarrow nv$.
Without loss of generality, suppose that $T_{f_{2i+1}}$ reads the value of $X_{2i+1}$ to be $nv$ after 
$E_m\cdot E'_{0_{i}} \cdot e_{i}$.

For any $i\in \{0,\ldots , \frac{|S_m|}{2} -1\}$, we will denote by $\alpha_i$ the execution fragment which we will
construct in phase $i$. 
For any $i\in \{0,\ldots , \frac{|S_m|}{2} -1\}$, we prove that
$\M$ has an execution of the form $E_m\cdot \alpha_i$ in which
$T_{f_{2i+1}}$ (or $T_{f_{2i+2}}$) running step contention-free after
a t-complete execution that extends $E_m$ performs $m+1$ t-reads of distinct t-objects so that at least one distinct metadata
base object is accessed within the execution of each first $m$ t-read operations and $T_{f_{2i+1}}$ (or $T_{f_{2i+2}}$) 
is poised to apply an event
after $E_m\cdot \alpha_i$
that accesses a distinct metadata base object during the $(m+1)^{th}$ t-read. 
Furthermore, we will show that $E_m\cdot \alpha_i$ appears t-sequential to $T_{f_{2i+1}}$ (or $T_{f_{2i+2}}$). 

\vspace{1mm}\noindent\textit{(Construction of phase $i$)}

Let $E_{f_{2i+1}}$ (and resp. $E_{f_{2i+2}}$) be the complete step contention-free execution of the t-reads of 
$X_{{2i+1},1},\dots,X_{{2i+1},m}$
(and resp. $X_{{2i+2},1},\ldots , X_{{2i+2},m}$)
running after $E_{m}$ by $T_{f_{2i+1}}$ (and resp. $T_{f_{2i+2}}$). 
By the inductive hypothesis, transaction $T_{f_{2i+1}}$ (and resp. $T_{f_{2i+2}}$)
accesses $m$ distinct metadata objects in the execution $E_m\cdot E_{f_{2i+1}}$ (and resp. $E_m\cdot E_{f_{2i+2}}$).  
Recall that transaction $T_{f_{2i+1}}$ does not conflict with transaction $T_{s_i}$. 
Thus, by Corollary~\ref{lm:pgthree},
$\M$ has an execution of the form
$E_m\cdot E'_{s_{i}} \cdot e_{i} \cdot E_{f_{2i+1}}$ 
(and resp. $E_m\cdot E'_{s_{i}} \cdot e_{i} \cdot E_{f_{2i+2}}$). 

Let $E_{rf_{2i+1}}$ be the complete step contention-free execution fragment of $\Read_{f_{2i+1}}(X_{2i+1})$ that extends
$E_{2i+1}=E_m\cdot E'_{s_{i}} \cdot e_{i} \cdot E_{f_{2i+1}} $. 
By OF TM-liveness, $\Read_{f_{2i+1}}(X_{2i+1})$ must return a matching 
response in $E_{2i+1}\cdot E_{rf_{2i+1}}$.
We now consider two cases.

\vspace{1mm}\noindent\textit{\textbf{Case \RNum{1}}: Suppose $E_{rf_{2i+1}}$ accesses at least one metadata base object 
$b$ not previously accessed by $T_{f_{2i+1}}$.}

Let $E'_{rf_{2i+1}}$ be the longest prefix of $E_{rf_{2i+1}}$ which does not apply 
any primitives to any metadata base object $b$ not previously accessed by $T_{f_{2i+1}}$. 
The execution $E_m\cdot E'_{s_{i}} \cdot e_{i} \cdot E_{f_{2i+1}} \cdot E'_{rf_{2i+1}}$ 
appears t-sequential to $T_{f_{2i+1}}$ because $E_{f_{2i+1}}$ does not contend with $T_{s_{i}}$ on any base object
and any common base object accessed in the execution fragments $E'_{rx_{2i+1}}$ and $E_{s_{i}}$
by $T_{f_{2i+1}}$ and $T_{s_{i}}$ respectively must be data objects contained in $\mathbb{D}$. 
Thus, we have that
$|\Dset(T_{f_{2i+1}})| = m+1$ and that 
$T_{f_{2i+1}}$ accesses $m$ distinct metadata base objects within each of its first $m$ t-read operations
and is poised to access a distinct metadata base object during the execution of the $(m+1)^{th}$ t-read.
In this 
case, let $\alpha_i = E_m\cdot E'_{s_{i}} \cdot e_{i} \cdot E_{f_{2i+1}} \cdot E'_{rf_{2i+1}}$.

\vspace{1.5mm}\noindent\textit{\textbf{Case \RNum{2}}: Suppose 
$E_{rf_{2i+1}}$ does not access any metadata base object not previously accessed by $T_{f_{2i+1}}$.}

In this case, we will first prove the following:
\begin{claim}
\label{cl:iterationone}
$\M$ has an execution of the form 
$E_{2i+2} = E_m\cdot E'_{s_{i}}\cdot e_{i}\cdot {\bar E}_{f_{2i+1}} \cdot E_{f_{2i+2}}$ 
where ${\bar E}_{f_{2i+1}}$ is the t-complete step contention-free execution of $T_{f_{2i+1}}$ in which 
$\Read_{f_{2i+1}}(X_{2i+1})\rightarrow nv$,
$T_{f_{2i+1}}$ invokes $\TryC_{f_{2i+1}}$ and returns a matching response.
\end{claim}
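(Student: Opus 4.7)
The plan is to construct the claimed execution in two stages: first building $\bar{E}_{f_{2i+1}}$ by extending the prefix established earlier in phase $i$, and then appending $E_{f_{2i+2}}$ via a commutativity argument. For the first stage, the construction already developed in phase $i$ gives that $E_m\cdot E'_{s_i}\cdot e_i\cdot E_{f_{2i+1}}\cdot E_{rf_{2i+1}}$ is an execution of $\M$ in which $\Read_{f_{2i+1}}(X_{2i+1})\rightarrow nv$. Since $T_{f_{2i+1}}$ continues step contention-free after this prefix, OF TM-liveness ensures that the invocation of $\TryC_{f_{2i+1}}$ returns a matching response (either $C_{f_{2i+1}}$ or $A_{f_{2i+1}}$) in finitely many steps. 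I denote the resulting t-complete step contention-free execution fragment of $T_{f_{2i+1}}$ by $\bar{E}_{f_{2i+1}}$.

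For the second stage, to append $E_{f_{2i+2}}$, I first observe that $\Dset(T_{f_{2i+2}})=\{X_{2i+2,1},\ldots,X_{2i+2,m}\}$ is disjoint from both $\Dset(T_{s_i})=\{X_{2i+1},X_{2i+2}\}$ and $\Dset(T_{f_{2i+1}})=\{X_{2i+1,1},\ldots,X_{2i+1,m},X_{2i+1}\}$. By Definition~\ref{def:metadata}(1), $T_{s_i}$ and $T_{f_{2i+1}}$ access no base object in $\bigcup_j\mathbb{D}_{X_{2i+2,j}}$, so those data objects are in identical states after $E_m$ and after $E_m\cdot E'_{s_i}\cdot e_i\cdot \bar{E}_{f_{2i+1}}$. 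For the metadata accesses by $T_{f_{2i+2}}$, I apply Lemma~\ref{lm:pgtwo} pairwise: to $\{T_{f_{2i+2}}, T_{s_i}\}$ (a fast-path/slow-path pair with disjoint data sets and no conflict) and to $\{T_{f_{2i+2}}, T_{f_{2i+1}}\}$ (two fast-path transactions with disjoint data sets), obtaining in each case that the two transactions do not contend on any base object. Starting from the execution $E_m\cdot E_{f_{2i+2}}$ guaranteed by the inductive hypothesis, I successively append $E'_{s_i}\cdot e_i$ and $\bar{E}_{f_{2i+1}}$ and then commute $E_{f_{2i+2}}$ past both fragments using the pairwise non-contention properties, yielding $E_{2i+2}$.

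The main obstacle is that $T_{s_i}$ and $T_{f_{2i+1}}$ themselves conflict on $X_{2i+1}$, which prevents a single direct invocation of Lemma~\ref{lm:pgtwo} that would move $E_{f_{2i+2}}$ past the entire fragment $E'_{s_i}\cdot e_i\cdot \bar{E}_{f_{2i+1}}$ in one shot. The resolution is to handle the two swaps independently: $T_{f_{2i+2}}$ is non-conflicting with each of $T_{s_i}$ and $T_{f_{2i+1}}$ individually, so each pairwise application of Lemma~\ref{lm:pgtwo} is legitimate even though the triple is not. The two successive swaps remain valid because the events of $T_{f_{2i+2}}$ commute on a per-base-object basis with the events of each of the other two fragments separately, and $T_{f_{2i+2}}$ does not touch any data object in $\mathbb{D}_{X_{2i+1}}$, $\mathbb{D}_{X_{2i+2}}$, or $\mathbb{D}_{X_{2i+1,j}}$, so the intermediate reorderings preserve the step contention-free character of each fragment.
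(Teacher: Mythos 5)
Your proposal is correct and follows essentially the same route as the paper: stage one obtains $\bar{E}_{f_{2i+1}}$ from the Case~\RNum{2} hypothesis (the read returns $nv$ by the definition of $e_i$, and OF TM-liveness completes $\TryC_{f_{2i+1}}$), and stage two appends $E_{f_{2i+2}}$ via two pairwise applications of Lemma~\ref{lm:pgtwo} to the non-conflicting pairs $\{T_{f_{2i+2}},T_{s_i}\}$ and $\{T_{f_{2i+2}},T_{f_{2i+1}}\}$, exactly as the paper does. Your explicit remark that the conflict between $T_{s_i}$ and $T_{f_{2i+1}}$ forces the pairwise (rather than one-shot) treatment is a point the paper leaves implicit, but it does not change the argument.
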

\begin{proof}
Since $E_{rf_{2i+1}}$ does not contain accesses to any distinct metadata base objects,
the execution $E_m\cdot E'_{s_{i}}\cdot e_{i} \cdot E_{f_{2i+1}}\cdot E_{rf_{2i+1}}$ appears t-sequential to $T_{f_{2i+1}}$.
By definition of the event $e_{i}$, $\Read_{f_{2i+1}}(X_{2i+1})$ must access the base object to which the event $e_i$ 
applies a nontrivial
primitive and return the response $nv$ in $E'_{s_{i}}\cdot e_{i}\cdot E_{f_{2i+1}}\cdot E_{rf_{2i+1}}$.
By OF TM-liveness, it follows that $E_m\cdot E'_{s_{i}}\cdot e_{i}\cdot {\bar E}_{f_{2i+1}}$ is an execution of $\M$.

Now recall that $E_m\cdot E'_{s_i} \cdot e_{i}\cdot E_{f_{2i+2}}$ is an execution of $\mathcal{M}$ because
transactions $T_{f_{2i+2}}$ and $T_{s_{i}}$ do not conflict in this execution and thus, cannot contend on any base object.
Finally, because $T_{f_{2i+1}}$ and $T_{f_{2i+2}}$ access disjoint data sets in 
$E_m\cdot E'_{s_{i}}\cdot e_{i}\cdot {\bar E}_{f_{2i+1}} \cdot E_{f_{2i+2}}$,
by Lemma~\ref{lm:pgtwo} again, 
we have that $E_m\cdot E'_{s_{i}} \cdot e_{i}\cdot \bar{E}_{f_{2i+1}} \cdot E_{f_{2i+2}}$ is an execution of $\mathcal{M}$.
\end{proof}
Let $E_{rf_{2i+2}}$ be the complete step contention-free execution fragment of $\Read_{f_{2i+2}}(X_{2i+2})$ after 
$E_m\cdot E'_{s_{i}}\cdot e_{i}\cdot {\bar E}_{f_{2i+1}} \cdot E_{f_{2i+2}}$. 
By the induction hypothesis and Claim~\ref{cl:iterationone}, 
transaction $T_{f_{2i+2}}$ must access $m$ distinct metadata base objects
in the execution $E_m\cdot E'_{s_{i}}\cdot e_{i}\cdot {\bar E}_{f_{2i+1}} \cdot E_{f_{2i+2}}$. 

If $E_{rf_{2i+2}}$ accesses some metadata base object, then
by the argument given in Case~\RNum{1} applied to transaction $T_{f_{2i+2}}$, we get that
$T_{f_{2i+2}}$ accesses $m$ distinct metadata base objects within each of the first $m$ t-read operations
and is poised to access a distinct metadata base object during the execution of the $(m+1)^{th}$ t-read.

Thus, suppose that $E_{rf_{2i+2}}$ does not access any metadata base object previously accessed by $T_{f_{2i+2}}$. 
We claim that this is impossible and proceed to derive a contradiction. 
In particular, $E_{rf_{2i+2}}$ does not contend with $T_{s_i}$ on any metadata base object.
Consequently, the execution $E_m\cdot  E'_{s_{i}} \cdot e_{i} \cdot {\bar E}_{f_{2i+1}} \cdot E_{f_{2i+2}}$ 
appears t-sequential to $T_{x_{2i+2}}$ since 
$E_{rx_{2i+2}}$ only contends with $T_{s_{i}}$ on base objects in $\mathbb{D}$. 
It follows that $E_{2i+2}\cdot E_{rf_{2i+2}}$ must also appear t-sequential to $T_{f_{2i+2}}$ and so 
$E_{rf_{2i+2}}$ cannot abort.
Recall that the base object, say $b$, to which $T_{s_{i}}$ applies a nontrivial primitive in the event $e_{i}$
is accessed by $T_{f_{2i+1}}$ in $E_m\cdot E'_{s_{i}}\cdot e_{i}\cdot {\bar E}_{f_{2i+1}} \cdot E_{f_{2i+2}}$; 
thus, $b \in \mathbb{D}_{X_{2i+1}}$. 
Since $X_{2i+1} \not\in \Dset(T_{f_{2i+2}})$,
$b$ cannot be accessed by $T_{f_{2i+2}}$. Thus, the execution 
$E_m\cdot E'_{s_{i}}\cdot e_{i}\cdot {\bar E}_{f_{2i+1}} \cdot E_{f_{2i+2}} \cdot E_{rf_{2i+2}}$ is indistinguishable to 
$T_{f_{2i+2}}$ from the execution ${\hat E}_i\cdot E'_{s_{i}} \cdot E_{f_{2i+2}}\cdot E_{rf_{2i+2}}$
in which $\Read_{f_{2i+2}}(X_{2i+2})$ must return the response $v$ (by construction of $E'_{s_i}$).

But we observe now that the execution 
$E_m\cdot E'_{s_{i}}\cdot e_{i}\cdot {\bar E}_{f_{2i+1}} \cdot E_{f_{2i+2}}\cdot E_{rf_{2i+2}}$ is not opaque. 
In any serialization corresponding to this execution, 
$T_{s_{i}}$ must be committed and must precede $T_{f_{2i+1}}$ because $T_{f_{2i+1}}$ read $nv$ from $X_{2i+1}$.
Also, transaction $T_{f_{2i+2}}$ must precede $T_{s_{i}}$ 
because $T_{f_{2i+2}}$ read $v$ from $X_{2i+2}$. 
However $T_{f_{2i+1}}$ must precede $T_{f_{2i+2}}$ to respect real-time ordering of transactions.
Clearly, there exists no such serialization---contradiction. 

Letting $E'_{rf_{2i+2}}$ be the longest prefix of $E_{rf_{2i+2}}$ which does not access a base object $b\in \mathbb{M}$ not previously accessed by
 $T_{f_{2i+2}}$, we can let 
 $\alpha_i=E'_{s_{i}}\cdot e_{i}\cdot {\bar E}_{f_{2i+1}} \cdot E_{f_{2i+2}}\cdot E'_{rf_{2i+2}}$ in this case.

Combining Cases~\RNum{1} and~\RNum{2}, the following claim holds.
\begin{claim}
\label{cl:final}
For each $i\in \{0,\ldots , \frac{|S_m|}{2} -1\}$, $\M$ has an execution of the form 
$E_m\cdot \alpha_i$ in which
\begin{enumerate}
\item[(1)]
some fast-path transaction 
$T_{i}\in \ms{txns}(\alpha_i)$ performs t-reads of $m+1$ distinct t-objects so that at least one distinct metadata base object
is accessed within the execution of each of the first $m$ t-reads, 
$T_i$ is poised to access a distinct metadata base object
after $E_m\cdot \alpha_i$ during the execution of the $(m+1)^{th}$ t-read and the execution appears t-sequential to $T_i$,
\item[(2)] 
the two fast-path transactions in the execution fragment $\alpha_i$ do not contend on the same base object.
\end{enumerate}
\end{claim}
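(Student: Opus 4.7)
The plan is to assemble the case analysis carried out immediately above rather than reprove anything from scratch. For each $i$, define $\alpha_i$ by case: in Case~\RNum{1} take $\alpha_i = E'_{s_i}\cdot e_i \cdot E_{f_{2i+1}}\cdot E'_{rf_{2i+1}}$ and set $T_i := T_{f_{2i+1}}$, and in Case~\RNum{2} take $\alpha_i = E'_{s_i}\cdot e_i \cdot \bar{E}_{f_{2i+1}}\cdot E_{f_{2i+2}}\cdot E'_{rf_{2i+2}}$ and set $T_i := T_{f_{2i+2}}$. Each case has already been shown to yield an execution of $\M$ extending $E_m$, so what remains is to read off parts~(1) and~(2) from the constructions.

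For part~(1), I would unpack the guarantees each case gives about $T_i$. The inductive hypothesis applied to $T_{f_{2i+1}}$ (resp.\ $T_{f_{2i+2}}$) already assigns one distinct metadata access to each of its first $m$ t-reads. The case split then forces $T_i$ to be poised to access a new, distinct metadata base object at its $(m{+}1)^{\textit{th}}$ t-read: in Case~\RNum{1} this is enforced by the choice of $E'_{rf_{2i+1}}$ as the longest prefix of $E_{rf_{2i+1}}$ that avoids a new metadata access, and in Case~\RNum{2} by the analogous choice of $E'_{rf_{2i+2}}$ together with the opacity-based contradiction showing that $E_{rf_{2i+2}}$ must access some new metadata object. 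T-sequential appearance up to the poised event was already justified in both cases, by noting that any base objects shared with $T_{s_i}$ along the $E_{f_{\cdot}}$ and $E'_{rf_{\cdot}}$ fragments lie in $\mathbb{D}_{X_{2i+1}}\cup\mathbb{D}_{X_{2i+2}}$, which by Definition~\ref{def:metadata} is partitioned away from the data-object partitions touched by $T_i$'s first $m$ reads.

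For part~(2), I would invoke Lemma~\ref{lm:pgtwo}. The only case in which $\alpha_i$ contains two fast-path transactions is Case~\RNum{2}, where $T_{f_{2i+1}}$ and $T_{f_{2i+2}}$ are distinct elements of $S_m$ and therefore have disjoint data sets by the outer inductive hypothesis; the newly added reads of $X_{2i+1}$ and $X_{2i+2}$ are on distinct t-objects and so preserve this disjointness. Thus the two fast-path transactions do not conflict, and Lemma~\ref{lm:pgtwo} applied to the step contention-free fragments $\bar{E}_{f_{2i+1}}$ and $E_{f_{2i+2}}\cdot E'_{rf_{2i+2}}$ in $\alpha_i$ yields the claimed absence of base-object contention. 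In Case~\RNum{1}, $\alpha_i$ contains $T_{f_{2i+1}}$ as its only fast-path transaction, so part~(2) holds vacuously.

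The main obstacle in the overall argument was already absorbed in Case~\RNum{2} above, namely the opacity contradiction that forces $T_{f_{2i+2}}$'s $(m{+}1)^{\textit{th}}$ read to touch a new metadata object whenever $T_{f_{2i+1}}$'s $(m{+}1)^{\textit{th}}$ read fails to do so. What remains for the claim itself is bookkeeping and a single application of the non-contention lemma; the only subtlety worth checking carefully is that the ``poised to access'' clause is read off correctly from the truncated prefixes $E'_{rf_{2i+1}}$ and $E'_{rf_{2i+2}}$, which by definition are the maximal prefixes that avoid the new distinct metadata access.
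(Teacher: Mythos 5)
Your proposal is correct and follows essentially the same route as the paper: the paper states Claim~\ref{cl:final} as a direct summary of Cases~\RNum{1} and~\RNum{2}, with $\alpha_i$ and the designated transaction $T_i$ chosen per case exactly as you do, the poised metadata access read off from the maximal truncated prefixes $E'_{rf_{2i+1}}$ / $E'_{rf_{2i+2}}$, and non-contention between the two fast-path transactions in Case~\RNum{2} obtained from Lemma~\ref{lm:pgtwo} via disjointness of data sets. Your observation that part~(2) is vacuous in Case~\RNum{1} is a fair reading of the paper's slightly loose phrasing.
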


\vspace{1mm}\noindent\textit{(Collecting the phases)}

We will now describe how we can construct the set $S_{m+1}$ of fast-path transactions
from these $\frac{|S_m|}{2}$ phases and force each of them to access $m+1$ distinct metadata base objects when
running step contention-free after the same t-complete execution.

For each $i\in \{0,\ldots , \frac{|S_m|}{2} -1\}$,
let $\beta_i$ be the subsequence of the execution $\alpha_i$ consisting of all the events of the fast-path transaction
that is poised to access a $(m+1)^{th}$ distinct metadata base object.
Henceforth, we denote by $T_i$ the fast-path transaction that participates in $\beta_i$.
Then, from Claim~\ref{cl:final}, it follows that, for each $i\in \{0,\ldots , \frac{|S_m|}{2} -1\}$, 
$\M$ has an execution of the form $E_m\cdot E'_{s_i}\cdot e_i \cdot \beta_i$ in which
the fast-path transaction $T_i$ performs t-reads of $m+1$ distinct t-objects 
so that at least one distinct metadata base object
is accessed within the execution of each of the first $m$ t-reads, 
$T_i$ is poised to access a distinct metadata base object
after $E_m\cdot E'_{s_i}\cdot e_i \cdot \beta_i$ during the execution of the $(m+1)^{th}$ t-read and the 
execution appears t-sequential to $T_i$. 

The following result is a corollary to the above claim that is obtained by applying the definition of ``appears t-sequential''.
Recall that $E'_{s_i}\cdot e_i$ is the t-incomplete execution of slow-path transaction $T_{s_i}$ that accesses
t-objects $X_{2i+1}$ and $X_{2i+2}$.
\begin{corollary}
\label{cr:pgone}
For all $i\in \{0,\ldots , \frac{|(S_m|}{2} -1\}$, $\M$ has an execution of the form
$E_m\cdot E_{{i}}\cdot \beta_i$ such that the configuration after $E_m\cdot E_i$ is t-quiescent,
$\ms{txns}(E_i)\subseteq \{T_{s_i}\}$
and $\Dset(T_{s_i})\subseteq \{X_{2i+1},X_{2i+2}\}$ in $E_i$.
\end{corollary}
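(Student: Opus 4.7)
The plan is to derive Corollary~\ref{cr:pgone} as a direct consequence of Claim~\ref{cl:final} by unpacking the definition of ``appears t-sequential'' from Section~\ref{sec:ins}. By Claim~\ref{cl:final}, for each index $i$ the execution $E_m \cdot E'_{s_i} \cdot e_i \cdot \beta_i$ appears t-sequential to the fast-path transaction $T_i$ that participates in $\beta_i$. So I would instantiate the definition with $E = E_m \cdot E'_{s_i} \cdot e_i \cdot \beta_i$ and $T_k = T_i$, obtaining an execution $F$ of $\M$ such that (i) $\ms{txns}(F) \subseteq \ms{txns}(E) \setminus \{T_i\}$, (ii) the configuration after $F$ is t-quiescent, (iii) every transaction preceding $T_i$ in real-time order in $E$ appears in $F$ with identical local history, (iv) read and write sets in $F$ are subsets of those in $E$, and (v) $F \cdot (E|i)$ is an execution of $\M$.

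Next I would argue that $F$ decomposes as $E_m \cdot E_i$ for some suitable $E_i$. Since $E_m$ is t-complete and every transaction in $E_m$ precedes $T_i$ in real-time order, condition (iii) forces all events of transactions of $E_m$ to appear in $F$ in their original order. The remaining transactions of $\ms{txns}(E) \setminus \ms{txns}(E_m) \setminus \{T_i\}$ consist only of $T_{s_i}$ (the sole transaction participating in $E'_{s_i} \cdot e_i$, besides $T_i$ which is excluded). Hence $\ms{txns}(E_i) \subseteq \{T_{s_i}\}$. The data-set condition (iv) then yields $\Dset_{E_i}(T_{s_i}) \subseteq \Dset_{E'_{s_i} \cdot e_i}(T_{s_i}) \subseteq \{X_{2i+1}, X_{2i+2}\}$. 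Finally, since $T_i$ participates only in $\beta_i$, condition (v) gives $E_m \cdot E_i \cdot \beta_i$ as an execution of $\M$, as desired.

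The only subtlety I anticipate is justifying the clean split of $F$ into $E_m$ followed by $E_i$; this uses that $E_m$ is t-complete (so all its transactions are t-complete and real-time-precede $T_i$) together with condition (iii) of the definition, which preserves per-process subhistories of such transactions. Once this decomposition is in place, the three bulleted properties of the corollary read off directly from conditions (i), (ii), and (iv) of the definition.
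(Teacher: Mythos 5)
Your proposal is correct and follows essentially the same route as the paper, which gives no detailed argument beyond stating that the corollary is ``obtained by applying the definition of appears t-sequential'' to Claim~\ref{cl:final}; you have simply spelled out that one-line justification, instantiating the definition with $E_m\cdot E'_{s_i}\cdot e_i\cdot\beta_i$ and $T_i$ and reading off the t-quiescence, $\ms{txns}(E_i)\subseteq\{T_{s_i}\}$, and data-set conditions from its four clauses. The decomposition subtlety you flag (that the witness execution can be taken to begin with $E_m$ because all of its transactions are t-complete and real-time-precede $T_i$) is handled at the same level of informality in the paper, so nothing further is needed.
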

We can represent the execution $\beta_i=\gamma_{i}\cdot \rho_{i}$ where
fast-path transaction $T_i$ performs complete t-reads of $m$ distinct t-objects in $\gamma_{i}$
and then performs an incomplete t-read of the $(m+1)^{th}$ t-object in $\rho_{i}$ in which $T_i$ only
accesses base objects in $\displaystyle\bigcup_{X\in DSet(T_i)}\{X\}$.
Recall that $T_i$ and $T_{s_i}$ do not contend on the same base object
in the execution $E_m\cdot E_i\cdot \gamma_i$.
Thus, for all $i\in \{0,\ldots , \frac{|S_m|}{2} -1\}$,
$\M$ has an execution of the form $E_m\cdot \gamma_i\cdot E_i \cdot \rho_i$.

Observe that the fast-path transaction $T_i\in \gamma_i$ does not access any t-object
that is accessed by any slow-path transaction in the execution fragment $E_0\cdots E_{\frac{|S_m|}{2} -1}$.
By Lemma~\ref{lm:finallm}, there exists a t-complete step contention-free execution fragment $E'$
that is similar to $E_0\cdots E_{\frac{|S_m|}{2} -1}$
such that
for all $i\in \{0,\ldots ,  \frac{|S_m|}{2} -1\}$, $\M$ has an execution of the form
$E_{m}\cdot E'\cdot  \gamma_i \cdot \rho_i$.
By our construction, the enabled event of each fast-path transaction $T_i\in \beta_i$ in this execution
is an access to a distinct metadata base object.

Let $S_{m+1}$ denote the set of all fast-path transactions that participate in the execution fragment
$\beta_0\cdots \beta_{ \frac{|(S_m|}{2} -1}$ and $E_{m+1}=E_m\cdot E'$.
Thus, $|S_{m+1}|$ fast-path transactions, each of which run 
step contention-free from the same t-quiescent configuration, perform
$m+1$ t-reads of distinct t-objects so that at least one distinct metadata base object is accessed within the execution
of each t-read operation. This completes the proof.
\end{proof}
\subsection{A matching upper bound}
We prove that the lower bound in Theorem~\ref{linearlowerbound} is
tight by describing an `instrumentation-optimal'' 
HyTM implementation (Algorithm~\ref{alg:inswrite}) that is opaque, progressive, provides wait-free TM-liveness,
uses \emph{invisible reads}.
\begin{definition}[Invisible reads]
We say that a HyTM implementation $\M$ uses fast-path (and resp. slow-path) \emph{invisible reads} if
for every execution $E$ of $\M$ and every fast-path (and resp. slow-path) transaction $T_k\in \ms{txns}(E)$,
$E|k$ does not contain any nontrivial events.
\end{definition}
\vspace{1mm}\noindent\textbf{Base objects.}
For every t-object $X_j$, our implementation maintains a base object $v_j\in \mathbb{D}$ that stores the value of $X_j$
and a metadata base object $r_{j}$, which is a \emph{lock bit} that stores $0$ or $1$.

\vspace{1mm}\noindent\textbf{Fast-path transactions.}
For a fast-path transaction $T_k$, the $\Read_k(X_j)$ implementation first reads $r_j$ 
to check if $X_j$ is locked by a concurrent updating transaction. 
If so, it returns $A_k$,
else it returns the value of $X_j$.
Updating fast-path transactions use uninstrumented writes:
$\Write (X_j,v)$ simply stores the cached state of $X_j$ along with its value $v$ and
if the cache has not been invalidated, updates the shared memory
during $\TryC_k$ by invoking the $\ms{commit-cache}$ primitive.

\vspace{1mm}\noindent\textbf{Slow-path read-only transactions.}
Any $\Read_k(X_j)$ invoked by a slow-path transaction first reads the value of the object from $v_j$, 
checks if $r_j$ is set
and then performs \emph{value-based validation} on its entire read set to check if any of them have been modified. 
If either of these conditions is true,
the transaction returns $A_k$. Otherwise, it returns the value of $X_j$. 
A read-only transaction simply returns $C_k$ during the tryCommit.

\vspace{1mm}\noindent\textbf{Slow-path updating transactions.}
The $\Write_k(X,v)$ implementation of a slow-path transaction stores
$v$ and the current value of $X_j$ locally, 
deferring the actual update in shared memory to tryCommit. 

During $\TryC_k$, an updating slow-path transaction $T_k$ attempts to obtain exclusive write access to its 
entire write set as follows:
for every t-object $X_j \in \Wset(T_k)$, it writes $1$ to each base
object $r_{j}$ by performing a \emph{compare-and-set} (\emph{cas})
primitive that checks if the value of $r_j$ is not $1$ 
and, if so, replaces it with $1$.   
If the \emph{cas} fails, then $T_k$ releases the locks on all objects $X_{\ell}$ 
it had previously acquired 
by writing $0$ to $r_{\ell}$ and then returns $A_k$. Intuitively, if the \emph{cas} fails, some concurrent transaction
is performing a t-write to a t-object in $\Wset(T_k)$.
If all the locks on the write set were acquired successfully,
$T_k$ checks if any t-object in $\Rset(T_k)$ is concurrently being updated by another transaction
and then performs value-based validation of the read set. If a conflict is detected from the these checks,
the transaction is aborted.
Finally, $\TryC_k$ attempts to write the values of the t-objects via \emph{cas} operations.
If any \emph{cas} on the individual base objects fails, there must be a concurrent fast-path writer, and so $T_k$ rolls back the
state of the base objects that were updated, releases locks on its write set and returns $A_k$. 
The roll backs are performed with \emph{cas} operations,
skipping any which fail to allow for concurrent fast-path writes to
locked locations. Note that if a concurrent read operation of a
fast-path transaction $T_{\ell}$ finds an ``invalid'' value in $v_j$ that was
written by such transaction $T_k$ but has not
been rolled back yet, then $T_{\ell}$ either incurs a tracking set
abort later because $T_k$ has updated $v_j$ or finds $r_j$ to be $1$. 
In both cases, the read operation of $T_{\ell}$ aborts.     

The implementation uses invisible reads (no nontrivial primitives are applied by reading transactions).
Every t-operation returns a matching response within a finite number of its steps.

\vspace{1mm}\noindent\textbf{Complexity.}
Every t-read operation performed by a fast-path transaction accesses a metadata base object
once (the lock bit corresponding to the t-object), 
which is the price to pay for detecting conflicting updating slow-path
transactions. Write operations of fast-path transactions are uninstrumented. 
Thus:
%
\begin{theorem}
\label{th:inswrite}
There exists an opaque HyTM implementation that provides uninstrumented writes, invisible reads, progressiveness
and wait-free TM-liveness such that
in its every execution $E$, every read-only fast-path transaction $T\in \ms{txns}(E)$
accesses $O(|\Rset(T)|)$ distinct metadata base objects.
\end{theorem}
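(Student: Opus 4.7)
The plan is to verify each of the claimed properties of the implementation in turn: (i) the metadata-access complexity, (ii) wait-free TM-liveness, (iii) invisible reads, (iv) uninstrumented writes, (v) progressiveness, and (vi) opacity. Properties (i)-(iv) follow almost immediately from inspection of the code: in a fast-path read-only transaction, each $\Read_k(X_j)$ accesses exactly one metadata object (the lock bit $r_j$) and no metadata is touched during $\TryC_k$, giving the $O(|\Rset(T)|)$ bound; fast-path writes touch only $v_j$ (a data object); reading transactions apply only reads to base objects, hence reads are invisible; and each operation performs a bounded number of primitives per invocation, yielding wait-freedom.

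For progressiveness, I would examine every $A_k$ return point in the pseudocode and exhibit a conflicting concurrent transaction (or tracking-set/capacity abort for fast-path). A fast-path $\Read_k(X_j)$ returning $A_k$ must have observed $r_j = 1$, which by the locking discipline can only be set by a concurrent updating slow-path transaction $T_\ell$ with $X_j \in \Wset(T_\ell)$; $\TryC_k$ aborts only via a tracking-set or capacity abort, which by our model already signals a concurrent contending access. On the slow path, a failed $\ms{cas}$ on some $r_j$ during lock acquisition implies a concurrent writer of $X_j$; a failed value-validation implies a concurrent committed writer whose commit interval overlaps $T_k$'s read interval; and a failed $\ms{cas}$ on $v_j$ during commit means a concurrent fast-path writer of $X_j$. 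In each case the conflicting transaction witnesses the conflict required by Definition~\ref{def:prog}.

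The main obstacle is proving \textbf{opacity}. I would construct a serialization by assigning each committed transaction a linearization point as follows: for a committed slow-path updating transaction $T_k$, the point at which $T_k$ successfully $\ms{cas}$es the last lock bit in its write set (the ``lock-point''); for a committed read-only slow-path transaction, the point of its last successful read-set validation; for a committed fast-path updating transaction, the $\ms{commit-cache}$ event; and for a committed fast-path read-only transaction, any point inside its step-contention-free interval (the reads are invisible and can be placed consistently). Aborted transactions are inserted immediately before their abort event with empty effects on state. The serialization $S$ orders transactions by their linearization points and clearly respects real-time order since the chosen points all lie within the transaction's execution interval.

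The remaining work is to show that $S$ is legal, i.e., every $\Read_k(X_j)\to v$ returns the value written by the most recent committed writer of $X_j$ preceding $T_k$ in $S$. The argument splits by the type of the reader. For a slow-path reader, a successful read of $v_j$ together with the checks $r_j = 0$ and successful final validation rule out any committed writer linearizing between $T_k$'s read of $v_j$ and $T_k$'s linearization point; the rollback protocol (which uses $\ms{cas}$ and skips failed slots to tolerate concurrent fast-path writers) ensures that any transient value left in $v_j$ by an aborted slow-path writer is either overwritten by a concurrent fast-path writer --- in which case the reader's value is that of the fast-path writer, or is caught by the subsequent $r_j$-check/validation because the locks are still held. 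For a fast-path reader, the tracking-set mechanism plus the $r_j = 0$ check guarantees that, under Observations~\ref{ob:one} and~\ref{ob:two}, the reader's view of $v_j$ corresponds to a memory state in which no committed writer sits between the linearization point of the value's source and $T_k$'s own linearization point; any intervening committed writer would either have invalidated $\tau_k$ or left $r_j = 1$ when the reader looked. Combining legality with real-time order yields opacity, completing the proof of Theorem~\ref{th:inswrite}.
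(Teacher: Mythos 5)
Your proposal is correct and follows essentially the same route as the paper's proof in Appendix~\ref{app:upper}: the non-opacity properties are verified by direct inspection and a case analysis of the abort points, and opacity is established by assigning linearization/serialization points within each transaction's interval and arguing legality of the induced order. The only difference is cosmetic --- you place a committed slow-path updater's serialization point at its lock-acquisition rather than at the first write of its \emph{release} as the paper does; both choices lie inside the commit interval and both are ruled consistent with readers by the same $r_j$-check and value-validation arguments, so the legality case analysis goes through either way.
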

%
%
%

\section{Providing partial concurrency at low cost}
\label{sec:main3}
We showed that allowing fast-path transactions to run concurrently in
HyTM results in an instrumentation cost that is
proportional to the read-set size of a fast-path transaction.   
But can we run at least \emph{some} transactions 
concurrently with constant instrumentation cost, while still keeping invisible reads?  

Algorithm~\ref{alg:inswrite2} 
implements a \emph{slow-path progressive} opaque HyTM
with invisible reads and wait-free TM-liveness. 
To fast-path transactions, it only provides \emph{sequential}
TM-progress (they are only guaranteed to commit in the absence
of concurrency), but in return the algorithm is only using a single
metadata base object $\ms{fa}$ 
that is read once by a fast-path transaction and accessed twice with a \emph{fetch-and-add}
primitive by an updating slow-path transaction.
Thus, the instrumentation cost of the algorithm is constant.   

Intuitively, $\ms{fa}$ allows fast-path transactions to detect the
existence of concurrent updating slow-path transactions.
Each time an updating slow-path updating transaction tries to commit, it increments
$\ms{fa}$ and once all writes to data base objects are completed (this
part of the algorithm is identical to Algorithm~\ref{alg:inswrite})
or the transaction is aborted,
it decrements $\ms{fa}$. Therefore, $\ms{fa}\neq 0$ means
that at least one slow-path updating transaction is incomplete.  
A fast-path transaction simply checks if $\ms{fa}\neq 0$ in the
beginning and aborts if so, 
otherwise, its code is identical to that in
Algorithm~\ref{alg:inswrite}.
Note that this way, any update of $\ms{fa}$ automatically causes a tracking set abort of any
incomplete fast-path transaction.

%
\begin{theorem}
\label{th:inswrite2}
There exists an opaque HyTM implementation that provides uninstrumented writes, invisible reads,
progressiveness for slow-path transactions, sequential TM-progress for fast-path transactions and wait-free TM-liveness
such that in every its execution $E$, every fast-path transaction
accesses at most one metadata base object.
\end{theorem}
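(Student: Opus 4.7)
The plan is to exhibit Algorithm~\ref{alg:inswrite2} and verify each of the promised properties in turn, since the instrumentation bound and the non-opacity properties are essentially structural observations on the code, while opacity requires a careful serialization argument that exploits the interplay between $\ms{fa}$ and the tracking-set semantics.

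The easy observations come first. \emph{Invisible reads} holds because fast-path reads apply only cached trivial primitives and slow-path reads apply only trivial primitives to $v_j$, $r_j$, and $\ms{fa}$; \emph{uninstrumented writes} holds because the fast-path $\Write_k$ and $\TryC_k$ touch only cached $v_j$'s and the commit-cache primitive. \emph{Wait-free TM-liveness} follows because each t-operation executes a bounded sequence of base-object steps, with no loops waiting on another transaction. The \emph{instrumentation bound} of one metadata access per fast-path transaction follows by inspection: the only metadata object in $\mathbb{M}$ touched by the fast-path code is the single read of $\ms{fa}$ at the start of the transaction. \emph{Sequential TM-progress for fast-path} follows because, when a fast-path transaction runs solo after a t-quiescent configuration, $\ms{fa}=0$, no $r_j$ is set, no concurrent process applies any primitive that would invalidate its tracking set, and hence commit-cache succeeds. \emph{Slow-path progressiveness} is inherited from the correctness argument for Algorithm~\ref{alg:inswrite}: a slow-path $T_k$ aborts only when a \emph{cas} on some $r_j$ or $v_j$ fails, when $r_j$ is observed set during a t-read, or when value-based validation fails, each of which can be charged to a read/write conflict with some concurrent transaction.

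The substantive step is opacity. The key lemma to establish is: if a fast-path transaction $T_k$ commits in an execution $E$, then for every updating slow-path transaction $T_m$ whose $\ms{fa}$-increment and matching $\ms{fa}$-decrement both appear in $E$, either the decrement precedes the first cached access of $T_k$, or the increment follows the cache-commit of $T_k$. This holds because $T_k$ reads $\ms{fa}$ in its first step and thereafter $\ms{fa}$ sits in $\tau_k$ in shared mode; any nontrivial primitive on $\ms{fa}$ by a concurrent process invalidates $\tau_k$ and forces $T_k$ to abort. Consequently, the cached-access interval of any committed fast-path transaction is disjoint from the ``critical section'' of every updating slow-path transaction. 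Using this, I build a serialization $S$ as follows: committed slow-path transactions are linearized at the point they complete value-based validation (as in the proof for Algorithm~\ref{alg:inswrite}); committed fast-path transactions are linearized at their commit-cache step; read-only aborted transactions are placed immediately before their abort event. I then verify legality: for a slow-path read, the argument is identical to Algorithm~\ref{alg:inswrite} because a concurrent fast-path writer either commits before the validation (in which case either the $r_j$ check or value-based validation detects it) or its cache is invalidated by the slow-path's \emph{cas}; for a fast-path read, the key lemma ensures the read observes a value that is consistent with the state of shared memory at the linearization point, and any later slow-path commit would have incremented $\ms{fa}$ and aborted $T_k$.

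The main obstacle will be the subcase of opacity where a committed fast-path transaction $T_k$ runs concurrently with another updating transaction $T_m$ whose writes have been installed but whose tryC has not yet decremented $\ms{fa}$ (or, symmetrically, a fast-path writer whose rollbacks leave transient states visible). Here I must argue carefully that either the tracking-set mechanism has already aborted $T_k$, or $T_m$'s writes are ordered consistently with $T_k$'s reads in $S$; this is where the same ``transient write then rollback'' reasoning as in Algorithm~\ref{alg:inswrite} needs to be re-checked, exploiting that any successful $v_j$ update by $T_m$ invalidates the shared-mode tracking-set entry on $v_j$ of any overlapping fast-path reader. Once this case is settled, real-time order is preserved by construction because every linearization point lies within the transaction's execution interval, which completes the proof.
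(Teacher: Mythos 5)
Your proposal is correct and follows essentially the same route as the paper: exhibit Algorithm~\ref{alg:inswrite2}, check the structural properties (single $\ms{fa}$ access, uninstrumented writes, invisible reads, wait-freedom, the two progress conditions) by inspection, and reduce opacity to the argument for Algorithm~\ref{alg:inswrite}. Your explicit disjointness lemma---that a committed fast-path transaction's cached interval cannot overlap any slow-path increment/decrement window of $\ms{fa}$---is exactly the observation the paper states informally ("any update of $\ms{fa}$ automatically causes a tracking set abort of any incomplete fast-path transaction") and then leaves implicit in its appeal to Lemma~\ref{lm:opacity}, so you are if anything slightly more detailed than the paper's own proof.
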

%

\section{Related work}
\label{sec:rel}
%
The notions of \emph{opacity} and \emph{progressiveness} for STMs,
adopted in this paper for HyTMs,  were 
introduced in~\cite{GK08-opacity} and~\cite{GK09-progressiveness}, respectively.

Uninstrumented HTMs may be viewed as being inherently \emph{disjoint-access parallel}, 
a notion formalized in~\cite{israeli-disjoint, AHM09}. 
As such, some of the techniques used in Theorems~\ref{instrumentation} and~\ref{linearlowerbound} 
resemble those used in~\cite{OFTM, tm-book, AHM09, attiyaH13}.
The software component of the HyTM algorithms presented in this paper is inspired by progressive STM
implementations like \cite{DSS06,norec,KR11-TR} and is subject to the lower bounds for progressive STMs
established in \cite{GK09-progressiveness,attiyaH13,tm-book,KR11-TR}. 

Circa 2005, several papers introduced HyTM implementations~\cite{unboundedhtm1, damronhytm, kumarhytm}
that integrated HTMs with variants of \emph{DSTM}~\cite{HLM+03}.
These implementations provide nontrivial concurrency between hardware
and software transactions, by instrumenting a hardware transaction's t-operations with accesses to metadata
to detect conflicting software transactions. Thus, they impose per-access instrumentation overhead on hardware transactions,
which as we prove is inherent to such HyTM designs (Theorem~\ref{linearlowerbound}).
While these HyTM implementations satisfy progressiveness, they do not provide uninstrumented writes.
However, the HyTM implementation described in Algorithm~\ref{alg:inswrite} is provably opaque, satisfies progressiveness and
provides invisible reads. Additionally, it uses uninstrumented writes and is optimal with respect
to hardware code instrumentation.

Experiments suggest that the cost of concurrency detection is a significant bottleneck 
for many HyTM implementations~\cite{MS13}, 
which serves as a major motivation for our definition of instrumentation. 
Implementations like \emph{PhTM}~\cite{phasedtm} and \emph{HybridNOrec}~\cite{hybridnorec}
overcome the per-access instrumentation cost of \cite{damronhytm,kumarhytm} by realizing that if 
one is prepared to sacrifice progress, hardware transactions need
instrumentation only at the boundaries of transactions to detect pending software transactions.
Inspired by this observation, our HyTM implementation described in Algorithm~\ref{alg:inswrite2} overcomes the lower bound of
Theorem~\ref{linearlowerbound} by allowing hardware readers to abort due to a concurrent software writer, but
maintains progressiveness for software transactions, unlike \cite{phasedtm,hybridnorec,MS13}.

Recent work has investigated alternatives to STM fallback, such as sandboxing~\cite{ALM14,CTGM14}, and fallback to \emph{reduced} hardware transactions~\cite{MS13}. These proposals are not currently covered by our framework, although we believe that our model can be extended to incorporate such techniques.

Detailed coverage on HyTM implementations and integration with HTM proposals can be found in \cite{HLR10}.
An overview of popular HyTM designs and a comparison of the TM properties and instrumentation overhead they incur
may be found in \cite{riegel-thesis}.
%


\section{Concluding remarks}
\label{sec:disc}
We have introduced an analytical model for hybrid transactional memory
that captures the notion of cached accesses as performed by hardware transactions.
We then derived lower and upper bounds in this model to capture the inherent tradeoff between the degree of concurrency 
allowed between hardware and software transactions
and the instrumentation overhead introduced on the hardware.
In a nutshell, our results say that it is impossible to completely forgo instrumentation in a sequentially consistent HyTM, and 
that any opaque HyTM implementation providing non-trivial progress either has to pay a \emph{linear} number of metadata accesses, or will have to allow slow-path transactions to \emph{abort} fast-path operations.  

Several papers  have recently proposed the use of both direct \emph{and} cached accesses 
within the same transaction to reduce the instrumentation 
overhead~\cite{riegel-thesis,hynorecriegel, kumarhytm}, although, to
the best of our knowledge, no industrial HTM currently supports this functionality.
Another recent approach proposed \emph{reduced hardware transactions}~\cite{MS13}, 
 where part of the slow-path is executed using a short hardware transaction, which allows to
 eliminate part of the instrumentation from the hardware fast-path.
We believe that our model can be extended to incorporate
such schemes as well, and we conjecture that the lower bounds established in Theorems~\ref{instrumentation} and \ref{linearlowerbound}
would also hold in the extended model.
Future work also includes 
deriving lower bounds for HyTMs satisfying wider criteria of
consistency and progress, and exploring other complexity metrics.

\newpage
\bibliography{references}

 \appendix
 \section{Progressive opaque HyTM implementation that provides uninstrumented writes and invisible reads}
\label{app:upper}
\begin{algorithm}[!h]
\caption{Progressive opaque HyTM implementation that provides uninstrumented writes and invisible reads; code for process $p_i$
executing transaction $T_k$}
\label{alg:inswrite}
\begin{algorithmic}[1]
  	\begin{multicols}{2}
  	{
  	\footnotesize
	\Part{Shared objects}{
		\State $v_j \in \mathbb{D}$, for each t-object $X_j$ 
		\State ~~~~~allows reads, writes and cas
		\State $r_{j} \in \mathbb{M}$, for each t-object $X_j$
		\State ~~~~~allows reads, writes and cas
	}\EndPart	
	\Statex
	\Part{Local objects}{
		\State $\ms{Lset}(T_k) \subseteq \Wset(T_k)$, initially empty
		\State $\ms{Oset}(T_k) \subseteq \Wset(T_k)$, initially empty
	}\EndPart
	\Statex
	\textbf{Code for slow-path transactions}
	\Statex
	\Part{\Read$_k(X_j)$}\quad\Comment{slow-path}{
		\If{$X_j \not\in \Rset_k$}
		  \State $[\textit{ov}_j,k_j] := \Read(v_j)$ \label{line:read2}
		  \State $\Rset(T_k) := \Rset(T_k)\cup\{X_j,[\textit{ov}_j,k_j]\}$ \label{line:rset}
		  \If{$r_j\neq 0$} \label{line:abort0}
		    \Return $A_k$ \EndReturn
		  \EndIf
		  \If{$\exists X_j \in Rset(T_k)$:$(\textit{ov}_j,k_j)\neq \Read(v_j)$} \label{line:valid}
			\Return $A_k$ \EndReturn
		  \EndIf
%
		  \Return $\textit{ov}_j$ \EndReturn
		\Else
		    
		  \State $\textit{ov}_j :=\Rset(T_k).\lit{locate}(X_j)$
		  \Return $\textit{ov}_j$ \EndReturn
		\EndIf
   	 }\EndPart
	\Statex
	\Part{\Write$_k(X_j,v)$}\quad\Comment{slow-path}{
		
			\State $(\textit{ov}_j,k_j) := \Read(v_j)$
			\State $\textit{nv}_j := v$
			\State $\Wset(T_k) := \Wset(T_k)\cup\{X_j,[\textit{ov}_j,k_j]\}$
			\Return $\ok$ \EndReturn
		
   	}\EndPart
	\Statex
	
	\Part{\TryC$_k$()}\quad\Comment{slow-path}{
		\If{$\Wset(T_k)= \emptyset$}
			\Return $C_k$ \EndReturn \label{line:return}
		\EndIf
		\State locked := $\lit{acquire}(\Wset(T_k))$\label{line:acq} 
		\If{$\neg$ locked} \label{line:abort2} 
	 		\Return $A_k$ \EndReturn
	 	\EndIf
		\If{$\lit{isAbortable}()$} \label{line:abort3}
			\State $\lit{release}( \ms{Lset}(T_k))$ 
			\Return $A_k$ \EndReturn
		\EndIf
		\ForAll{$X_j \in \Wset(T_k)$}
	 		 \If{ $v_j.\lit{cas}([ov_j,k_j],[\textit{nv}_j,k])$} \label{line:write}
			      \State $\ms{Oset}(T_k):=\ms{Oset}(T_k)\cup \{X_j\}$
			 \Else
			 
			      \State $\lit{undo}(\ms{Oset}(T_k))$
			 
			 \EndIf
			 
	 	\EndFor		
		\State $\lit{release}(\Wset(T_k))$   \label{line:rel}		
   		\Return $C_k$ \EndReturn
   	 }\EndPart		
	 
 	\newpage
 	\Part{Function: $\lit{acquire}(Q$)}{
   		\ForAll{$X_j \in Q$}	
			\If{$r_j.\lit{cas}(0,1)$} \label{line:acq1}
			  \State $\ms{Lset}(T_k) := \ms{Lset}(T_k)\cup \{X_j\}$
			\Else
			  \State $\lit{release}(\ms{Lset}(T_k))$
			  \Return $\false$ \EndReturn
			\EndIf
			
		\EndFor
		
		\Return $\true$ \EndReturn 
	}\EndPart		
	 \Statex
	 \Part{Function: $\lit{release}(Q)$}{
  		\ForAll{$X_j \in Q$}	
 			\State $r_j.\Write(0)$ \label{line:rel1}
		\EndFor
		\Return $ok$ \EndReturn
	}\EndPart
	
	\Statex
	
	\Part{Function: $\lit{undo}(\ms{Oset}(T_k))$}{
		\ForAll{$X_j \in \ms{Oset}(T_k)$}
		    \State $v_j.\lit{cas}([nv_j,k],[ov_j,k_j])$
		 \EndFor
		\State $\lit{release}( \ms{Wset}(T_k))$ 
		\Return $A_k$ \EndReturn
	 }\EndPart
	
	\Statex
	 \Part{Function: $\lit{isAbortable()}$ }{
		\If{$\exists X_j \in \Rset(T_k)$: $X_j\not\in \Wset(T_k)\wedge \Read(r_{j}) \neq 0$}
			\Return $\true$ \EndReturn
		\EndIf
		\If{$\exists X_j \in Rset(T_k)$:$[\textit{ov}_j,k_j]\neq \Read(v_j)$} \label{line:valid}
			\Return $\true$ \EndReturn
		\EndIf
		\Return $\false$ \EndReturn
	}\EndPart
	\Statex
	\Statex
	\textbf{Code for fast-path transactions}
	\Statex
	\Part{$\textit{read}_k(X_j)$}\quad\Comment{fast-path}{
		\State $[\textit{ov}_j,k_j] := \Read(v_j)$ \Comment{cached read} \label{line:lin1}
		
		\If{$\Read(r_{j}) \neq 0$}  
		\label{line:hread}
			\Return $A_k$ \EndReturn
		\EndIf
		
		\Return $\textit{ov}_j$ \EndReturn
		
   	 }\EndPart
	\Statex
	\Part{$\textit{write}_k(X_j,v)$}{\quad\Comment{fast-path}
		\State $\Write(v_j,[\textit{nv}_j,k])$ \Comment{cached write} \label{line:lin2}
		\Return $\ok$ \EndReturn
		
   	}\EndPart
	\Statex
	
	\Part{$\textit{tryC}_k$()}{\quad\Comment{fast-path}
		\State $\ms{commit-cache}_i$ \label{line:lin3} \Comment{returns $C_k$ or $A_k$}
   	 }\EndPart		

	}
	\end{multicols}
  \end{algorithmic}
\end{algorithm}
Let $E$ be a 
t-sequential execution.
For every operation $\Read_k(X)$ in $E$,
we define the \emph{latest written value} of $X$ as follows:
(1) If $T_k$ contains a $\Write_k(X,v)$ preceding $\Read_k(X)$,
then the latest written value of $X$ is the value of the latest such write to $X$.
(2) Otherwise, if $E$ contains a $\Write_m(X,v)$,
$T_m$ precedes $T_k$, and $T_m$ commits in $E$,
then the latest written value of $X$ is the value
of the latest such write to $X$ in $E$.
(This write is well-defined since $E$ starts with $T_0$ writing to
all t-objects.)
We say that $\Read_k(X)$ is \emph{legal} in a t-sequential execution $E$ if it returns the
latest written value of $X$, and $E$ is \emph{legal}
if every $\Read_k(X)$ in $H$ that does not return $A_k$ is legal in $E$.

For a history $H$, a \emph{completion of $H$}, denoted ${\bar H}$,
is a history derived from $H$ as follows:
\begin{enumerate}
\item 
for every incomplete t-operation $op_k$ that is a $\Read_k \vee \Write_k$ of $T_k \in \txns(H)$ in $H$, 
insert $A_k$ somewhere after the last event of $T_k$ in $E$;
otherwise if $op_k=\TryC_k$, insert $A_k$ or $C_k$ somewhere after the last event of $T_k$ 
\item 
for every complete transaction $T_k$ in the history derived in (1) that is not t-complete, 
insert $\mathit{tryC}_k\cdot A_k$ after the 
last event of transaction $T_k$.
%
\end{enumerate}
\begin{definition}[Opacity and strict serializability]
\label{def:opaque}
A finite history $H$ is \emph{opaque} if there
is a legal t-complete t-sequential history $S$,
such that
for any two transactions $T_k,T_m \in \txns(H)$,
if $T_k$ precedes $T_m$ in real-time order, then $T_k$ precedes $T_m$ in $S$, and
$S$ is equivalent to a completion of $H$~\cite{tm-book}.

A finite history $H$ is \emph{strictly serializable} if there
is a legal t-complete t-sequential history $S$,
such that
for any two transactions $T_k,T_m \in \txns(H)$,
if $T_k \prec_H^{RT} T_m$, then $T_k$ precedes $T_m$ in $S$, and
$S$ is equivalent to $\ms{cseq}(\bar H)$, where $\bar H$ is some
completion of $H$ and $\ms{cseq}(\bar H)$ is the subsequence of $\bar H$ reduced to
committed transactions in $\bar H$.

We refer to $S$ as a \emph{serialization} of $H$.
\end{definition}
\begin{lemma}
\label{lm:opacity}
Algorithm~\ref{alg:inswrite} implements an opaque TM.
\end{lemma}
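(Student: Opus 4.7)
The plan is to construct, for every finite history $H$ exported by an execution of Algorithm~\ref{alg:inswrite}, an equivalent completion $\bar H$ together with a legal t-complete t-sequential history $S$ whose order extends the real-time order of $H$. First I would define the completion $\bar H$ by (i) aborting every pending $\Read$ or $\Write$ of every incomplete transaction, (ii) committing every fast-path transaction whose $\ms{commit-cache}$ has returned \textit{commit}, (iii) committing every slow-path updating transaction that has already performed at least one successful \emph{cas} on a $v_j$ in its $\TryC$, and (iv) appending $\TryC_k \cdot A_k$ to every other complete but not t-complete transaction. Next I assign each transaction $T_k$ in $\bar H$ a linearization point $\ell(T_k)$ lying within $T_k$'s execution interval: (a) for a committed slow-path updating $T_k$, at the final step of its $\lit{isAbortable}$ check; (b) for a committed slow-path read-only $T_k$, at the last base-object step of its last successful validation; (c) for a committed fast-path $T_k$, at the $\ms{commit-cache}$ event; (d) for every aborted transaction, at the event causing the abort. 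Let $S$ order the transactions of $\bar H$ by their linearization points. Because every $\ell(T_k)$ lies inside $T_k$'s execution interval, $S$ extends the real-time order of $\bar H$.

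To show that $S$ is legal, I would establish two structural invariants. First, for every committed slow-path updating $T_k$ and every $X_j \in \Wset(T_k)$, the lock bit $r_j$ is held throughout the interval from $\ell(T_k)$ until $T_k$'s final release of $r_j$, and during that interval $v_j$ changes from $[ov_j,k_j]$ to $[nv_j,k]$ via $T_k$'s \emph{cas}. Second, for every committed fast-path $T_m$, the tracking set of $T_m$'s process is valid at its $\ms{commit-cache}$, so no other process applied a nontrivial primitive to any $v_j$ read or written by $T_m$ between that read/write and $\ell(T_m)$. Using these invariants I would perform a case analysis on a returning $\Read_m(X_j) \to v$ and the unique transaction $T_k$ whose serialization-time write produced $v$: slow-path/slow-path, where legality follows from the lock-bit check at line~\ref{line:abort0} and the value-based validation in $\lit{isAbortable}$; slow-path reader with fast-path writer, where $T_k$'s $\ms{commit-cache}$ installs all of $T_k$'s writes atomically so the reader's validation is applied to a consistent snapshot; and any fast-path reader, where cache invalidation forces $T_m$ to abort if any $v_j$ it has accessed was modified between its cached read and $\ell(T_m)$.

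The main obstacle is the non-atomicity of a slow-path updating transaction's $\TryC$: between $\ell(T_k)$ (just after $\lit{isAbortable}$) and the release of locks, the \emph{cas} operations on $v_j$, $j\in\Wset(T_k)$, are installed one by one, and a concurrent reader of some $X_j \in \Wset(T_k)$ might in principle observe a partially-installed state. The delicate argument is that no such reader can return a value inconsistent with $S$: a slow-path reader reads $r_j=1$ at line~\ref{line:abort0} and aborts, while a fast-path reader that observes one of the newly-installed $v_j$ values either subsequently finds $r_j\neq 0$ and aborts, or has its tracking set invalidated by one of $T_k$'s \emph{cas} operations, since the cached read placed $v_j$ into the tracking set and the \emph{cas} is a concurrent nontrivial primitive on the same base object. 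A second subtlety is concurrent fast-path writers: such a $T_f$ updating $v_j$ forces $T_k$'s \emph{cas} to fail, after which $T_k$ enters $\lit{undo}$ and is placed in $S$ as aborted; the use of \emph{cas} rather than blind writes in $\lit{undo}$ is essential, since it prevents $T_k$ from clobbering $T_f$'s committed value. Composing these cases establishes legality of $S$ and hence opacity.
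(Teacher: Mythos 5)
Your overall architecture---a completion, per-transaction serialization points placed inside each transaction's interval, and a legality case analysis driven by the lock bits, value-based validation, and tracking-set invalidation---is the same as the paper's, and the two structural invariants you isolate are the right ones. However, two of your choices of serialization point are genuinely wrong, and each admits a concrete counterexample in a reachable execution. Placing a committed slow-path updating transaction $T_k$ at the end of its \textit{isAbortable} check is too early. Consider a read-only slow-path $T_j$ that reads $X_1\in\Wset(T_k)$ and obtains the old value before $T_k$ acquires $r_1$; then $T_k$ acquires its locks and passes \textit{isAbortable}; then $T_j$ performs a read of some $X_2\notin\Wset(T_k)$, whose value-based validation still succeeds because $T_k$ has not yet executed the \emph{cas} on $v_1$ (the slow-path read only checks the lock bit of the object currently being read, not of previously read objects); then $T_k$ installs $v_1$, releases, and commits, and $T_j$ commits as a read-only transaction. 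Your $S$ orders $T_k$ before $T_j$, yet $T_j$ returned the old value of $X_1$, so $S$ is not legal. The paper avoids this by serializing $T_k$ at the first write performed by \textit{release} (Line~\ref{line:rel}), i.e.\ after all of its \emph{cas} installations, which is provably later than the serialization point of any reader that saw the old values.

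Symmetrically, placing an aborted transaction at ``the event causing the abort'' is too late: opacity requires the non-abort reads of aborted transactions to be legal as well. If $T_j$ reads $X_1\rightarrow v$, then $T_i$ commits $X_1:=v'$, and then $T_j$'s next t-read fails validation and returns $A_j$, your rule serializes $T_j$ after $T_i$, making $T_j$'s earlier read of $X_1$ illegal; the paper instead serializes aborted and read-only transactions at the linearization point of their last t-read that returned a non-abort value, which lands before $T_i$'s commit. Both defects are repairs of bookkeeping rather than of the main case analysis (your handling of partially installed write sets, tracking-set invalidation, and the \emph{cas}-based \textit{undo} matches the paper's), but as stated your $S$ fails legality, so the proof does not go through without these corrections.
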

\begin{proof}
Let $E$ by any execution of Algorithm~\ref{alg:inswrite}. 
Since opacity is a safety property, it is sufficient to prove that every finite execution is opaque~\cite{icdcs-opacity}.
Let $<_E$ denote a total-order on events in $E$.

Let $H$ denote a subsequence of $E$ constructed by selecting
\emph{linearization points} of t-operations performed in $E$.
The linearization point of a t-operation $op$, denoted as $\ell_{op}$ is associated with  
a base object event or an event performed during 
the execution of $op$ using the following procedure. 

\vspace{1mm}\noindent\textbf{Completions.}
First, we obtain a completion of $E$ by removing some pending
invocations or adding responses to the remaining pending invocations
as follows:
\begin{itemize}
\item
incomplete $\Read_k$, $\Write_k$ operation performed by a slow-path transaction $T_k$ is removed from $E$;
an incomplete $\TryC_k$ is removed from $E$ if $T_k$ has not performed any write to a base object $r_j$; $X_j \in \Wset(T_k)$
in Line~\ref{line:write}, otherwise it is completed by including $C_k$ after $E$.
\item
every incomplete $\Read_k$, $\TryA_k$, $\Write_k$ and $\TryC_k$ performed by a fast-path transaction $T_k$ is removed from $E$.
\end{itemize}
\vspace{1mm}\noindent\textbf{Linearization points.}
Now a linearization $H$ of $E$ is obtained by associating linearization points to
t-operations in the obtained completion of $E$.
For all t-operations performed a slow-path transaction $T_k$, linearization points as assigned as follows:
\begin{itemize}
\item For every t-read $op_k$ that returns a non-A$_k$ value, $\ell_{op_k}$ is chosen as the event in Line~\ref{line:read2}
of Algorithm~\ref{alg:inswrite}, else, $\ell_{op_k}$ is chosen as invocation event of $op_k$
\item For every $op_k=\Write_k$ that returns, $\ell_{op_k}$ is chosen as the invocation event of $op_k$
\item For every $op_k=\TryC_k$ that returns $C_k$ such that $\Wset(T_k)
  \neq \emptyset$, $\ell_{op_k}$ is associated with the first write to a base object performed by $\lit{release}$
  when invoked in Line~\ref{line:rel}, 
  else if $op_k$ returns $A_k$, $\ell_{op_k}$ is associated with the invocation event of $op_k$
\item For every $op_k=\TryC_k$ that returns $C_k$ such that $\Wset(T_k) = \emptyset$, 
$\ell_{op_k}$ is associated with Line~\ref{line:return}
\end{itemize}
For all t-operations performed a fast-path transaction $T_k$, linearization points as assigned as follows:
\begin{itemize}
\item For every t-read $op_k$ that returns a non-A$_k$ value, $\ell_{op_k}$ is chosen as the event in Line~\ref{line:lin1}
of Algorithm~\ref{alg:inswrite}, else, $\ell_{op_k}$ is chosen as invocation event of $op_k$
\item
For every $op_k$ that is a $\TryC_k$, $\ell_{op_k}$ is the $\ms{commit-cache}_k$ primitive invoked by $T_k$
\item
For every $op_k$ that is a $\Write_k$, $\ell_{op_k}$ is the event in Line~\ref{line:lin2}.
\end{itemize}
$<_H$ denotes a total-order on t-operations in the complete sequential history $H$.

\vspace{1mm}\noindent\textbf{Serialization points.}
The serialization of a transaction $T_j$, denoted as $\delta_{T_j}$ is
associated with the linearization point of a t-operation 
performed by the transaction.

We obtain a t-complete history ${\bar H}$ from $H$ as follows. 
A serialization $S$ is obtained by associating serialization points to transactions in ${\bar H}$ as follows:
for every transaction $T_k$ in $H$ that is complete, but not t-complete, 
we insert $\textit{tryC}_k\cdot A_k$ immediately 
after the last event of $T_k$ in $H$. 
\begin{itemize}
\item If $T_k$ is an updating transaction that commits, then $\delta_{T_k}$ is $\ell_{\TryC_k}$
\item If $T_k$ is a read-only or aborted transaction,
then $\delta_{T_k}$ is assigned to the linearization point of the last t-read that returned a non-A$_k$ value in $T_k$
\end{itemize}
$<_S$ denotes a total-order on transactions in the t-sequential history $S$.
\begin{claim}
\label{cl:seq}
If $T_i \prec_{H}T_j$, then $T_i <_S T_j$
\end{claim}
\begin{proof}
This follows from the fact that for a given transaction, its
serialization point is chosen between the first and last event of the transaction
implying if $T_i \prec_{H} T_j$, then $\delta_{T_i} <_{E} \delta_{T_j}$ implies $T_i <_S T_j$.
\end{proof}
\begin{claim}
\label{cl:readfrom}
$S$ is legal.
\end{claim}
\begin{proof}
We claim that for every $\Read_j(X_m) \rightarrow v$, there exists some slow-path transaction $T_i$ (or resp. fast-path)
that performs $\Write_i(X_m,v)$ and completes the event in Line~\ref{line:write} (or resp. Line~\ref{line:lin2}) such that
$\Read_j(X_m) \not\prec_H^{RT} \Write_i(X_m,v)$.

Suppose that $T_i$ is a slow-path transaction:
since $\Read_j(X_m)$ returns the response $v$, the event in Line~\ref{line:read2}
succeeds the event in Line~\ref{line:write} performed by $\TryC_i$. 
Since $\Read_j(X_m)$ can return a non-abort response only after $T_i$ writes $0$ to $r_m$ in
Line~\ref{line:rel1}, $T_i$ must be committed in $S$.
Consequently,
$\ell_{\TryC_i} <_E \ell_{\Read_j(X_m)}$.
Since, for any updating
committing transaction $T_i$, $\delta_{T_i}=\ell_{\TryC_i}$, it follows that
$\delta_{T_{i}} <_E \delta_{T_{j}}$.

Otherwise if $T_i$ is a fast-path transaction, then clearly $T_i$ is a committed transaction in $S$.
Recall that $\Read_j(X_m)$ can read $v$ during the event in Line~\ref{line:read2}
only after $T_i$ applies the $\ms{commit-cache}$ primitive.
By the assignment of linearization points, 
$\ell_{\TryC_i} <_E \ell_{\Read_j(X_m)}$ and thus, $\delta_{T_{i}} <_E \ell_{\Read_j(X_m)}$.

Thus, to prove that $S$ is legal, it suffices to show that  
there does not exist a
transaction $T_k$ that returns $C_k$ in $S$ and performs $\Write_k(X_m,v')$; $v'\neq v$ such that $T_i <_S T_k <_S T_j$. 

$T_i$ and $T_k$ are both updating transactions that commit. Thus, 
\begin{center}
($T_i <_S T_k$) $\Longleftrightarrow$ ($\delta_{T_i} <_{E} \delta_{T_k}$) \\
($\delta_{T_i} <_{E} \delta_{T_k}$) $\Longleftrightarrow$ ($\ell_{\TryC_i} <_{E} \ell_{\TryC_k}$) 
\end{center}
Since, $T_j$ reads the value of $X$ written by $T_i$, one of the following is true:
$\ell_{\TryC_i} <_{E} \ell_{\TryC_k} <_{E} \ell_{\Read_j(X_m)}$ or
$\ell_{\TryC_i} <_{E} \ell_{\Read_j(X_m)} <_{E} \ell_{\TryC_k}$.

Suppose that $\ell_{\TryC_i} <_{E} \ell_{\TryC_k} <_{E} \ell_{\Read_j(X_m)}$.

(\textit{Case \RNum{1}:}) $T_i$ and $T_k$ are slow-path transactions.

Thus, $T_k$ returns a response from the event in Line~\ref{line:acq} 
before the read of the base object associated with $X_m$ by $T_j$ in Line~\ref{line:read2}. 
Since $T_i$ and $T_k$ are both committed in $E$, $T_k$ returns \emph{true} from the event in
Line~\ref{line:acq} only after $T_i$ writes $0$ to $r_{m}$ in Line~\ref{line:rel1}.

If $T_j$ is a slow-path transaction, 
recall that $\Read_j(X_m)$ checks if $X_j$ is locked by a concurrent transaction, 
then performs read-validation (Line~\ref{line:abort0}) before returning a matching response. 
We claim that $\Read_j(X_m)$ must return $A_j$ in any such execution.

Consider the following possible sequence of events: 
$T_k$ returns \emph{true} from \emph{acquire} function invocation, 
updates the value of $X_m$ to shared-memory (Line~\ref{line:write}), 
$T_j$ reads the base object $v_m$ associated with $X_m$, 
$T_k$ releases $X_m$ by writing $0$ to $r_{m}$ and finally $T_j$ performs the check in Line~\ref{line:abort0}. 
But in this case, $\Read_j(X_m)$ is forced to return the value $v'$ written by $T_m$--- 
contradiction to the assumption that $\Read_j(X_m)$ returns $v$. 

Otherwise suppose that $T_k$ acquires exclusive access to $X_m$ by writing $1$ to $r_{m}$ and returns \emph{true}
from the invocation of \emph{acquire}, updates $v_m$ in Line~\ref{line:write}), 
$T_j$ reads $v_m$, $T_j$ performs the check in Line~\ref{line:abort0} and finally $T_k$ 
releases $X_m$ by writing $0$ to $r_{m}$. 
Again, $\Read_j(X_m)$ must return $A_j$ since $T_j$ reads that $r_{m}$ is $1$---contradiction.

A similar argument applies to the case that $T_j$ is a fast-path transaction.
Indeed, since every \emph{data} base object read by $T_j$ is contained in its tracking set, if any concurrent
transaction updates any t-object in its read set, $T_j$ is aborted immediately by our model(cf. Section~\ref{sec:hytm}).

Thus, $\ell_{\TryC_i} <_E \ell_{\Read_j(X)} <_{E} \ell_{\TryC_k}$.

(\textit{Case \RNum{2}:}) $T_i$ is a slow-path transaction and $T_k$ is a fast-path transaction.
Thus, $T_k$ returns $C_k$ 
before the read of the base object associated with $X_m$ by $T_j$ in Line~\ref{line:read2}, but after the response
of \emph{acquire} by $T_i$ in Line~\ref{line:acq}.
Since $\Read_j(X_m)$ reads the value of $X_m$ to be $v$ and not $v'$, $T_i$ performs the \emph{cas}
to $v_m$ in Line~\ref{line:write} after the $T_k$ performs the $\ms{commit-cache}$ primitive (since if
otherwise, $T_k$ would be aborted in $E$).
But then the \emph{cas} on $v_m$ performed by $T_i$ would return $\false$ and $T_i$ would return $A_i$---contradiction.

(\textit{Case \RNum{3}:}) $T_k$ is a slow-path transaction and $T_i$ is a fast-path transaction.
This is analogous to the above case.

(\textit{Case \RNum{4}:}) $T_i$ and $T_k$ are fast-path transactions.
Thus, $T_k$ returns $C_k$ 
before the read of the base object associated with $X_m$ by $T_j$ in Line~\ref{line:read2}, but before $T_i$
returns $C_i$ (this follows from Observations~\ref{ob:one} and \ref{ob:two}).
Consequently, $\Read_j(X_m)$ must read the value of $X_m$ to be $v'$ and return $v'$---contradiction.

We now need to prove that $\delta_{T_{j}}$ indeed precedes $\ell_{\TryC_k}$ in $E$.

Consider the two possible cases:
\begin{itemize}
\item
Suppose that $T_j$ is a read-only transaction. 
Then, $\delta_{T_j}$ is assigned to the last t-read performed by $T_j$ that returns a non-A$_j$ value. 
If $\Read_j(X_m)$ is not the last t-read that returned a non-A$_j$ value, then there exists a $\Read_j(X')$ such that 
$\ell_{\Read_j(X_m)} <_{E} \ell_{\TryC_k} <_E \ell_{read_j(X')}$.
But then this t-read of $X'$ must abort by performing the checks in Line~\ref{line:abort0} or incur a tracking set abort---contradiction.
\item
Suppose that $T_j$ is an updating transaction that commits, then $\delta_{T_j}=\ell_{\TryC_j}$ which implies that
$\ell_{read_j(X)} <_{E} \ell_{\TryC_k} <_E \ell_{\TryC_j}$. Then, $T_j$ must neccesarily perform the checks
in Line~\ref{line:abort3} and return $A_j$ or incur a tracking set abort---contradiction to the assumption that $T_j$ is a committed transaction.
\end{itemize}
The proof follows.
\end{proof}
The conjunction of Claims~\ref{cl:seq} and \ref{cl:readfrom} establish that Algorithm~\ref{alg:inswrite} is opaque.
\end{proof}
\begin{theorem}[Theorem~\ref{th:inswrite}]
There exists an opaque HyTM implementation $\mathcal{M}$ that provides uninstrumented writes, invisible reads, progressiveness
and wait-free TM-liveness such that
in every execution $E$ of $\mathcal{M}$, every read-only fast-path transaction $T\in \ms{txns}(E)$
accesses $O(|\Rset(T)|)$ distinct metadata base objects.
\end{theorem}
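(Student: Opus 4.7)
The plan is to verify each of the five properties stated in the theorem for Algorithm~\ref{alg:inswrite}: opacity, progressiveness, wait-free TM-liveness, the fact that reads are invisible and writes uninstrumented, and the linear-in-$|\Rset(T)|$ metadata access bound for read-only fast-path transactions. The last three follow by direct inspection of the pseudocode: a slow-path reader applies only trivial primitives (plain reads of $v_j$ and $r_j$), a fast-path reader/writer issues only cached primitives to $v_j$ plus at most one read of $r_j$ per t-read, and none of the loops in the algorithm can spin unboundedly (every acquire, release, undo and validation iterates over a finite data/write set). Counting metadata accesses in a read-only fast-path transaction $T$: the only metadata touched is the lock bit $r_j$ read once per $\Read(X_j)$ in Line~\ref{line:hread}, giving exactly $|\Rset(T)|$ accesses, which is $O(|\Rset(T)|)$ as required.

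The main work is to prove opacity, and I would follow the standard linearization-point technique laid out in Definition~\ref{def:opaque}. First I would define a completion $\bar H$ of the given execution $E$ by aborting unfinished slow-path operations that have not yet written to any $r_j$, committing slow-path $\TryC_k$ invocations that have performed their first lock-release event, and discarding all events of incomplete or aborted fast-path transactions (justified by Observations~\ref{ob:one} and~\ref{ob:two}). Then I would assign linearization points as follows: a slow-path t-read at the successful $\Read(v_j)$ in Line~\ref{line:read2}; a committing slow-path $\TryC_k$ at the first release-event in Line~\ref{line:rel} for updating transactions (or at Line~\ref{line:return} for read-only ones); a fast-path t-read at the cached read in Line~\ref{line:lin1}; a fast-path $\TryC_k$ at the $\ms{commit\text{-}cache}$ event in Line~\ref{line:lin3}; aborted transactions at the point of their first abort-inducing event. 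The serialization order $S$ is then obtained by ordering transactions according to these points, using the commit point for updating committed transactions and the last non-aborting t-read for read-only/aborted ones.

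With $S$ defined, two claims suffice. The first is that $S$ respects real-time order, which is immediate because each transaction's serialization point lies between its first and last event in $E$. The second, and harder, claim is that every t-read is legal in $S$, i.e.\ returns the value of the most recent preceding committed write in $S$. I expect this to be the main obstacle, because the interaction between fast-path cached writes (installed atomically at $\ms{commit\text{-}cache}$) and slow-path two-phase locked writes (which hold $r_j=1$ between lock acquisition and release) produces several cases. The argument is by contradiction: suppose $\Read_j(X_m)\to v$ is installed by $T_i$ but some other committed $T_k$ writes $v'\neq v$ with $T_i<_S T_k<_S T_j$. I would split on whether each of $T_i,T_k$ is slow- or fast-path. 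When $T_k$ is slow-path, the lock-bit-then-revalidate protocol in Lines~\ref{line:abort0}--\ref{line:valid} (for a slow-path reader) or the tracking-set invalidation triggered by $T_k$'s nontrivial primitive on $r_m$ or $v_m$ (for a fast-path reader) forces $T_j$ to abort, contradicting its commit in $S$. When $T_k$ is fast-path, the $\ms{commit\text{-}cache}$ of $T_k$ either precedes $T_i$'s CAS on $v_m$ (making that CAS fail and $T_i$ abort via $\lit{undo}$) or is invalidated by $T_i$'s earlier accesses, again yielding a contradiction. Symmetrically, a reader commit between $T_i$ and $T_k$ is precluded because $T_j$'s post-read validation in Line~\ref{line:valid} or its fast-path tracking set would catch $T_k$'s installation.

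Finally I would verify progressiveness. A slow-path transaction can only abort (i) when $\lit{acquire}$ fails a CAS in Line~\ref{line:acq1}, witnessing a concurrent write-conflicting transaction holding $r_j$, (ii) when $\lit{isAbortable}$ sees some $r_j\neq 0$ or a value change in its read set, again witnessing a concurrent conflicting transaction, or (iii) when $\lit{undo}$ is invoked because a CAS on $v_j$ fails, witnessing a concurrent (necessarily write-conflicting) transaction that updated $v_j$. A fast-path transaction aborts only upon a tracking-set abort, a capacity abort, or reading $r_j\neq 0$; each of the first and third imply a concurrent conflicting transaction, and capacity aborts are explicitly allowed by Definition~\ref{def:prog}. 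Together with the complexity observation and the inspection-level properties above, this establishes the theorem.
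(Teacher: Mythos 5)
Your proposal is correct and follows essentially the same route as the paper: the inspection-level arguments for liveness, invisible reads, uninstrumented writes and the one-metadata-object-per-t-read count match, and your opacity argument uses the same completion/linearization-point/serialization-point construction with the same choice of points (Line~\ref{line:read2} and Line~\ref{line:rel} for slow-path, Lines~\ref{line:lin1} and~\ref{line:lin3} for fast-path) and the same four-way slow/fast case analysis for legality, including the final check that the reader's serialization point precedes the conflicting writer's commit. The progressiveness case enumeration likewise mirrors the paper's.
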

\begin{proof}
\textit{(TM-liveness and TM-progress)}
Since none of the implementations of the t-operations in Algorithm~\ref{alg:inswrite}
contain unbounded loops or waiting statements, Algorithm~\ref{alg:inswrite} provides wait-free TM-liveness
i.e. every t-operation returns a matching response after taking a finite number of steps.

Consider the cases under which a slow-path transaction $T_k$ may be aborted in any execution.
\begin{itemize}
\item
Suppose that there exists a $\Read_k(X_j)$ performed by $T_k$ that returns $A_k$
from Line~\ref{line:abort0}.
Thus, there exists a transaction
that has written $1$ to $r_{j}$ in Line~\ref{line:acq1}, but has not yet written
$0$ to $r_{j}$ in Line~\ref{line:rel1} or
some t-object in $\Rset(T_k)$ has been updated since its t-read by $T_k$.
In both cases, there exists a concurrent transaction performing a 
t-write to some t-object in $\Rset(T_k)$, thus forcing a read-write conflict.
\item
Suppose that $\TryC_k$ performed by $T_k$ that returns $A_k$
from Line~\ref{line:abort2}.
Thus, there exists a transaction
that has written $1$ to $r_{j}$ in Line~\ref{line:acq1}, but has not yet written
$0$ to $r_{j}$ in Line~\ref{line:rel1}. Thus, $T_k$ encounters write-write conflict with another
transaction that concurrently attempts to update a t-object in $\Wset(T_k)$.
\item
Suppose that $\TryC_k$ performed by $T_k$ that returns $A_k$
from Line~\ref{line:abort3}.
Since $T_k$ returns $A_k$ from Line~\ref{line:abort3} for the same reason it
returns $A_k$ after Line~\ref{line:abort0}, the proof follows.
\end{itemize}
Consider the cases under which a fast-path transaction $T_k$ may be aborted in any execution $E$.
\begin{itemize}
\item
Suppose that a $\Read_k(X_m)$ performed by $T_k$ returns $A_k$
from Line~\ref{line:hread}.
Thus, there exists a concurrent slow-path transaction that is pending in its tryCommit and 
has written $1$ to $r_m$, but not released the lock on $X_m$ i.e. $T_k$ conflicts with another transaction in $E$.
\item
Suppose that $T_k$ returns $A_k$
while performing a cached access of some base object $b$ via a trivial (and resp. nontrivial) primitive. 
Indeed, this is possible only if some concurrent transaction writes (and resp. reads or writes) to $b$.
However, two transactions $T_k$ and $T_m$ may contend on $b$ in $E$
only if there exists $X\in\Dset(T_i)\cap\Dset(T_j)$ and $X\in\Wset(T_i)\cup\Wset(T_j)$.
from Line~\ref{line:abort2}.
The same argument applies for the case when $T_k$ returns $A_k$
while performing $\ms{commit-cache}_k$ in $E$.
\end{itemize}
\textit{(Complexity)}
The implementation uses uninstrumented writes since each $\Write_k(X_m)$ simply writes to $v_m \in \mathbb{D}_{X_{m}}$
and does not access any metadata base object.
The complexity of each $\Read_k(X_m)$ is a single access to a metadata base object $r_m$ in Line~\ref{line:hread}
that is not accessed any other transaction $T_i$ unless $X_m \in \Dset(T_i)$.
while the $\TryC_k$ just calls $\ms{cache-commit}_k$ that returns $C_k$.
Thus, each read-only transaction $T_k$ accesses $O(|\Rset(T_k)|)$ distinct metadata base objects in any execution.
\end{proof}
 \section{Opaque HyTM implementation with invisible reads that is progressive only for slow-path transactions}
\label{app:upper2}
\begin{algorithm}[!h]
\caption{Opaque HyTM implementation with progressive slow-path and sequential fast-path TM-progress; code for $T_k$ by process $p_i$}
\label{alg:inswrite2}
\begin{algorithmic}[1]
  	\begin{multicols}{2}
  	{
  	\footnotesize
	\Part{Shared objects}{
		\State $v_j \in \mathbb{D}$, for each t-object $X_j$ 
		\State ~~~~~allows reads, writes and cas
		\State $r_{j} \in \mathbb{M}$, for each t-object $X_j$
		\State ~~~~~allows reads, writes and cas
		\State $\ms{fa}$, fetch-and-add object 
	}\EndPart	
	
	\Statex	
	\textbf{Code for slow-path transactions}
	\Part{\TryC$_k$()}{\quad\Comment{slow-path}
		\If{$\Wset(T_k)= \emptyset$}
			\Return $C_k$  \EndReturn
		\EndIf
				
		\State locked := $\lit{acquire}(\Wset(T_k))$ \label{line:acq2}
		\If{$\neg$ locked} 
	 		\Return $A_k$ \EndReturn
	 	\EndIf
	 	\State $\ms{fa}.\lit{add}(1)$ \label{line:inc}
		\If{$\lit{isAbortable}()$} 
			\State $\lit{release}( \ms{Lset}(T_k))$ 
			\Return $A_k$ \EndReturn
		\EndIf
		\ForAll{$X_j \in \Wset(T_k)$}
	 		 \If{ $v_j.\lit{cas}((ov_j, k_j),(\textit{nv}_j,k))$} 
			      \State $\ms{Oset}(T_k):=\ms{Oset}(T_k)\cup \{X_j\}$
			 \Else
			      \Return $\lit{undo}(\ms{Oset}(T_k))$ \EndReturn
			 \EndIf
			 
	 	\EndFor		
		\State $\lit{release}(\Wset(T_k))$ \label{line:rel2}
   		\Return $C_k$ \EndReturn
   	 }\EndPart		
	 
 	\newpage
 	
	 \Part{Function: $\lit{release}(Q)$}{
  		\ForAll{$X_j \in Q$}	
 			\State $r_j.\Write(0)$
		\EndFor
		\State $\ms{fa}.\lit{add}(-1)$ \label{line:dec}
		\Return \ok \EndReturn
	}\EndPart
	\Statex
	\Statex

	\textbf{Code for fast-path transactions}	
	
	\Part{$\textit{read}_k(X_j)$}{\quad\Comment{fast-path}
		\If{$Rset(T_k) = \emptyset$}
			\State $l \gets \Read(\ms{fa})$ \Comment{cached read} \label{line:fread}
		
			\If{$\ms{l}\neq 0$}
			    \Return $A_k$ \EndReturn
			\EndIf
		\EndIf
		\State $(\textit{ov}_j,k_j) := \Read(v_j)$ \Comment{cached read}
		
		\Return $\textit{ov}_j$ \EndReturn
		
   	 }\EndPart
	\Statex
	\Part{$\textit{write}_k(X_j,v)$}{\quad\Comment{fast-path}
		\State $v_j.\Write(\textit{nv}_j,k)$ \Comment{cached write} 
		\Return $\ok$ \EndReturn
		
   	}\EndPart
	\Statex
	
	\Part{$\textit{tryC}_k$()}{\quad\Comment{fast-path}
		\State $\ms{commit-cache}_i$ \Comment{returns $C_k$ or $A_k$}

   	 }\EndPart		

	}
	\end{multicols}
  \end{algorithmic}
\end{algorithm}

\begin{theorem}[Theorem~\ref{th:inswrite2}]
There exists an opaque HyTM implementation $\mathcal{M}$ that provides invisible reads,
progressiveness for slow-path transactions, sequential TM-progress for fast-path transactions
and wait-free TM-liveness
such that in every execution $E$ of $\mathcal{M}$, every fast-path transaction
accesses at most one metadata base object.
\end{theorem}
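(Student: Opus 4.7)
The plan is to adapt the opacity proof of Lemma~\ref{lm:opacity} to Algorithm~\ref{alg:inswrite2}, observing that the slow-path protocol is essentially unchanged modulo the bracketing $\ms{fa}.\lit{add}(\pm 1)$ events, while the fast-path is weakened so that it probes only $\ms{fa}$. First I would dispatch the easy claims: (i) the complexity bound, since a fast-path transaction touches $\ms{fa}$ only inside the first $\textit{read}_k$ (guarded by $\Rset(T_k)=\emptyset$) and otherwise only cached accesses to data objects $v_j\in\mathbb{D}_{X_j}$; (ii) wait-freedom, since no loop in either code path is unbounded; (iii) invisible reads, since both the slow-path $\Read$ and the fast-path $\textit{read}$ apply only trivial primitives.

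Next I would verify the two progress conditions. Slow-path progressiveness follows exactly as in Theorem~\ref{th:inswrite}: a slow-path $T_k$ aborts only if $\lit{acquire}$ fails on some $r_j$ with $X_j\in\Wset(T_k)$ (a write-write conflict) or $\lit{isAbortable}$ detects a set $r_j\neq 0$ or a changed value for some $X_j\in\Rset(T_k)$ (a read-write conflict); in both cases there is a concurrent conflicting transaction. Note that the new $\ms{fa}.\lit{add}(1)$ on Line~\ref{line:inc} is executed only after $\lit{acquire}$ succeeds and so does not introduce any new abort conditions for slow-path transactions. For sequential TM-progress of a fast-path $T_k$, I would enumerate its abort modes: a capacity abort, a tracking-set invalidation of some cached $v_j$ or of $\ms{fa}$, or the explicit $A_k$ on reading $\ms{fa}\neq 0$. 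Each requires a concurrent step by another transaction, so if $T_k$ runs in the absence of concurrency it commits.

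The main work is opacity. Following the template of Lemma~\ref{lm:opacity}, I would linearize each slow-path t-read at its first $\Read(v_j)$, an updating committing slow-path $\TryC_k$ at the first write inside $\lit{release}$ (Line~\ref{line:rel2}), a fast-path t-read at its cached $\Read(v_j)$, and a fast-path $\TryC_k$ at its $\ms{commit-cache}$. The real-time-order preservation is immediate by placement. To show legality I would reuse the case analysis of Claim~\ref{cl:readfrom} verbatim whenever the three transactions $T_i$ (writer), $T_k$ (intervening writer), $T_j$ (reader) include at most one fast-path participant, the only delicate point being when one of them is fast-path. The key auxiliary invariant to establish is: if a fast-path $T_k$ commits, then no slow-path updating transaction $T_m$ has both $\ms{fa}.\lit{add}(1)$ and $\ms{fa}.\lit{add}(-1)$ events straddling $T_k$'s execution, because (a) if both $\lit{add}$'s precede $T_k$'s read of $\ms{fa}$, $T_k$ reads $0$ and is unaffected, and (b) any $\lit{add}$ executed after $T_k$'s initial read of $\ms{fa}$ invalidates $T_k$'s tracking set (recall $\ms{fa}$ is cached in shared mode) and forces $T_k$ to abort at its $\ms{commit-cache}$. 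Consequently, between $T_k$'s first read of $\ms{fa}$ and its $\ms{commit-cache}$, no slow-path writer is in its critical section, so each $v_j$ cached by $T_k$ holds a value written by a committed slow-path transaction or by a committed fast-path transaction, and the serialization point of $T_k$ at $\ms{commit-cache}$ yields a legal t-sequential history.

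The hard part will be the legality argument in the mixed case where the writer $T_i$ is slow-path and the reader $T_j$ is fast-path: I need to rule out a scenario where $T_j$ reads the new value of $X_m$ from $v_m$ but misses a later committing $T_k$'s overwriting $\lit{cas}$. The invariant above handles this because any such $T_k$'s $\ms{fa}.\lit{add}(1)$ event would either precede $T_j$'s read of $\ms{fa}$ (so $T_j$ reads $\ms{fa}\neq 0$ and aborts, unless $T_k$ has already completed $\ms{fa}.\lit{add}(-1)$, in which case $T_k$'s update to $v_m$ is also already visible and $T_j$ would read the newer value) or fall during $T_j$'s execution (invalidating $T_j$'s tracking set). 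A symmetric argument handles the case where $T_i$ is fast-path and $T_k$ is slow-path, using Observations~\ref{ob:one} and~\ref{ob:two}. Combining these cases with the unchanged slow-path-only case of Claim~\ref{cl:readfrom} yields legality, and hence opacity.
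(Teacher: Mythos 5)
Your proposal is correct and follows essentially the same route as the paper: the paper likewise reduces opacity to the argument of Lemma~\ref{lm:opacity}, handles slow-path progressiveness by the same case enumeration as for Algorithm~\ref{alg:inswrite}, and justifies sequential fast-path progress and the single-metadata-access bound via the $\ms{fa}$ check in the first t-read. Your auxiliary invariant (any $\ms{fa}.\lit{add}$ after the fast-path's cached read of $\ms{fa}$ forces a tracking-set abort) is exactly the observation the paper states informally in Section~\ref{sec:main3}; you simply spell out the details the paper leaves as ``almost identical.''
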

\begin{proof}
The proof of opacity is almost identical to the analogous proof for Algorithm~\ref{alg:inswrite} in Lemma~\ref{lm:opacity}.

As with Algorithm~\ref{alg:inswrite}, enumerating the cases under which a slow-path transaction $T_k$
returns $A_k$ proves that Algorithm~\ref{alg:inswrite2} satisfies progressiveness for slow-path transactions.
Any fast-path transaction $T_k$; $\Rset(T_k) \neq \emptyset$ reads the metadata base object $\ms{fa}$
and adds it to the process's tracking set (Line~\ref{line:fread}).
If the value of $\ms{fa}$ is not $0$, indicating that there exists a concurrent slow-path transaction pending in its
tryCommit, $T_k$ returns $A_k$. Thus, the implementation provides sequential TM-progress for fast-path transactions.

Also, in every execution $E$ of $\mathcal{M}$, no fast-path write-only transaction accesses any metadata base object
and a fast-path reading transaction accesses the metadata base object $\ms{fa}$ exactly once, during the first t-read.
\end{proof}
\end{document}